\newcommand{\nat} {\mathbb{N}} 
\newcommand{\rat} {\mathbb{Q}} 
\newcommand{\alg} {\overline{\mathbb{Q}}}
\newcommand{\Acc} {\operatorname{Acc}}
\title{On Expansions of Monadic Second-Order Logic with Dynamical Predicates} 
\author{Joris {Nieuwveld}}{Max Planck Institute for Software Systems, Saarland Informatics Campus, Germany}{jnieuwve@mpi-sws.org}{https://orcid.org/0009-0002-0339-1230}{}
\author{Jo\"el Ouaknine}%
   {Max Planck Institute for Software Systems, Saarland Informatics
     Campus, Germany}{joel@mpi-sws.org}{https://orcid.org/0000-0003-0031-9356}{Also affiliated with Keble College, Oxford as
\href{https://www.emmy.network/}{emmy.network} Fellow, and supported by ERC grant DynAMiCs (101167561) and DFG grant
389792660 as part of \href{https://perspicuous-computing.science}{TRR 248}.}
\authorrunning{J. Nieuwveld and J. Ouaknine} 
\keywords{Monadic second-order logic, linear recurrence sequences,
  decidability, Baker's theorem} 
\begin{document}

\maketitle

\begin{abstract}
Expansions of the monadic second-order (MSO) theory of the structure
  $\langle \nat ; < \rangle$ have been a fertile and active area of
  research ever since the publication of the seminal papers of B\"uchi
  and Elgot \& Rabin on the subject in the 1960s. In the present paper, we
  establish decidability of the MSO theory of
  $\langle \nat ; <,P \rangle$, where $P$ ranges over a large class of unary
  ``dynamical'' predicates, i.e., sets of non-negative
  values assumed by certain integer linear recurrence sequences. One
  of our key technical tools is the novel concept of \emph{(effective) prodisjunctivity}, which we expect may also find independent applications
  further afield.
\end{abstract}

\section{Introduction}
\label{sec : intro}

The monadic second-order (MSO) theory of the structure
$\langle \nat ; < \rangle$ has been a foundational pillar of the field
of automated verification, and more generally the area of logic in
computer science, for many decades.
Arguably the most important paper on the topic is due to
B\"uchi~\cite{buchi1966symposium}, who in the early 1960s established
decidability of this theory through a deep and fecund connection between logic
and automata theory.

Shortly thereafter, in yet another seminal piece of
work~\cite{elgot1966decidability}, Elgot and Rabin devised the
\emph{contraction method} to establish decidability of expansions of
this base theory by various ``arithmetic'' unary predicates
$P \subseteq \nat$.\footnote{In this paper, we adopt the convention
  that the set $\nat$ of natural numbers contains $0$. We also use the
  adjective ``positive'' with the meaning of ``non-negative''.}  In
particular, they proved decidability of the MSO theory of
$\langle \nat ; <,P \rangle$, where $P$ could for example be taken to
be the set $2^\nat$ of powers of $2$, or the set $\mathsf{Sq}$ of
perfect squares, or the set $\mathsf{Fac}$ of factorial numbers, and
so on.

Much progress followed in the ensuing decades, notably thanks to
Sem\"enov~\cite{semenov1984logical}, who introduced the notion of
\emph{effective almost periodicity}, and Carton
and Thomas~\cite{cartonthomas}, who substantially refined Elgot
and Rabin's contraction method into the notion of 
\emph{effective profinite ultimate periodicity}.
Other notable works in this area include articles by
Rabinovich~\cite{rabinovich}, Rabinovich and
Thomas~\cite{rabinovich2006decidable}, and Berth\'e \emph{et al.}~\cite{berthe2024decidability}.

In the present paper, we significantly extend this line of research by
considering a large class of ``dynamical'' predicates derived from
integer linear recurrence sequences (LRS)\@.\footnote{The ``dynamical''
  moniker was chosen to reflect the close kinship of these predicates
  with discrete-time linear dynamical systems; see, e.g.,
  \cite{Baier0JKLLOPW021,DCostaKMOSW22,KarimovLOPVWW22,KarimovKO022,LucaOW22,KarimovKNO023,AghamovBKOP24}.}  The complexity of an LRS can be
measured in various ways; chief among them are the number of distinct
\emph{dominant characteristic roots} of the LRS, and whether the LRS is \emph{simple} or
not.\footnote{These notions are properly defined in Sec.~\ref{sec : LRS}.}
For $P$ the set of positive values of an LRS having a single
dominant root, such as the set of Fibonacci numbers, the decidability of
the MSO theory of the structure $\langle \nat ; <,P \rangle$ is
readily established via Elgot and Rabin's contraction
method;\footnote{LRS having a single positive dominant root are particularly well
  behaved: they are either constant, or effectively ultimately
  monotonically increasing (or decreasing),
  effectively procyclic, and effectively sparse
  (see~\cite{berthe2024decidability} for precise definitions of these
  notions and a proper account of these facts). In the case of a
  negative dominant root, essentially the same behaviour is observed
  by restricting to the positive terms.} see
also~\cite{berthe2024decidability} in which expansions of $\langle
\nat ; < \rangle$ by adjoining multiple predicates obtained from such
LRS are thoroughly investigated. Unfortunately, virtually nothing is
known when considering LRS having more than a single dominant characteristic root.

Our main contribution is the following result:

\begin{theorem}\label{thm : MSO decidable}
    Let $P = \{u_n \colon n \in \nat\} \cap \nat$ for $\langle u_n
    \rangle_{n=0}^\infty$ a 
    non-degenerate, simple, integer-valued linear recurrence
    sequence with two dominant roots. 
    Then the MSO theory of the structure $\langle \nat ; < , P\rangle$ is decidable.
\end{theorem}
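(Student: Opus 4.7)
The plan is to derive from the closed form of $\langle u_n\rangle$ a sufficiently explicit description of $P$ to feed into the Carton--Thomas/Semenov-style machinery alluded to in the introduction (presumably refined here into the new notion of effective prodisjunctivity announced in the abstract). The starting point is that, being simple and non-degenerate and admitting exactly two dominant roots, $\langle u_n\rangle$ must have those two roots as a complex-conjugate pair $\rho e^{\pm i\theta}$ with $\theta/\pi$ irrational: the alternative real pair $\rho,-\rho$ would yield ratio $-1$, a root of unity, contradicting non-degeneracy. Consequently
\[
u_n \;=\; 2|c|\,\rho^n\cos(n\theta+\phi)\;+\;r_n,\qquad |r_n|\le C\mu^n \text{ with } \mu<\rho,
\]
so for large $n$ the sign of $u_n$ is determined by whether $n\theta+\phi\bmod 2\pi$ avoids an exponentially shrinking neighbourhood of $\pm\pi/2$. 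The positivity set $N=\{n:u_n>0\}$ is therefore Sturmian-like, and $P=\{u_n:n\in N\}$ is an exponentially sparse subset of $\nat$.

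Next I would install effective control by appealing to Baker's theorem on linear forms in logarithms, as signposted by the paper's keywords. This should yield computable lower bounds on $|\cos(n\theta+\phi)|$ in terms of $n$, from which one reads off that the sign of every $u_n$ is effectively computable, that the $k$-th positive value of the sequence grows like $\rho^k$ up to a computable multiplicative error (hence the effective sparsity of $P$), and that the orbit $\{n\theta/(2\pi)\}_{n\ge 0}$ equidistributes on the circle with an effective discrepancy bound. These are exactly the ingredients needed to turn the qualitative picture of $N$ into an algorithmic one.

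The heart of the argument would then be the analysis of $P$ modulo an arbitrary $m\in\nat$. The sequence $\langle u_n\bmod m\rangle$ is ultimately periodic with a computable period $T_m$, so along each residue class $n\equiv r\pmod{T_m}$ the value $u_n\bmod m$ is fixed. Combining this with the effective equidistribution from the preceding step, one can tabulate, for each $r$, both whether infinitely many of those $u_n$ are positive and what their common residue mod $m$ is. This furnishes effective profinite ultimate periodicity --- or, in the refined language of the abstract, effective prodisjunctivity --- of $P$, from which decidability of the MSO theory of $\langle\nat;<,P\rangle$ follows via the general framework inherited from Carton--Thomas and refined by the new notion.

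The main obstacle is the third step. Unlike the single-dominant-root case, where ultimate monotonicity is available and effective procyclicity follows almost mechanically from the comments in the introductory footnote, the presence of two dominant roots makes the sign pattern of $u_n$ genuinely aperiodic. One must therefore reconcile, in a computable and uniform-in-$m$ fashion, the periodic dynamics of $u_n\bmod m$ with the aperiodic circle rotation driven by $\theta$, and rule out conspiracies in which a residue class looks positive only finitely often or has its positivity governed by a subtler modular condition. This is exactly where Baker's theorem (to render equidistribution quantitative and the sign function decidable) and the new notion of prodisjunctivity (to package the resulting conditional modular description of $P$ into a form digestible by the MSO-decidability framework) are expected to do the real work.
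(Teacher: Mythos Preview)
Your outline starts well---the closed form, the complex-conjugate dominant pair, Baker's theorem for effective sign computation and sparsity---but it misidentifies what prodisjunctivity is and thereby misses the actual mechanism of the proof.

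You write that tabulating, for each residue class $r$ mod $T_m$, whether infinitely many $u_n$ with $n\equiv r$ are positive and what their common residue mod $m$ is ``furnishes effective profinite ultimate periodicity---or, in the refined language of the abstract, effective prodisjunctivity''. These are not the same thing, and the paper's introduction explicitly rules out the Carton--Thomas route: because the positive values of $\langle u_n\rangle$ are \emph{not} monotone in $n$ (the paper's example has $P=\{u_0,u_1,u_2,u_4,u_3,u_{10},u_{11},\dots\}$ when listed in increasing order), the sequence $\langle p_m\bmod M\rangle$ obtained by enumerating $P$ in increasing \emph{value} order is not ultimately periodic and is not governed by any procyclic description. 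Prodisjunctivity is the opposite of periodicity: it asserts that \emph{every} finite word over $S_M$ occurs infinitely often as a factor of $\langle p_m\bmod M\rangle$. Decidability then comes not from a periodic description but from the fact (Thm.~4.16 of Berth\'e \emph{et al.}) that acceptance of a disjunctive word by a Muller automaton depends only on the alphabet actually used.

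The technical core you are missing is therefore: given any target pattern $\langle s_1,\dots,s_\ell\rangle\in S_M^\ell$, exhibit indices $n_1,\dots,n_\ell$ with $u_{n_j}\equiv s_j\pmod M$, $0<u_{n_1}<\cdots<u_{n_\ell}$, and \emph{no other positive value of the LRS lying between $u_{n_1}$ and $u_{n_\ell}$}. This last exclusion clause is the whole difficulty, and it is not addressed by equidistribution or discrepancy bounds alone. The paper handles it by a delicate construction of nested intervals $\mathcal{J}_d(\gamma,\delta)\subset\mathbb{R}/2\pi\mathbb{Z}$, showing via Baker that one can choose $b_2,\dots,b_\ell$ and an interval $\mathcal{I}$ so that for suitable $n$ with $n\theta+\phi\in\mathcal{I}$ the values $v_n,v_{n+b_2},\dots,v_{n+b_\ell}$ are consecutive in the required sense. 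Your tabulation step produces $S_M$ and the map $r\mapsto(u_{nT_m+r}\bmod M)$, which the paper also uses (Lem.~8), but that is only the reduction to the statement about $\langle v_n\rangle$; the hard combinatorial-Diophantine argument that realises arbitrary patterns is absent from your sketch.
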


An example of an LRS satisfying the hypotheses of Thm.~\ref{thm :
  MSO decidable} is the sequence $\langle u_n\rangle_{n=0}^\infty$ defined by
\begin{equation}\label{eq : example LRS}
    u_{n+3} = 6u_{n+2} - 13u_{n+1} + 10 u_n
\end{equation}
and $u_0 = 2, u_1 = 4$, and $u_2 = 7$. 
Its subsequent values are:
$10$,
$9$,
 $-6$,
 $-53$,
 $-150$,
 $-271$,
 $-206$,
 $787$,
 $4690$,
 $15849$,
 $41994$,
 $92827$,
 $169530$,
 $230369$,
 $106594$, \dots.
One can readily verify that $\langle u_n \rangle_{n=0}^\infty$ satisfies the formula
\begin{equation}\label{eq : example LRS poly exp form}
    u_n = \frac{1}{2}(2+i)^n + \frac{1}{2}(2-i)^n + 2^n \, ,
\end{equation}
and that $u_n$ is both infinitely often positive and infinitely often negative.
Moreover, although $\lim_{n\to \infty}|u_n| = +\infty$, $|u_n|$ is \emph{not}
monotonically increasing: for all $N \in \nat$ there is an $n \in \nat$ such that
$|u_{n+N}| < |u_n|$.

For our LRS $\langle u_n\rangle_{n=0}^\infty$, let $P$ be as per Thm.~\ref{thm : MSO decidable}. Viewing $P$ as an
ordered set, we have:
\begin{equation*}
    P = \{ u_0, u_1, u_2, u_4, u_3, u_{10}, u_{11}, u_{12}, u_{13},
        u_{14}, u_{17}, u_{15}, u_{16}, u_{24},  u_{25}, u_{26}, u_{27}, u_{28},   u_{30},\dots\}.
\end{equation*}

One immediately notices that there are two complications at play
here. The first one is that we are ``throwing away'' all the negative
values of our LRS (this is of course necessary since we are working over
domain $\nat$ rather than $\mathbb{Z}$). This restriction is, however,
entirely benign in view of the following result:

\begin{corollary}\label{cor : Z-MSO decidable}
    Let $\overline{P} = \{u_n \colon n \in \nat\}$ for $\langle u_n
    \rangle_{n=0}^\infty$ a 
    non-degenerate, simple, integer-valued linear recurrence
    sequence with two dominant roots. 
    Then the MSO theory of the structure $\langle \mathbb{Z} ; 0, < , \overline{P} \rangle$ is decidable.
\end{corollary}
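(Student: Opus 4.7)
\noindent The plan is to decompose $\langle \mathbb{Z}; 0, <, \overline{P}\rangle$ into its negative and non-negative halves, apply Thm.~\ref{thm : MSO decidable} separately to each, and then recombine via the classical MSO composition theorem for ordered sums of chains (due to Shelah, building on L\"auchli; see e.g.~\cite{buchi1966symposium}).

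I first observe that $\langle -u_n\rangle_{n=0}^\infty$ satisfies the hypotheses of Thm.~\ref{thm : MSO decidable} whenever $\langle u_n\rangle_{n=0}^\infty$ does: the two sequences obey exactly the same recurrences and thus share their minimal recurrence, characteristic polynomial, and set of characteristic roots, while negation trivially preserves simplicity, non-degeneracy, integrality, and the number of dominant roots. Setting $P^+ = \overline{P} \cap \nat$ and $P^- = \{-u_n \colon n \in \nat\} \cap \nat$, Thm.~\ref{thm : MSO decidable} applied to $\langle u_n\rangle$ and $\langle -u_n\rangle$ respectively yields decidability of the MSO theories of $\langle \nat; <, P^+\rangle$ and $\langle \nat; <, P^-\rangle$.

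Next I view $\langle \mathbb{Z}; <, \overline{P}\rangle$ as the ordered sum $\mathbb{Z}_{<0} + \mathbb{Z}_{\geq 0}$. The non-negative summand is literally $\langle \nat; <, P^+\rangle$; the negative summand is order-anti-isomorphic, via $n \mapsto -n$, to $\langle \nat^{>0}; <, P^- \cap \nat^{>0}\rangle$. Since dualising the order and shifting $\nat^{>0}$ down to $\nat$ are both MSO-preserving operations (the first is a renaming of the order symbol; the second is an MSO-interpretable bijection to a structure differing from $\langle \nat; <, P^-\rangle$ only by a successor shift of the predicate), the negative summand also has decidable MSO theory. Shelah's composition theorem then yields decidability of the MSO theory of $\langle \mathbb{Z}; <, \overline{P}\rangle$, and the constant $0$ can be added to the signature at no cost since it is MSO-definable as the minimum of the non-negative summand. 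Every step is routine once Thm.~\ref{thm : MSO decidable} is in hand; the only non-mechanical observation required is the invariance of the theorem's hypotheses under negation of the LRS, so I do not foresee a real obstacle here.
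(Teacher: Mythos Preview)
Your proposal is correct and follows essentially the same route as the paper: split $\mathbb{Z}$ at $0$, apply Thm.~\ref{thm : MSO decidable} to each half, and recombine via Shelah's composition theorem for ordered sums. You are in fact more explicit than the paper, which simply invokes \cite[Cor.~6]{Tho97} without spelling out that the negative half is handled by applying Thm.~\ref{thm : MSO decidable} to $\langle -u_n\rangle$ and then reversing the order; your observation that negating an LRS preserves all the hypotheses is exactly the missing step. One small quibble: the citation \cite{buchi1966symposium} is not the right reference for the composition theorem---the paper points to Shelah~\cite{She75} and Thomas~\cite{Tho97} instead.
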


This result is straightforwardly obtained from Thm.~\ref{thm : MSO
  decidable} via an application of Shelah's celebrated ``composition method'' in model
theory~\cite{She75}. In the case at hand, one can directly invoke, for
example, \cite[Cor.~6]{Tho97}, since the structure
$\langle \mathbb{Z} ; 0, < , \overline{P} \rangle$ is isomorphic to
the ordered sum $\langle \mathbb{Z}-\nat ; < , P^{-}\rangle +
\langle \nat ; < , P\rangle$, where the second summand is as per
Thm.~\ref{thm : MSO decidable}, and the first structure is obtained
by setting $P^{-} = \{u_n \colon n \in \nat\} \cap
(\mathbb{Z}-\nat)$. Intuitively speaking, MSO sentences over
$\langle \mathbb{Z} ; < , \overline{P} \rangle$ can be faithfully decomposed into
component subformulas dealing exclusively with either positive or
negative values of our LRS, with the truth value of the original
sentence obtained by appropriately piecing together truth values of each of
the sub-sentences within their respective structures.

The second complication is that, as noted earlier, the ordering of the positive values of
our LRS does not respect the index ordering of the LRS\@. This
ostensibly precludes the direct application of classical techniques such as
Elgot and Rabin's contraction method (or more generally Carton and
Thomas's effective profinite ultimate periodicity criterion), or Sem\"enov's
toolbox of effective almost periodicity.

In order to prove Thm.~\ref{thm : MSO decidable}, we therefore
rely instead on a new concept, that of \emph{(effective) prodisjunctivity}, which we introduce shortly. 
Prodisjunctivity is itself predicated on the notion of disjunctivity,
whose application to the decidability of logical theories
(under the alternative terminology of ``weak normality'') was
pioneered by Berth\'e
\emph{et al.}~\cite{berthe2024decidability}.
In particular,
Berth\'e \emph{et al.}\ study the MSO theory of the structure
$\langle \nat ; <, 2^\nat, \mathsf{Sq} \rangle$, and establish
decidability assuming that the binary expansion of $\sqrt{2}$ is
disjunctive, i.e., contains every finite bit pattern as a factor
infinitely often.

The hypothesis that $\sqrt{2}$ is disjunctive in base $2$ is widely
expected by number theorists to be true, but remains a major unsolved
problem. Irrational algebraic numbers, along with most known
transcendental numbers such as $e$ and $\pi$, are in fact believed to
satisfy a stronger property, that of being \emph{normal} in every
integer base: a real number $\alpha$ is normal in base $b$ provided
that every finite word $w \in \{0,\dots,b-1\}^*$ appears with
frequency $b^{-|w|}$ in the base-$b$ expansion of $\alpha$. And
although Borel~\cite{emile1909probabilites} showed over a century ago
that non-normal real numbers have null Lebesgue measure, establishing
normality (or even disjunctivity) of everyday irrational numbers has
remained fiercely elusive;
see~\cite{Har02,queffelec2006old,Kui12,Bug12} for more detailed
accounts of results, research, and open problems in this area.

Given a finite alphabet $\Sigma$ together with a subset
$S \subseteq \Sigma$, we say that an infinite sequence in
$\Sigma^{\omega}$ is \emph{disjunctive relative to $S$} if every
finite word over $S$ appears infinitely often as a factor in the
sequence. We are now in a position to define (effective)
prodisjunctivity:




\begin{definition}
\label{def : pronormal}
  Let $\langle p_m \rangle_{m=0}^{\infty}$ be an infinite integer-valued
  sequence. For any integer $M \geq 2$, let $S_M$ be the set of
  residue classes modulo $M$ that appear infinitely often in
  $\langle p_m \rangle_{m=0}^{\infty}$:
\begin{equation}\label{eq : definition S}
    S_M = \big\{s \in \{0,\dots, \, M-1\} \colon \exists^\infty m \in \nat
    \colon p_m \equiv s \!\!\! \pmod M \big\} \, .
\end{equation}

We say that the sequence $\langle p_m \rangle_{n=0}^{\infty}$ is
\textbf{prodisjunctive} if, for all $M \geq 2$, the sequence of
residues $\langle p_m\bmod M \rangle_{m=0}^{\infty}$ is
disjunctive relative to $S_M$.

For \textbf{effectiveness}, we require in addition that the sequence
$\langle p_m\bmod M \rangle_{m=0}^{\infty}$
be computable and that, for each $M$, the set $S_M$,
together with the smallest index threshold $N_M$ beyond which
all residue classes of the sequence lie in $S_M$ (i.e., for all
$m \geq N_M$, $(p_m\bmod M) \in S_M$), also be computable.
\end{definition}

Let us now sketch how prodisjunctivity relates to Thm.~\ref{thm : MSO decidable}. 
Recall from \eqref{eq : example LRS} our LRS $\langle u_n\rangle_{n=0}^\infty$, along with its infinite set $P$ of positive values (as per the notation of Thm.~\ref{thm : MSO decidable}),
and let $\langle p_m\rangle_{m=0}^\infty$ be a strictly increasing
enumeration of $P$; in other words, $p_{m-1}$ is the $m$th smallest element in $P$.
We will establish the following instrumental result:

\begin{theorem}\label{thm : intro weak normality}
  Let $\langle p_m\rangle_{m=0}^\infty$ be as above, i.e., the
  strictly increasing sequence of positive terms of some non-degenerate, simple,
integer-valued LRS having two dominant roots. Then
$\langle p_m\rangle_{m=0}^\infty$ is effectively prodisjunctive.
\end{theorem}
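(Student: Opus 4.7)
The plan is to leverage the polynomial-exponential form of the LRS. Writing $u_n = 2|\alpha|\rho^n \cos(n\theta + \phi) + \varepsilon_n$, with $\rho e^{\pm i\theta}$ the two dominant roots and $|\varepsilon_n| = O(\mu^n)$ for some $\mu < \rho$, non-degeneracy forces $\theta/\pi \notin \rat$. For the effectivity requirements, note that $\langle u_n \bmod M\rangle_{n=0}^{\infty}$ is eventually periodic with some period $\pi_M$ computable from the companion matrix modulo $M$; let $r : \mathbb{Z}/\pi_M\mathbb{Z} \to \mathbb{Z}/M\mathbb{Z}$ denote the corresponding residue map, valid past a computable threshold $N_0$. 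Weyl's equidistribution theorem, applied to $\langle k\pi_M\theta \bmod 2\pi\rangle_k$ (equidistributed because $\theta/\pi$ is irrational), shows that within every arithmetic progression $n \equiv t \pmod{\pi_M}$ there are infinitely many positive $u_n$. Consequently $S_M = \{r(t) : t \in \mathbb{Z}/\pi_M\mathbb{Z}\}$, and an acceptable threshold $N_M$ can be read off from the onset of the periodic regime.

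The substantive step is the disjunctivity itself. Fix a pattern $w = s_1 \cdots s_k \in S_M^k$, choose representatives $t_1, \ldots, t_k \in \mathbb{Z}/\pi_M\mathbb{Z}$ with $r(t_i) = s_i$, and hunt for infinitely many $k$-tuples of positive indices $n^{(1)}, \ldots, n^{(k)}$ that are consecutive in the sort-by-value enumeration $\langle p_m\rangle_{m=0}^{\infty}$ and satisfy $n^{(i)} \equiv t_i \pmod{\pi_M}$. The principal enabling tool is the joint equidistribution of $\langle (n\theta \bmod 2\pi,\, n \bmod \pi_M) \rangle_n$ on $\mathbb{T} \times \mathbb{Z}/\pi_M\mathbb{Z}$, which follows from Weyl combined with the irrationality of $\pi_M\theta/\pi$; this lets one jointly prescribe residues mod $\pi_M$ and target cosine values. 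Baker's theorem on linear forms in logarithms then converts qualitative smallness statements about $|\cos(n\theta + \phi)|$ into effective lower bounds, supporting the algorithmic side.

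The principal obstacle is arranging the selected $k$-tuple to be genuinely sort-by-value consecutive. Deep inside a long positive burst, the sort order coincides with the index order, because $u_{n+1}/u_n \approx \rho \cdot \cos((n+1)\theta + \phi)/\cos(n\theta + \phi)$ stays uniformly above $1$ when both cosines are bounded away from zero; consequently, any $k$ deep-in-burst indices contribute only the restricted residue pattern $(t, t+1, \ldots, t+k-1) \bmod \pi_M$, which is insufficient for generic targets. The required flexibility must come from (i) local sort-flips near burst boundaries, where a cosine value sufficiently close to $0$ lets an index overtake its neighbour in value, and (ii) inter-burst interleaving, where anomalously small values from distinct bursts compete in a common value range. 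My plan is to use joint equidistribution to exhibit configurations in which the cosine values at the $n^{(i)}$ --- along with those at a controlled neighbourhood of potential competitors --- are simultaneously calibrated so that the induced sort order isolates $u_{n^{(1)}}, \ldots, u_{n^{(k)}}$ as a consecutive block of $\langle p_m\rangle_{m=0}^{\infty}$. Establishing that every desired residue assignment admits such a calibration, and that the corresponding configurations provably and effectively occur (via Baker-type bounds), constitutes the technical heart of the argument.
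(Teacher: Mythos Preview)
Your high-level picture is accurate, but the proposal has a genuine gap at precisely the point you flag as ``the technical heart''. You want to force $u_{n^{(1)}}, \ldots, u_{n^{(k)}}$ to be \emph{consecutive} in the value ordering, which means excluding \emph{every} other index $m$ from landing in the value window $[u_{n^{(1)}}, u_{n^{(k)}}]$. That is an infinite family of exclusion constraints, and Weyl equidistribution cannot discharge it: equidistribution tells you the orbit $\langle n\theta\rangle$ visits every open set with the right frequency, but says nothing about avoiding an infinite union of ``bad'' sets whose total measure you have not bounded. Your plan to ``calibrate'' finitely many competitors in a ``controlled neighbourhood'' does not address the tail.

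The paper closes this gap with a concrete measure-theoretic construction you are missing. For each offset $d$, the set of $x$ where $\cos(x) < \cos(x+d\theta)|\lambda|^d < \delta\cos(x)$ is a single interval $\mathcal{J}_d$ of length $O(|\lambda|^{-d})$; the tail $\sum_{d\ge D}|\mathcal{J}_d|$ is therefore geometrically small. One then builds a nested sequence of intervals $\mathcal{I}_1 \supset \mathcal{I}_2 \supset \cdots$, at each stage choosing $b_j$ (in the required residue class) so that $\mathcal{J}_{b_j}$ sits inside the current interval, and---crucially---using Baker's theorem to show that no \emph{moderate} interloper $d$ can have $\mathcal{J}_d$ meeting $\mathcal{J}_{b_j}$, because that would force $|(b_j-d)\theta|_{2\pi}$ to be smaller than Baker allows. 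Baker is thus doing essential structural work in the existence argument itself, not merely ``supporting the algorithmic side'' as you suggest. Only after this interval machinery guarantees a positive-measure good set does one invoke density of the arithmetic progression $n\theta+\phi$ to land inside it; equidistribution alone, without the $\mathcal{J}_d$ framework and the Baker-driven separation of offsets, will not get you there.
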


Let us return to our example and set $M = 5$. 
Then one easily shows that
\begin{equation}\label{eq : example intro modular behavior}
    u_n\bmod 5 = \begin{cases}
        0 & \text{if } n\equiv 3 \pmod 4 \\
        2 & \text{if } n = 0, \text{ or if } n \equiv 2 \pmod 4  \\
        4 & \text{otherwise} \, .
    \end{cases}
\end{equation}
It follows that $S_5 = \{0,2,4\}$, $N_5 = 0$, and
\begin{equation*}
  \langle p_m\bmod 5\rangle_{m=0}^\infty =
    \langle 2, 4, 2, 4, 0, 2, 0, 4, 4, 2, 4, 0, 4, 4, 4,2,
    0, 4, 2, 4, 2, 0, 4, 4, 4, 2, 0, 0, 4, 4, 2, \dots\rangle
\end{equation*}
is in $\{0,2,4\}^\omega$.

Computing the first million terms of $\langle p_m\bmod 5\rangle_{m=0}^\infty$, one does not encounter the factor $\langle 0,0,0 \rangle$ once within that initial segment. 
Nevertheless, according to Thm.~\ref{thm : intro weak normality}, it should appear infinitely often! 
Indeed, we prove this in Sec.~\ref{sec : example}, and moreover
construct an index of roughly $2.18 \cdot 10^{59}$ where this
occurs. A more involved calculation (not shown here) can however establish that the very first
occurrence of $\langle 0,0,0 \rangle$ within $\langle p_m \bmod
5\rangle_{m=0}^\infty$ is at index $8479226$.

The above discussion suggests that, whilst the sequence
$\langle p_m\rangle_{m=0}^\infty$ is provably disjunctive, it is
by all accounts not normal, i.e., a given factor $w \in \{0,2,4\}^*$ does
not necessarily appear with frequency $3^{-|w|}$.  This is however
unsurprising in view of~\eqref{eq : example intro modular behavior}:
the residue class $4$ should statistically appear approximately twice
as often as either of the other two residue classes, and indeed it is
possible to prove that this is asymptotically the case.

Theorem~\ref{thm : intro weak normality} is the key technical device
underpinning our main result, Thm.~\ref{thm : MSO
  decidable}. One of the critical mathematical ingredients entering its
proof is Baker's theorem on linear forms in logarithms of algebraic
numbers. We also make use of various automata-theoretic, topological,
and combinatorial constructions and tools.

Let us now briefly comment on the motivation and scope of
Thm.~\ref{thm : MSO decidable}. As noted earlier, linear recurrence
sequences and linear dynamical systems are intimately related; in
particular, decision procedures for the former often underpin
verification algorithms for the latter. Write $P_{\boldsymbol{u}}$ to
denote the set of positive values of a given integer LRS $\boldsymbol{u}$.
We already pointed out that Elgot and Rabin's contraction method
enables one to establish decidability of the MSO theory of
$\langle \nat ; <,  P_{\boldsymbol{u}}\rangle$ provided that
$\boldsymbol{u}$ has a single dominant root, and that
\cite[Thm.~5.1]{berthe2024decidability} provides general conditions
under which the MSO theory of
$\langle \nat ; <,  P_{\boldsymbol{u}^{(1)}},\ldots, P_{\boldsymbol{u}^{(d)}}\rangle$
is decidable, once again under the assumption that each LRS
$\boldsymbol{u}^{(i)}$ possesses a single dominant root.
On the other hand, the situation concerning predicates derived from LRS having more than
one dominant  root is entirely uncharted. The present paper can thus
be seen as taking the first significant step (in some six decades!\@)
in this direction.

Nevertheless, the hypotheses of Thm.~\ref{thm : MSO decidable} (the
LRS must be non-degenerate, simple, and have exactly two dominant
roots) may at first glance appear quite restrictive. It is worth noting, however, that
if one were to pick LRS ``at random'', then under most ``reasonable''
probability distributions, with probability tending to $1$ such LRS
would have either a single dominant root, or be non-degenerate,
simple, and have exactly two dominant roots. In other words, the vast
majority of LRS are of one of these two types, and thus fall under the
scope of the Elgot-Rabin contraction method or our own Thm.~\ref{thm
  : MSO decidable}. (This folklore result follows from the fact that a
random polynomial with real coefficients will almost surely have
either a single dominant real root, or two dominant complex conjugate
roots; and that non-degeneracy and simplicity are null-measure
conditions. For a more in-depth discussion of these matters, see the work of Dubickas and
Sha~\cite{dubickas2016positive} and citations therein.) In any event,
we briefly return in Sec.~\ref{sec : conclusion} to the question of whether the hypotheses
of Thm.~\ref{thm : MSO decidable} could be weakened in any way.

\section{Preliminaries}
\subsection{Automata Theory}
An \emph{alphabet} $\Sigma$ is a finite, non-empty set of letters and
$\Sigma^*$ and $\Sigma^\omega$ denote respectively
the sets of finite and infinite words over $\Sigma$. 
As words can be viewed as sequences, we freely identify the finite
word $w_0w_1 \cdots w_k$ with the finite sequence $\langle
w_0,w_1,\dots,w_k\rangle$, and the infinite word $w_0w_1 \cdots $ with
the infinite sequence $\langle w_0,w_1,\dots \rangle$, and conversely.
A finite word $w' = w'_0w'_1 \cdots w'_k$ is a \emph{factor} of an
(in)finite word $w$ if there is an index $n$ such that $\langle
w_n,\dots,w_{n+k} \rangle = \langle w'_0, \dots, w'_k \rangle$.
We let $|w|$ denote the length a finite word $w$. 

An infinite word $w$ is \emph{recursive} (or \emph{computable})
if one can compute $w_j$ for every $j \ge 0$, and
is \emph{disjunctive} if each $w' \in {\Sigma}^*$ appears infinitely often as a factor of $w$.
Other authors occasionally use the terminology \emph{weakly normal} or \emph{rich} instead of disjunctive.
An infinite word $w$ is \emph{periodic} if $w = v^\omega$ for some $v \in \Sigma^+$ and \emph{ultimately periodic} if $w = u v^\omega$ for some $u \in \Sigma^*$ and $v \in \Sigma^+$. 
Here, for the smallest possible such $u$ and $v$, $|u|$ and $|v|$ are
referred to as the \emph{preperiod} and \emph{period}, respectively.

A \emph{deterministic Muller automaton}
$\mathcal{A} = (\Sigma, Q, q_\text{init}, \delta, \mathcal{F})$
consists of an alphabet $\Sigma$, a finite set of states $Q$, an
initial state $q_\text{init} \in Q$, a transition function
$\delta \colon Q \times \Sigma \to Q$, and an accepting family of sets
$\mathcal{F} \subseteq 2^Q$.  The \emph{run} of
$w\in \Sigma^* \cup \Sigma^\omega$ on $\mathcal{A}$ is the sequence of
states visited while reading $w$ starting in $q_\text{init}$ and
repeatedly updating the state using the transition function.  We say
that $\mathcal{A}$ \emph{accepts} $w \in \Sigma^\omega$ if the set of
states visited infinitely often upon reading $w$ belongs to
$\mathcal{F}$.  The \emph{acceptance problem} of a recursive word
$w \in \Sigma^\omega$ is the question of determining, given a
deterministic Muller automaton $\mathcal{A}$ with alphabet $\Sigma$,
whether $\mathcal{A}$ accepts $w$.  We denote the acceptance
problem by $\Acc_w$.

Berth\'e \emph{et al.}\ established the following~\cite[Thm.~4.16]{berthe2024decidability}:
\begin{theorem}\label{thm : weakly normal decidable}
    If $w \in \Sigma^\omega$ is recursive and disjunctive, then $\Acc_w$ is decidable.
\end{theorem}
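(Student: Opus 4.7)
The strategy is to show that $\operatorname{Inf}(\rho)$---the set of states visited infinitely often by the run $\rho$ of $\mathcal{A}$ on $w$---coincides with a bottom strongly connected component (BSCC) of the transition graph of $\mathcal{A}$, and that this BSCC can be identified by a finite simulation using the recursiveness of $w$. Since the SCCs of $\mathcal{A}$ form a DAG and $\rho$ cannot re-enter an SCC it has left, the sequence of SCCs visited by $\rho$ stabilises in a unique terminal SCC $S^{\dagger}$; the task thus reduces to showing that $S^{\dagger}$ is necessarily a BSCC and that every state of $S^{\dagger}$ is visited infinitely often by $\rho$.

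The central technical ingredient is a saturation lemma: for any complete strongly connected DFA $\mathcal{B}$ and any state $q$, there is a finite word $v_q$ such that, regardless of the starting state, the run of $\mathcal{B}$ on $v_q$ visits $q$ at some intermediate position. One proves this greedily: maintain a set $C$ of already-``covered'' starting states (those whose partial run on the current $v_q$ passes through $q$), initially $\{q\}$; while $C$ omits some $q''$, compute $r = \delta(q'', v_q)$ and append to $v_q$ a path, which exists by strong connectivity, from $r$ to $q$. This adds $q''$ to $C$ without displacing any previously covered state, so the construction terminates in at most $|Q_{\mathcal{B}}|$ rounds.

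Applying this idea twice yields both required facts. First, suppose for contradiction that $S^{\dagger}$ is not a BSCC: pick any exit edge $(q^*, \sigma^*)$ out of $S^{\dagger}$ and run the same iterative scheme with ``trigger $(q^*, \sigma^*)$'' replacing ``visit $q$'', producing a word $V$ such that reading $V$ from any state of $S^{\dagger}$ necessarily forces an exit from $S^{\dagger}$ (coverage is preserved from round to round precisely because once a run leaves an SCC it cannot return). Since $w$ is disjunctive, $V$ appears as a factor of $w$ at infinitely many positions, and in particular at some position after $\rho$ has entered $S^{\dagger}$---contradicting stabilisation. Hence $S^{\dagger}$ is a BSCC. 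Second, applying the saturation lemma inside $S^{\dagger}$ for each $q \in S^{\dagger}$, the word $v_q$ is a factor of $w$ at infinitely many positions, each of which forces $\rho$ to pass through $q$; thus $\operatorname{Inf}(\rho) = S^{\dagger}$.

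The decision procedure is then immediate: precompute the BSCCs of $\mathcal{A}$ using a standard SCC algorithm, simulate $\rho$ one step at a time using the recursiveness of $w$, and halt at the first $n$ for which $\rho_n$ lies in some BSCC $C$; accept iff $C \in \mathcal{F}$. Termination is guaranteed by the argument of the previous paragraph. The main obstacle is setting up the saturation construction correctly---in particular its exit-word variant---where one must invoke the non-reentry property of SCCs to ensure that each iterative extension preserves the coverage achieved in earlier rounds.
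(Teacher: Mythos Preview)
The paper does not prove this theorem; it merely quotes it as \cite[Thm.~4.16]{berthe2024decidability}. So there is no in-paper proof to compare against, and your proposal stands on its own.

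Your argument is correct and is essentially the standard one. The saturation lemma is the right tool: given a strongly connected complete DFA, one can build a single word that forces a visit to a designated state from \emph{every} starting state, and the greedy construction you describe does exactly this. The exit-word variant is also set up correctly; the key point you flag---that a run which has left an SCC can never re-enter it, so appending further letters cannot ``uncover'' a previously covered starting state---is precisely what makes the induction go through. From these two facts it follows that $\operatorname{Inf}(\rho)$ is a BSCC and equals that BSCC in full, whence the decision procedure (simulate until a BSCC is entered, then test membership in $\mathcal{F}$) is sound and terminating.

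Two small remarks. First, in your saturation lemma the phrase ``at some intermediate position'' is slightly off for the base case (from $q$ itself the visit occurs at position $0$); ``at some position along the run'' is what you use and need. Second, when you apply the lemma inside $S^{\dagger}$ you are implicitly using that $S^{\dagger}$, being a BSCC, is closed under \emph{all} transitions, so its restriction really is a complete DFA over the full alphabet $\Sigma$; this is worth saying explicitly, since for a non-bottom SCC the restriction need not be complete and the lemma would not apply.
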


A \emph{deterministic finite transducer} $\mathcal{B} =
(\Sigma_{\text{in}}, \Sigma_{\text{out}}, Q, q_\text{init}, \delta)$ is
given by an input alphabet $\Sigma_\text{in}$, output alphabet
$\Sigma_\text{out}$, set of states $Q$, initial state $q_\text{init}
\in Q$, and transition function $\delta \colon Q \times \Sigma_\text{in}
\to Q \times \Sigma_\text{out}^*$.
The transducer $\mathcal{B}$ starts in state $q_\text{init}$.
It reads a word $w \in \Sigma_\text{in}^* \cup
\Sigma_\text{in}^\omega$ and upon reading letter $a$ whilst in state
$q$, it computes $(q', w') = \delta(q, a)$, moves to state $q'$, and
concatenates $w'$ to the output string.
Write $\mathcal{B}(w) \in \Sigma_\text{out}^* \cup
\Sigma_\text{out}^\omega$ to denote the output word thus computed upon
reading $w \in \Sigma_\text{in}^* \cup
\Sigma_\text{in}^\omega$.

We now recall a Turing-reducibility property between
word acceptance problems~\cite[Lem.~4.5]{berthe2024decidability}:
\begin{lemma}\label{thm : transducer reduction}
Let $w \in \Sigma^\omega$ and $\mathcal{B}$ be a deterministic finite transducer. Then the problem $\Acc_{\mathcal{B}(w)}$ reduces to $\Acc_{w}$.
\end{lemma}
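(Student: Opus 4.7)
The plan is to reduce $\Acc_{\mathcal{B}(w)}$ to $\Acc_w$ by taking a product of the transducer $\mathcal{B}$ with any given deterministic Muller automaton $\mathcal{A}' = (\Sigma_{\text{out}}, Q', q'_0, \delta', \mathcal{F}')$ on the output alphabet, producing a deterministic Muller automaton $\mathcal{A}$ over $\Sigma_{\text{in}}$ such that $\mathcal{A}$ accepts $w$ iff $\mathcal{A}'$ accepts $\mathcal{B}(w)$. Since this transformation is effective and the input word $w$ is passed through unchanged, this is precisely the desired (in fact many-one) reduction.

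The mild subtlety to address is that on a single input letter, $\mathcal{B}$ may emit an output word $v \in \Sigma_{\text{out}}^{*}$ of arbitrary length (possibly $0$, or $\geq 2$), during which $\mathcal{A}'$ traverses a block of states rather than making a single transition. To handle this, we enrich the natural product state space with a finite ``visited-set'' buffer: take
\begin{equation*}
\mathcal{A} = \bigl(\Sigma_{\text{in}},\; Q_{\mathcal{B}} \times Q' \times 2^{Q'},\; (q_{\text{init}}^{\mathcal{B}}, q'_0, \{q'_0\}),\; \delta,\; \mathcal{F}\bigr),
\end{equation*}
with $\delta((p, q, S), a) = (p', q'', T)$, where $(p', v) = \delta_{\mathcal{B}}(p, a)$, $q''$ is obtained by iterating $\delta'$ on $v$ starting at $q$, and $T$ is the set of $\mathcal{A}'$-states encountered along that iteration (both endpoints included). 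The third coordinate thus records, step by step, precisely the $Q'$-states traversed while $\mathcal{A}'$ reads the output block emitted by the last input letter. We then declare $F \in \mathcal{F}$ iff $\bigcup_{(p, q, S) \in F} S \in \mathcal{F}'$.

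Correctness reduces to the following identity: letting $G$ denote the set of $\mathcal{A}$-states visited infinitely often on $w$, the set of $\mathcal{A}'$-states visited infinitely often on $\mathcal{B}(w)$ equals $\bigcup_{(p,q,S) \in G} S$. One direction is immediate, since each element of the third coordinate sits inside the corresponding block of the $\mathcal{A}'$-run. For the converse, any $\mathcal{A}'$-state visited infinitely often lies in the buffer $S_j$ for infinitely many $j$, and by pigeonhole over the finite set $Q_{\mathcal{B}} \times Q' \times 2^{Q'}$ some $\mathcal{A}$-state carrying it in its third component must recur infinitely often. Combined with the definition of $\mathcal{F}$, this yields the equivalence of acceptance. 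The only genuinely technical point is this variable-length-output bookkeeping, which is what forces the $2^{Q'}$ factor in the state space; everything else is a routine product construction. We also implicitly assume throughout that $\mathcal{B}(w) \in \Sigma_{\text{out}}^{\omega}$, so that the target acceptance problem is indeed posed over an $\omega$-word.
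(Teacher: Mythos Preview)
Your construction is correct. The paper does not actually prove this lemma; it merely cites it as \cite[Lem.~4.5]{berthe2024decidability}, so there is no in-paper argument to compare against. Your product-with-buffer approach is the standard way to establish the result: the $2^{Q'}$ component is exactly what is needed to track the block of $\mathcal{A}'$-states traversed per input letter when $\mathcal{B}$ emits variable-length (possibly empty) output, and your pigeonhole argument cleanly matches the infinitely-often set on the $\mathcal{A}$ side with that on the $\mathcal{A}'$ side. The caveat that $\mathcal{B}(w)$ must be an $\omega$-word is appropriate and is how the lemma is invoked in the paper (the transducers used there always produce infinite output). As a minor remark, what you have built is even a many-one reduction, which is stronger than the Turing reduction the paper's statement asserts.
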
 

\subsection{Monadic Second-Order Logic}

A \emph{(unary or monadic) predicate} $P$ is a function
$P \colon \nat \to \{0, 1\}$, which equivalently we can interpret as a
subset $P \subseteq \nat$.  For $P \subseteq \nat$ an infinite
predicate, let us write $\langle p_m\rangle_{m=0}^\infty$ to denote
the enumeration of $P$ in increasing order, i.e., such that $p_{m-1}$ is the $m$th smallest element of $P$.
The \emph{characteristic word} $w \in \{0, 1\}^\omega$ of $P$ is
obtained by setting $w_n = P(n)$.  We then have:
\begin{equation}\label{eq : characteristic word P}
    w = 0^{p_0}10^{p_1 - p_0-1}10^{p_2 - p_1-1}10^{p_3 - p_2-1} \cdots \, .
\end{equation}
A predicate $P$ is \emph{sparse} if for every $N \ge 0$, there is $M \ge 0$ such that $p_{m+1} - p_m \ge N$ for all $m \ge M$.
The predicate is \emph{effectively sparse} if $M$ can always be
computed from $N$.

\emph{Monadic second-order logic (MSO)} is an extension of first-order logic
over signature $\{=, <, \in\}$
that allows quantification over subsets of the universe $\nat$.
We also consider expansions of MSO by various fixed unary predicates
$P \subseteq \nat$; in other words (abusing notation), the signature is expanded by a predicate
symbol $P$, with interpretation the given subset $P \subseteq \nat$.
We refer the reader to~\cite{mso-notes} for a thorough contemporary overview of MSO\@.

The deep connection between MSO and automata theory was brought to
light in the seminal work of B\"uchi; see, for example, \cite[Thms.~5.4 and 5.9]{thomas1997languages}.

\begin{theorem}\label{thm : Buchi MSO char word}
    The decidability of the MSO theory of the structure $\langle \nat ; < , P \rangle$ is Turing equivalent to $\Acc_w$, where $w$ is the characteristic word of $P$.
\end{theorem}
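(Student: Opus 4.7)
The plan is to obtain both Turing reductions as effective consequences of Büchi's correspondence between MSO definability over $\langle \nat; <\rangle$ and $\omega$-regular languages, taking care that each translation is computable from its input. Throughout, $w \in \{0,1\}^\omega$ is the characteristic word of $P$, so the relevant Muller automata all have alphabet $\{0,1\}$.

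For the direction $\Acc_w \leq_T \mathrm{MSO}(\langle \nat; <, P\rangle)$, I would take a deterministic Muller automaton $\mathcal{A} = (\{0,1\}, Q, q_{\text{init}}, \delta, \mathcal{F})$ with $Q = \{q_1,\dots,q_k\}$ and construct an MSO sentence $\varphi_\mathcal{A}$ over the signature $\{<, P\}$ that formalises the (unique) run of $\mathcal{A}$ on $w$. The sentence existentially quantifies over sets $X_1,\dots,X_k \subseteq \nat$ intended to mark the positions at which the automaton occupies each state, and asserts: (i) $X_1,\dots,X_k$ partition $\nat$; (ii) $0 \in X_{j_0}$ where $q_{j_0} = q_{\text{init}}$; (iii) for every $n$, if $n \in X_i$ then $n+1 \in X_{i'}$ where $q_{i'} = \delta(q_i, 1)$ if $P(n)$ and $q_{i'} = \delta(q_i, 0)$ otherwise (note that the successor relation is MSO-definable from $<$); and (iv) the set $\{q_i : X_i \text{ is infinite}\}$ belongs to $\mathcal{F}$, encoded as a finite disjunction over $F \in \mathcal{F}$, using the MSO-expressible statement that $X_i$ is infinite. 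Then $\langle \nat; <, P\rangle \models \varphi_\mathcal{A}$ iff $\mathcal{A}$ accepts $w$, and the map $\mathcal{A} \mapsto \varphi_\mathcal{A}$ is plainly computable.

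For the reverse direction $\mathrm{MSO}(\langle \nat; <, P\rangle) \leq_T \Acc_w$, I would view an MSO sentence $\varphi$ over $\{<, P\}$ as an MSO formula over the bare signature $\{<\}$ in which $P$ is treated as a free monadic second-order variable. By Büchi's theorem (effective translation of MSO into nondeterministic Büchi automata), together with effective determinisation to a Muller acceptance condition (e.g.\ via McNaughton or Safra), one computes a deterministic Muller automaton $\mathcal{A}_\varphi$ over $\{0,1\}$ whose language is exactly the set of those $v \in \{0,1\}^\omega$ such that $\varphi$ holds when $P$ is interpreted as $\{n : v_n = 1\}$. Consequently $\langle \nat; <, P\rangle \models \varphi$ iff $\mathcal{A}_\varphi$ accepts the characteristic word $w$, and a decision procedure for $\Acc_w$ settles this instance.

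There is no real obstacle beyond verifying that every step — the MSO-to-automaton translation, the determinisation, and the elementary automaton-to-MSO encoding — is effective, which is classical. In particular, nothing in the argument uses any property of $P$ other than its being a fixed subset of $\nat$, so the Turing equivalence is unconditional and the two reductions assemble into the claimed statement.
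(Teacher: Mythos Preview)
Your proposal is correct and is precisely the standard B\"uchi-style argument. Note, however, that the paper does not actually prove this theorem: it states it as a classical fact and refers the reader to \cite[Thms.~5.4 and 5.9]{thomas1997languages}. Your sketch thus supplies the proof that the paper simply cites, and it matches the textbook treatment referenced there.
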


As noted earlier, Elgot and Rabin devised the \emph{contraction
  method} to establish decidability of the MSO theory of
$\langle \nat ; < , P \rangle$, for various ``arithmetic'' predicates
$P$. The following proposition is a variation on their method:

\begin{proposition}\label{prop : reduction one unary predicate} 
Let $P \subseteq \nat$ be an infinite, recursive, and effectively
sparse predicate with enumeration $\langle p_m\rangle_{m=0}^\infty$.
If, for each $M \ge 2$, $\Acc_{\langle p_m\bmod M\rangle_{m=0}^\infty}$ is decidable, the MSO theory of the structure $\langle \nat; <, P \rangle$ is decidable.
\end{proposition}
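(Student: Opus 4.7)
By Thm~\ref{thm : Buchi MSO char word}, it suffices to decide $\Acc_w$ for the characteristic word $w$ of $P$. Fix a deterministic Muller automaton $\mathcal{A} = (\{0,1\}, Q, q_\text{init}, \delta, \mathcal{F})$. The strategy is an Elgot--Rabin-style contraction: I will construct a Muller automaton $\mathcal{A}'$ over alphabet $\{0,\ldots,T-1\}$, for a modulus $T$ computable from $\mathcal{A}$, whose acceptance of $\langle p_m \bmod T\rangle_{m=0}^{\infty}$ is equivalent to $\mathcal{A}$'s acceptance of $w$. The claim will then follow by instantiating the hypothesis at $M = T$.

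To choose $T$, analyze the self-map $q \mapsto \delta(q,0)$ of $Q$. Setting $N_0 = |Q|$ and $T = |Q|!$ guarantees $\delta(q, 0^{n+T}) = \delta(q, 0^n)$ for all $q \in Q$ and $n \geq N_0$, and moreover makes $T$ a common multiple of every cycle length of this self-map. Using effective sparsity, compute $M_0$ such that $p_{m+1} - p_m - 1 \geq N_0 + T$ for all $m \geq M_0$; by recursivity of $P$, simulate $\mathcal{A}$ on the prefix $0^{p_0} 1 0^{p_1 - p_0 - 1} 1 \cdots 10^{p_{M_0} - p_{M_0-1}-1}1$ explicitly and obtain the resulting state $q^* \in Q$. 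For every $m \geq M_0$ and every starting state $q$, the 0-block of length $\ell_m = p_{m+1}-p_m-1$ then has two key properties: its end state $\delta(q, 0^{\ell_m})$ depends only on $q$ and $\ell_m \bmod T$, and the set of intermediate states it traverses is exactly the full orbit $O_q = \{\delta(q, 0^i) : i \geq 0\}$.

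Now construct $\mathcal{A}'$ with macro-state space $Q \times \{0, \ldots, T-1\}$, augmented by a finite initial counter that skips the first $M_0 + 1$ input symbols and loads the macro-state $(q^*, p_{M_0} \bmod T)$. In the simulation phase, on reading $s$ in macro-state $(q, r)$, compute the gap residue $d = (s - r - 1) \bmod T$, look up the block end state via the precomputed shortcut, apply $\delta(\cdot, 1)$, and transition to $(q', s)$. Define $\mathcal{F}'$ so that a set $S'$ of macro-states lies in $\mathcal{F}'$ iff $\bigcup_{(q, r) \in S'} O_q \in \mathcal{F}$; the transient counter states, visited only finitely often, may be safely excluded from this characterization. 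A routine bookkeeping verification then shows that $\mathcal{A}'$ accepts $\langle p_m \bmod T\rangle_{m=0}^{\infty}$ precisely when $\mathcal{A}$ accepts $w$, and invoking the hypothesis at $M = T$ completes the argument.

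The main technical delicacy is the formulation of $\mathcal{F}'$: one must argue that the states of $\mathcal{A}$ visited infinitely often along its run on $w$ are correctly recovered as $\bigcup_{(q,r) \in S'} O_q$ from the macro-states $S'$ visited infinitely often by $\mathcal{A}'$. This rests on the identification of $O_q$ as the exact set of intermediate states visited during any sufficiently long 0-block starting at $q$, which in turn is precisely what the choice of $T$ as a common multiple of all cycle lengths of $\delta(\cdot, 0)$ is engineered to guarantee.
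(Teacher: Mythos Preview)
Your proof is correct and follows the same Elgot--Rabin contraction idea as the paper: exploit the eventual periodicity of the $0$-map $q \mapsto \delta(q,0)$ to collapse each long $0$-block to data determined by its length modulo a fixed period, then reduce to the acceptance problem for the residue sequence.

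The packaging differs slightly. The paper takes the least common multiple $M$ of the $0$-cycle lengths and builds a \emph{transducer} that maps $\langle p_m \bmod M\rangle$ to a compressed $\{0,1\}$-word (each $0$-block replaced by one of bounded length in the correct residue class), then invokes the transducer-reduction lemma (Lem.~\ref{thm : transducer reduction}) to conclude. You instead take $T=|Q|!$ and build the target Muller automaton over $\{0,\ldots,T-1\}$ directly, carrying the pair $(q_m,\,p_m\bmod T)$ and encoding the Muller condition via the orbits $O_q$. The paper's route is more modular (it reuses an existing black box and avoids spelling out the acceptance family), while yours is self-contained but requires the bookkeeping you flag at the end---namely, that the infinitely visited states of $\mathcal{A}$ on $w$ coincide with $\bigcup_{(q,r)\in S'} O_q$. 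That verification goes through exactly as you sketch: each sufficiently long $0$-block from $q$ traverses all of $O_q$, and any $q\in Q_\infty$ is itself visited infinitely often as the target of a $1$-transition, so the potential omission of $q$ from the strict ``intermediate'' set is harmless.
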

\begin{proof}
  By Thm.~\ref{thm : Buchi MSO char word}, it is sufficient to be able
  to decide whether a deterministic Muller automaton
  $\mathcal{A} = (\{0,1\}, Q, q_{\text{init}}, \delta, \mathcal{F})$
  accepts the characteristic word \eqref{eq : characteristic word P}.
  Let us restrict $\mathcal{A}$ to a directed graph $G$ with nodes $Q$
  and $0$-transitions as arrows.

  By construction, every node in $G$ has outdegree $1$ and so each
  state is in at most one cycle in $G$.  We can therefore compute the
  least common multiple of the cycle lengths in $G$ (call this number
  $M$) and the longest path to a cycle (call this number $N$).  Then,
  for all states $q$ and numbers $n \ge M + N$ and $d \ge 1$, reading
  $0^n$ and $0^{n+dM}$ leads to journeying through the exact same set
  of states and ending up in the same state.

  Let $K$ be such that $p_{m+1} - p_m \ge M + N+1$ for all $m \ge K$
  (which can be computed as $P$ is effectively sparse).  We construct
  a deterministic finite transducer $\mathcal{B}$ that hard-codes the
  initial segment
  $w_\text{init} := 0^{p_0}10^{p_1-p_0-1}1\cdots
  0^{p_{K+1}-p_{K}-1}1$.  For $m \ge K$, after reading
  $(p_m\bmod M)$ and $(p_{m+1}\bmod M)$, $\mathcal{B}$ outputs
  $0^{k_m}1$, where $M+N < k_m \le 2M + N$ is congruent to
  $p_{m+1} - p_m - 1$ modulo $M$.  Then, by construction, a state $q$
  is visited infinitely often upon reading the characteristic word of
  $P$ if and only if $q$ is visited infinitely often when
  $\mathcal{A}$ reads
  $w_\text{init}\mathcal{B}(\langle p_m\bmod M\rangle_{m=K+1}^\infty)$.

    Recall that $\Acc_{\langle p_m\bmod M\rangle_{m=0}^\infty}$ is assumed to be decidable. 
    Then $\Acc_{\langle p_m\bmod M)_{m=K+1}^\infty}$ is also
    decidable (by hard-coding the initial segment) and thus
    $\Acc_{\mathcal{B}(\langle p_m\bmod M\rangle_{m=K+1}^\infty)}$ is decidable by Thm.~\ref{thm : transducer reduction}.
    Therefore $\Acc_{w_\text{init}\mathcal{B}(\langle p_m\
      \mathrm{mod}\ M\rangle_{m=K+1}^\infty)}$ is decidable (by again
    hard-coding the initial segment), and by construction, $\mathcal{A}$ accepts
    the characteristic word of $P$ if and only if $\mathcal{A}$
    accepts $w_\text{init}\mathcal{B}(\langle p_m\bmod M\rangle_{m=K+1}^\infty)$. 
    Hence $\Acc_w$ is decidable, as required.
    %
    %
    %
\end{proof}

\subsection{Linear Recurrence Sequences}\label{sec : LRS}
A number $\alpha \in \mathbb{C}$ is \emph{algebraic} if $F(\alpha) = 0$ for some non-zero polynomial $F \in \mathbb{Z}[X]$.
The unique such polynomial (up to multiplication by a constant) of least degree is the \emph{minimal polynomial} of $\alpha$.
We write $\alg$ to denote the field of algebraic numbers.

A \emph{linear recurrence sequence} over a ring $R$ (an
\emph{$R$-LRS}) is a sequence
$\langle u_n\rangle_{n=0}^\infty \in R^\omega$ such that there are
numbers $c_1,\dots,c_d \in R$, with $c_d \ne 0$, having the property that, for all $n \in \nat$,
\begin{equation*}
    u_{n+d} = c_1u_{n+d-1} + \cdots + c_du_n \, .
\end{equation*}
The entire sequence is therefore completely specified by the $2d$
numbers $c_1,\dots,c_d$ and $u_0,\dots,u_{d-1}$. Although there may
be several such recurrence relations, we shall assume that we are
always dealing with the (unique) one for which $d$ is minimal.
In the remainder of the paper, whenever $R$ is not specified,
we are working over the ring of integers $\mathbb{Z}$.
The \emph{characteristic polynomial} of the LRS is given by
$F(X) = X^d - c_1X^{d-1} - \cdots - c_d \in R[X]$.  The
\emph{characteristic roots} of the LRS are the roots of its
characteristic polynomial, and the \emph{multiplicity} of a
characteristic root $\lambda$ is its multiplicity as a root of $F(X)$.
Every $\alg$-LRS $\langle u_n \rangle_{n=0}^\infty$ with
characteristic roots $\lambda_1,\dots,\lambda_K$ admits a unique
\emph{exponential-polynomial} representation given by
\begin{equation*}
    u_n = \sum_{k=1}^K Q_k(n)\lambda_k^n \, ,
\end{equation*}
where $Q_k \in \overline{\rat}[X]$ has degree the multiplicity of $\lambda_k$ minus 1.
A $\alg$-LRS is \emph{simple} if all its characteristic roots have
multiplicity 1, and is \emph{non-degenerate} if no quotient of two
distinct characteristic roots is a root of unity.

A characteristic root $\lambda$ is \emph{dominant} if $|\lambda| \ge |\mu|$ for all characteristic roots $\mu$.
We define the \emph{dominant part} $\langle v_n\rangle_{n=0}^\infty$ of $\langle
u_n\rangle_{n=0}^\infty$ by writing
\begin{equation*}
    v_n = \sum_{\text{$\lambda_j$ dominant}} Q_j(n)\lambda_j^n \, ,
\end{equation*}
with the \emph{non-dominant part} $\langle r_n \rangle_{n=0}^\infty$
given by $\langle u_n - v_n\rangle_{n=0}^\infty$.
One can compute positive real numbers $r$ and $R$ such that $rR^n >
|r_n|$ for all $n \in \nat$ and $R < |\lambda|$ for $\lambda$ a
dominant characteristic root, see, e.g.,~\cite[Lem.~2.5]{berthe2024decidability}.

For every $M \ge 2$ and $\mathbb{Z}$-LRS $\langle
u_n\rangle_{n=0}^\infty$,
the sequence $\langle u_n\bmod M\rangle_{n=0}^\infty$
is ultimately periodic and one can effectively
compute its period and
preperiod~\cite[Lem.~2.6]{berthe2024decidability}.

\begin{example}\label{example 1}
    Let $\langle u_n\rangle_{n=0}^\infty$ be the ($\mathbb{Z}$-)LRS
    \eqref{eq : example LRS} from Sec.~\ref{sec : intro}.  
    The characteristic polynomial of $\langle u_n \rangle_{n=0}^\infty$ is 
    \begin{equation*}
        F(X) = X^3 - 6X^2 + 13X - 10 = (X-2)\,(X - (2+i))\,(X - (2-i)) \, . 
    \end{equation*}
    The characteristic roots of $\langle u_n\rangle_{n=0}^\infty$ are
    $2$, $2+i$, and $2-i$.
    Using linear algebra, one easily recovers the
    exponential-polynomial representation~\eqref{eq : example LRS poly exp form}. 
    The LRS $\langle u_n\rangle_{n=0}^\infty$ is simple and non-degenerate. 
    Its dominant part is the sequence defined by $v_n =
    \frac{1}{2}(2+i)^n + \frac{1}{2}(2-i)^n$,
    while its non-dominant part is given by $r_n = 2^n$. 
\end{example}
\subsection{Number Theory}\label{sec : number theory}

Our main number-theoretic tool is Baker's theorem on linear forms in logarithms.  Specifically, we make use of a flexible
version due to Matveev along with some off-the-shelf applications due to
Mignotte, Shorey, and Tijdeman.

Let $\alpha \neq 0$ be an algebraic number of degree $d$ with minimal polynomial $F(X) = a_0 \prod_{i=1}^d(X-\alpha_i)$. 
The \emph{logarithmic Weil height} of $\alpha$ is defined as
\begin{equation*}
    h(\alpha) = \frac{1}{d}\left(\log|a_0| + \sum_{i=1}^d \log
      \max\{|\alpha_i|, 1\} \right) \, .
\end{equation*}
Furthermore, set $h(0) = 0$.
For all algebraic numbers $\alpha_1,\dots,\alpha_k$ and $n \in \mathbb{Z}$, we have the following properties:
\begin{align*}
    h(\alpha_1+\cdots+\alpha_k) &\le \log k + h(\alpha_1) + \cdots +
                                  h(\alpha_k) \, ,\\
  h(\alpha_1\alpha_2) &\le h(\alpha_1) + h(\alpha_2) \, , \quad \text{and}\\
    h(\alpha_1^n) &\le |n|h(\alpha_1) \, .
\end{align*}
A \emph{number field} $L$ is a field extension of $\mathbb{Q}$ such that the degree of $L/\mathbb{Q}$ is finite. 
If $\alpha_1,\dots,\alpha_d \in \alg$, $L = \mathbb{Q}(\alpha_1,\dots,\alpha_d)$ is a number field whose degree can be effectively computed.

Let $L$ be a number field of degree $D$, $M \ge 1$, $\alpha_1,\dots,\alpha_M \in L^*$, and $b_1,\dots,b_M \in \mathbb{Z}$. Then set $B =\max\{|b_1|,\dots,|b_M|\}$,
\begin{equation*}
    \Lambda = \prod_{j = 1}^M \alpha_j^{b_j} - 1 \, , \quad \text{and} \quad
    h'(\alpha_j) = \max\big\{Dh(\alpha_j), |\log{\alpha_j}|, 0.16\big\} \, .
\end{equation*}
Matveev~\cite{matveev2000explicit} proved the following:
\begin{theorem}\label{thm : Matveev}
    If $\Lambda \ne 0$, then
    \begin{equation*}
        \log{|\Lambda|} > -3  \cdot 30^{M+4}(M+1)^{5.5}D^2(1+\log{D})\cdot \big(1 + \log(MB)\big)\prod_{j=1}^d h'(\alpha_j) \, .
    \end{equation*}
    In particular, there is a computable constant $c$ such that when $\Lambda \ne 0$, $|\Lambda| > B^{-c}$.
\end{theorem}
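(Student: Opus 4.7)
The statement is Matveev's explicit quantitative refinement of Baker's theorem on linear forms in logarithms of algebraic numbers, which the authors invoke as an off-the-shelf tool (citing \cite{matveev2000explicit}); I therefore do not propose a self-contained proof, but rather sketch the classical Baker--Matveev strategy and then indicate how the ``in particular'' clause follows as an elementary corollary.

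For the main inequality my plan would be the standard transcendence-theoretic argument by contradiction. I would assume $|\Lambda|$ is much smaller than the stated bound, so that the complex linear form $L = b_1 \log\alpha_1 + \cdots + b_M \log\alpha_M$ (in suitable principal branches, adjusted by an integer multiple of $2\pi i$) is unusually close to $0$. Following Matveev and Laurent, I would use interpolation determinants rather than a direct Siegel-lemma argument to construct an auxiliary polynomial $\Phi(z_1,\ldots,z_M)$ with algebraic integer coefficients of controlled height and degrees, designed so that the one-variable entire function $f(z) = \Phi(\alpha_1^z, \ldots, \alpha_M^z)$ vanishes to high order at a prescribed initial range of integers. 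The central step is then an extrapolation: combining a Schwarz-type estimate on growing discs with the assumed smallness of $L$, one shows that $f$ must also be very small at many further integer points, and hence that $\Phi$ vanishes at many more algebraic points than originally prescribed. Iterating, one forces enough vanishing that the fundamental inequality opposing the archimedean size of a non-zero algebraic integer to its conjugates' size (its \emph{house}) delivers a numerical contradiction, unless the opening assumption on $|\Lambda|$ fails. Matveev's specific contribution is to run this machinery with a careful bookkeeping of the interpolation determinants so that the explicit dependencies $30^{M+4}$, $(M+1)^{5.5}$, $D^2(1+\log D)$, $1+\log(MB)$ and $\prod_j h'(\alpha_j)$ all fall out as displayed.

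The ``in particular'' clause is then immediate: once $\alpha_1,\ldots,\alpha_M$ are fixed, the quantities $M$, $D$, and the heights $h'(\alpha_j)$ become computable absolute constants, so the main inequality collapses to $\log|\Lambda| > -C_1(1 + \log B)$ for some effective $C_1 > 0$. Exponentiating and absorbing the finitely many small-$B$ cases into the constant $c$ yields $|\Lambda| > B^{-c}$ uniformly in $B$, as required. The principal obstacle, and the reason Matveev's theorem must be invoked rather than reproved, lies precisely in the sharp auxiliary-polynomial construction and its extrapolation: achieving the polynomial-in-$M$ dependence together with the explicit numerical constants requires substantial analytic and combinatorial work that would dwarf the rest of the present paper, so we treat the theorem as a black box.
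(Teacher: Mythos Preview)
Your assessment is correct and matches the paper's treatment: the paper does not prove this result at all but simply cites it as Matveev's theorem~\cite{matveev2000explicit}, using it as a black box throughout. Your sketch of the Baker--Matveev auxiliary-function/interpolation-determinant method is an accurate high-level summary of how such bounds are obtained, and your derivation of the ``in particular'' clause---fixing $\alpha_1,\ldots,\alpha_M$ so that everything but $B$ becomes an effective constant, then exponentiating $\log|\Lambda| > -C_1(1+\log B)$---is exactly the intended reading.
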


We will also need the following results from Mignotte, Shorey, and Tijdeman~\cite{shorey1984distance}:
\begin{theorem}\label{thm : MST size LRS}
    Let $\langle u_n \rangle_{n=0}^\infty$ be a non-degenerate LRS
    with two  dominant roots of magnitude $|\lambda|$. Then there are
    computable positive constants $C_1$ and $C_2$ such that
    \begin{equation*}
        |u_n| \ge |\lambda|^n \cdot n^{-C_1\log(n)}
    \end{equation*}
    whenever $n \ge C_2$.
\end{theorem}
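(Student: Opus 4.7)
The plan is to decompose $u_n$ into its dominant and subdominant parts, lower-bound the dominant part via Baker's theorem on linear forms in logarithms, and absorb the subdominant error using the strict gap between $|\lambda|$ and the other characteristic root moduli.

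First I would pin down the structure of the dominant roots. Since $\langle u_n\rangle$ is integer-valued, its minimal characteristic polynomial has integer coefficients, so non-real roots occur in complex-conjugate pairs. Two \emph{real} roots of common magnitude $|\lambda|$ would have to be $\pm|\lambda|$, whose quotient is the root of unity $-1$, violating non-degeneracy; and one real together with one non-real dominant root would force a third dominant root by conjugation. Hence the two dominant roots form a complex-conjugate pair $\lambda,\bar\lambda$, and the exponential-polynomial representation reads
\begin{equation*}
u_n \;=\; Q(n)\lambda^n + \overline{Q(n)}\bar\lambda^n + r_n,
\end{equation*}
where $Q\in\overline{\mathbb{Q}}[X]$ and $|r_n|\le rR^n$ for computable $r>0$ and $R<|\lambda|$.

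Next, factoring $Q(n)\lambda^n$ out of the dominant part and using that both $\overline{Q(n)}/Q(n)$ and $\bar\lambda/\lambda$ lie on the unit circle, I would rewrite the dominant part as $2|Q(n)||\lambda|^n\cos\psi_n$ with $\psi_n=\arg Q(n)+n\theta$ and $\theta=\arg\lambda$. A short computation yields $2|\cos\psi_n|=|\Lambda|$, where $\Lambda=(-1)\cdot(\overline{Q(n)}/Q(n))\cdot(\bar\lambda/\lambda)^n-1$ is precisely of the shape $\prod\alpha_j^{b_j}-1$ demanded by Theorem \ref{thm : Matveev}. In the simple case $Q\equiv c$, the $\alpha_j$'s are the \emph{fixed} algebraic numbers $-1,\bar c/c,\bar\lambda/\lambda$ (each of unit modulus) with exponents $1,1,n$ respectively; non-degeneracy (combined with Skolem--Mahler--Lech applied to the non-degenerate LRS $Q(n)\lambda^n+\overline{Q(n)}\bar\lambda^n$) forces $\Lambda\ne 0$ for all $n$ beyond a computable threshold. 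Matveev's bound then produces $|\Lambda|\ge n^{-C_1\log n}$ with computable $C_1$. Combining, $|u_n|\ge 2|Q(n)||\lambda|^n n^{-C_1\log n}-rR^n$, and since $|Q(n)|$ is bounded below by a positive constant once $n$ exceeds the real roots of $Q$ and $R<|\lambda|$, the first term dominates beyond a computable threshold $C_2$, yielding the claim.

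The principal obstacle is executing the Matveev step with all parameters effective: one must produce a number field containing $\lambda$ and the coefficients of $Q$, bound the logarithmic Weil heights of $-1,\bar c/c,\bar\lambda/\lambda$, and verify that inserting $B=n$ into Theorem \ref{thm : Matveev} yields the claimed exponent. A secondary wrinkle, relevant only in the non-simple case, is that $\arg Q(n)$ is not a fixed algebraic quantity but tends to $\arg c$ (the argument of the leading coefficient of $Q$) with error $O(1/n)$; this is handled by factoring the leading monomial $c\,n^{k-1}$ out of $Q$, applying Matveev to the constant-argument remainder, and absorbing the $O(1/n)$ perturbation against the Baker estimate, leaving the asymptotic bound unchanged.
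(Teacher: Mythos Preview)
The paper does not prove this theorem; it is quoted verbatim from Mignotte, Shorey, and Tijdeman~\cite{shorey1984distance}. So there is no ``paper's own proof'' to compare against, and your outline must be judged on its merits.

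Your treatment of the simple case is essentially the standard argument and is fine (indeed it yields the stronger bound $|u_n|\ge c|\lambda|^n n^{-C_1}$, without the extra $\log n$ in the exponent). One small point: to make the threshold $C_2$ effective you must bound the (at most one) index $n_0$ at which the dominant part vanishes; invoking Skolem--Mahler--Lech alone gives finiteness but not a computable bound. A clean fix is via heights: if $(\bar\lambda/\lambda)^{n_0}=-\bar c/c$ then $n_0\,h(\bar\lambda/\lambda)=h(-\bar c/c)\le 2h(c)$, and $h(\bar\lambda/\lambda)>0$ by non-degeneracy (an algebraic number of height zero is a root of unity), so $n_0$ is effectively bounded.

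Your handling of the non-simple case, however, has a genuine gap. You propose to replace $\overline{Q(n)}/Q(n)$ by $\bar c/c$ (with $c$ the leading coefficient of $Q$), apply Matveev to the resulting $\Lambda_0$, and then ``absorb the $O(1/n)$ perturbation against the Baker estimate''. But Matveev's constant $c'$ in $|\Lambda_0|>n^{-c'}$ is astronomically large (certainly $c'\gg 1$), so an additive error of order $1/n$ can, and in general will, swamp the lower bound $n^{-c'}$. The absorption simply does not go through.

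The correct route---and the reason the exponent in the theorem is $C_1\log n$ rather than a constant---is to feed $\overline{Q(n)}/Q(n)$ itself into Matveev as one of the $\alpha_j$. For each fixed $n$ this is a bona fide algebraic number in a fixed number field, of logarithmic Weil height $h(\overline{Q(n)}/Q(n))=O(\log n)$ (since $h(Q(n))\le (\deg Q)\log n + O(1)$). Substituting $h'(\alpha_2)=O(\log n)$ together with $B=n$ into Theorem~\ref{thm : Matveev} gives
\[
\log|\Lambda| \;>\; -\,C\cdot(\log n)\cdot(1+\log n)\;\ge\;-C_1(\log n)^2,
\]
i.e.\ $|\Lambda|>n^{-C_1\log n}$. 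This is exactly the shape of the bound being claimed, and the rest of your argument (subtracting the subdominant part $r_n$) then goes through unchanged.
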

\begin{theorem}\label{thm : MST size difference terms of LRS}
    Let $\langle u_n \rangle_{n=0}^\infty$ be a non-degenerate LRS
    with two  dominant roots of magnitude $|\lambda|$. Then there are
    computable positive constants $C_3$ and $C_4$ such that
    \begin{equation*}
        |u_{n_1} - u_{n_2}| \ge |\lambda|^{n_1} \cdot {n_1}^{-C_3\log({n_1})\log({n_2}+2)}
    \end{equation*}
    whenever ${n_1} > {n_2}$ and ${n_1} \ge C_4$.
\end{theorem}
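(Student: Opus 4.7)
The plan is to decompose $u_n$ into its dominant and non-dominant parts, bound $|v_{n_1} - v_{n_2}|$ via a combination of Matveev's theorem (Thm.~\ref{thm : Matveev}) and Thm.~\ref{thm : MST size LRS}, and then absorb the non-dominant contribution. Non-degeneracy together with the LRS being integer-valued and having two dominant roots of equal magnitude forces these roots to be complex conjugates $\lambda, \bar\lambda$ (two real roots of equal magnitude would have quotient $-1$, a root of unity). Focusing on the simple case (the general case is analogous), write $v_n = \gamma \lambda^n + \bar\gamma \bar\lambda^n$ for a nonzero $\gamma \in \alg$, with $|r_n| = |u_n - v_n| \le rR^n$ for a computable $R < |\lambda|$; the interesting case is $|\lambda| > 1$, as otherwise the LRS is bounded and the statement degenerates.

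Setting $k := n_1 - n_2 > 0$ and $Z := \gamma \lambda^{n_2}(\lambda^k - 1)$, we have $v_{n_1} - v_{n_2} = Z + \bar Z$, whence
\begin{equation*}
\bigl|v_{n_1} - v_{n_2}\bigr| = |\gamma||\lambda|^{n_2}|\lambda^k - 1| \cdot \biggl|1 + \frac{\bar\gamma}{\gamma}\cdot \frac{\bar\lambda^k - 1}{\lambda^k - 1}\cdot \Bigl(\frac{\bar\lambda}{\lambda}\Bigr)^{n_2}\biggr| \, .
\end{equation*}
The bracketed factor is (up to a global sign) a Matveev form $\alpha_1 \alpha_2 \alpha_3^{n_2} - 1$ over the fixed number field $L = \rat(\lambda, \gamma)$, with $\alpha_1 = -\bar\gamma/\gamma$, $\alpha_2 = (\bar\lambda^k - 1)/(\lambda^k - 1)$, and $\alpha_3 = \bar\lambda/\lambda$. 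The listed height inequalities give $h(\alpha_2) = O(k)$, while $h(\alpha_1)$ and $h(\alpha_3)$ are bounded; since $|\alpha_2| = 1$ forces $|\log \alpha_2| \le \pi$, we have $h'(\alpha_2) = O(k)$. Non-degeneracy ensures $\bar\lambda/\lambda$ is not a root of unity, so the form vanishes for at most a finite computable set of pairs $(n_1, n_2)$; for the rest, Matveev (with $B = \max(1, n_2)$ and $M = 3$) yields
\begin{equation*}
\biggl|1 + \frac{\bar\gamma}{\gamma}\cdot \frac{\bar\lambda^k - 1}{\lambda^k - 1}\cdot \Bigl(\frac{\bar\lambda}{\lambda}\Bigr)^{n_2}\biggr| \ge \exp\bigl(-C_5\, k \log(n_2 + 2)\bigr) \, .
\end{equation*}
Combining with $|\lambda^k - 1| \ge |\lambda|^k / 2$ (valid for $k$ above a computable threshold) gives $|v_{n_1} - v_{n_2}| \ge C_6 |\lambda|^{n_1} \exp(-C_5\, k \log(n_2+2))$.

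A case split on $k$ then finishes the analysis. If $k \le (\log n_1)^2$, the bound is already at least $|\lambda|^{n_1} n_1^{-C_5 \log(n_1)\log(n_2+2)}$, matching the claim. If $k > (\log n_1)^2$, apply Thm.~\ref{thm : MST size LRS} to $v_{n_1}$ to obtain $|v_{n_1}| \ge |\lambda|^{n_1} n_1^{-C_1 \log n_1}$; since $|v_{n_2}| \le 2|\gamma||\lambda|^{n_2}$ and $|\lambda|^k > n_1^{C_1 \log n_1}$ in this regime (once $n_1$ is above a computable threshold), we conclude $|v_{n_1} - v_{n_2}| \ge |v_{n_1}|/2$, which again implies the claim. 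Finally, $|r_{n_1} - r_{n_2}| \le 2rR^{n_1}$ is exponentially negligible compared to $|\lambda|^{n_1}$ times any subexponential correction, and is absorbed once $n_1 \ge C_4$.

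The main obstacle is the careful bookkeeping around the Matveev application: verifying that $h'(\alpha_2)$ genuinely grows only linearly in $k$, that the linear form is non-vanishing outside a finite computable exception set, and that the many constants combine to produce a bound of precisely the form $n_1^{-C_3 \log(n_1)\log(n_2+2)}$. Arranging the case-split threshold around $(\log n_1)^2$ so that the Matveev regime and the direct size regime dovetail consistently is also delicate, but is ultimately manageable by taking $C_3$ sufficiently large to subsume both cases.
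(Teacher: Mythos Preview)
The paper does not itself prove this statement; it is quoted in Sec.~\ref{sec : number theory} as a result of Mignotte, Shorey, and Tijdeman~\cite{shorey1984distance}, so there is no in-paper proof to compare against. Your sketch does follow the broad strategy of that work---factor out the dominant part, apply a Baker-type bound when $k=n_1-n_2$ is small, and use the crude size estimate from Thm.~\ref{thm : MST size LRS} when $k$ is large---and the height bookkeeping for the Matveev application is correct.

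The one substantive gap is the non-vanishing step. You write that non-degeneracy makes $\bar\lambda/\lambda$ a non-root of unity and hence the form vanishes for at most a finite computable set of pairs. But your form $\Lambda = \alpha_1\alpha_2\alpha_3^{n_2}-1$ has $\alpha_2 = (\bar\lambda^k-1)/(\lambda^k-1)$ depending on $k$, and non-degeneracy only gives that for each \emph{fixed} $k$ the equation $\alpha_1\alpha_2\alpha_3^{n_2}=1$ has at most one solution $n_2$; since $k$ ranges over all positive integers, this does not by itself yield finitely many exceptional pairs, let alone an effective bound on $n_1$ over them. The condition $\Lambda=0$ is exactly $v_{n_1}=v_{n_2}$, and showing that this two-variable equation has only finitely many solutions with $n_1$ effectively bounded is one of the genuinely delicate parts of the Mignotte--Shorey--Tijdeman argument rather than an immediate consequence of non-degeneracy. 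You flag it as an obstacle in your final paragraph, but the one-line justification offered does not clear it. (A minor bookkeeping point: the case-split threshold should read $k > C'(\log n_1)^2$ with $C' > C_1/\log|\lambda|$ rather than $k > (\log n_1)^2$, so that $|\lambda|^k > n_1^{C_1\log n_1}$ actually follows; this is harmless once absorbed into $C_3$.)
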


\section{Proof of the Main Result}

In this section, we prove Thm.~\ref{thm : MSO decidable}: the
MSO theory of the structure $\langle \nat ; < , P\rangle$ is decidable
whenever $P$ is a predicate comprising the set of positive terms of
some non-degenerate, simple, integer-valued LRS having two dominant
characteristic roots.

We begin in Sec.~\ref{sec : Reduction to proving weak normality} by
untangling the definition of an LRS satisfying the above hypotheses
and reduce Thm.~\ref{thm : MSO decidable} to Thm.~\ref{thm : intro
  weak normality}.  We then provide intuition underlying the proof of
the latter through an extended example in Sec.~\ref{sec : example}, and
consider a continuous version of our problem in Sec.~\ref{sec : Proof of
  weak normality}.  Finally, in the same section, we establish
Thm.~\ref{thm : intro weak normality}.

\subsection{Reduction to Prodisjunctivity}\label{sec : Reduction to proving weak normality}

Let $\langle u_n\rangle_{n=0}^\infty$ be an LRS satisfying the
hypotheses of Thm.~\ref{thm : MSO decidable}. We first record some elementary observations whose proof is in App.~\ref{sec : omitted proofs}.
\begin{lemma}\label{lem : dominant roots are nice}
  Assume that $\langle u_n\rangle_{n=0}^\infty$ is an LRS satisfying the hypotheses of
  Thm.~\ref{thm : MSO decidable} and whose two dominant roots are $\lambda_1$ and
  $\lambda_2$. Then $\lambda_2 = \overline{\lambda_1}$,
the argument of $\lambda_1$ is not a rational multiple of $\pi$, and $|\lambda_1| > 1$.
\end{lemma}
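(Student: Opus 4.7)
The three conclusions are all near-immediate consequences of the standing hypotheses once one brings in the right classical facts about algebraic integers, so the proof amounts to organising three short arguments, each handling one clause.

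First, to show $\lambda_{2}=\overline{\lambda_{1}}$: the characteristic polynomial $F(X)\in\mathbb{Z}[X]$ is monic with integer coefficients, so its non-real roots come in complex conjugate pairs of equal modulus. I would split into two cases. If $\lambda_{1}\notin\mathbb{R}$, then $\overline{\lambda_{1}}$ is also a characteristic root of the same modulus, hence dominant, and since there are exactly two dominant roots we must have $\lambda_{2}=\overline{\lambda_{1}}$. If instead $\lambda_{1}\in\mathbb{R}$, then $\lambda_{2}\in\mathbb{R}$ as well with $|\lambda_{2}|=|\lambda_{1}|$; simplicity forces $\lambda_{1}\neq\lambda_{2}$, so $\lambda_{2}=-\lambda_{1}$ and then $\lambda_{1}/\lambda_{2}=-1$ is a root of unity, contradicting non-degeneracy. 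Hence the real case is ruled out and $\lambda_{2}=\overline{\lambda_{1}}$.

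Second, to show that $\arg(\lambda_{1})$ is not a rational multiple of $\pi$: write $\lambda_{1}=|\lambda_{1}|e^{i\theta}$, so $\lambda_{1}/\lambda_{2}=\lambda_{1}/\overline{\lambda_{1}}=e^{2i\theta}$. If $\theta\in\pi\mathbb{Q}$, then $e^{2i\theta}$ is a root of unity, once again contradicting non-degeneracy. So this clause is an immediate corollary of the first together with the non-degeneracy hypothesis.

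Third, for $|\lambda_{1}|>1$ I would invoke Kronecker's theorem: an algebraic integer all of whose Galois conjugates have absolute value at most $1$ is either zero or a root of unity. The roots of $F(X)$ are algebraic integers (since $F$ is monic over $\mathbb{Z}$), and every Galois conjugate of $\lambda_{1}$ is a root of $F$, hence has absolute value at most $|\lambda_{1}|$ because $\lambda_{1}$ is dominant. If $|\lambda_{1}|\leq 1$, every conjugate of $\lambda_{1}$ would have modulus at most $1$, so by Kronecker $\lambda_{1}$ would be zero or a root of unity; the former is impossible as $\lambda_{1}$ is dominant and the LRS is non-trivial, and the latter makes $\lambda_{1}/\lambda_{2}=\lambda_{1}^{2}$ a root of unity, again contradicting non-degeneracy. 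Therefore $|\lambda_{1}|>1$.

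There is no real obstacle here: the only point that requires a moment's thought is the reduction of the third clause to Kronecker's theorem, which hinges on observing that \emph{all} Galois conjugates of $\lambda_{1}$ (not merely the dominant characteristic roots of the LRS) are controlled, and this is automatic because those conjugates appear among the characteristic roots of $F$ and thus inherit the dominance bound.
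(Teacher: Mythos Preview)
Your proof is correct and follows essentially the same approach as the paper: both arguments derive the first two clauses from non-degeneracy (via complex conjugation of roots of a real polynomial) and the third from Kronecker's theorem. The only minor difference is in the third clause, where the paper first uses Vieta's formulas to force $|\lambda_1|=1$ with no non-dominant roots before invoking Kronecker, whereas you apply Kronecker directly to $\lambda_1$ by observing that all its Galois conjugates lie among the roots of the (monic integral) characteristic polynomial and hence are bounded by $|\lambda_1|$; your route is marginally more direct.
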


When writing $\langle u_n\rangle_{n=0}^\infty$ in its
exponential-polynomial form, we split it into its dominant part
$\langle v_n\rangle_{n=0}^\infty$ and non-dominant part $\langle r_n\rangle_{n=0}^\infty$:
\begin{equation}\label{eq : def u_n and v_n}
    u_n = v_n + r_n = \alpha\lambda^n +
    \overline{\alpha}\overline{\lambda}^n + r_n \, .
\end{equation}
Here, $\alpha$ and $\lambda$ are algebraic numbers with $\alpha \ne 0$
(as otherwise $\lambda$ and $\overline{\lambda}$ would not be
characteristic roots, i.e., roots of the polynomial corresponding to
the \emph{minimal} recurrence relation that
$\langle u_n\rangle_{n=0}^\infty$ obeys), $|\lambda| > 1$, and
the argument of $\lambda$ is not a rational multiple of $\pi$.
We keep $\lambda$, $\alpha$, and $\langle r_n\rangle_{n=0}^\infty$ fixed for the remainder of the paper.

Recall that $P = \{u_n \colon n \in \nat\} \cap \nat$ and
$\langle p_m\rangle_{m=0}^\infty$ is an enumeration of $P$ in
increasing order.
To apply Prop.~\ref{prop : reduction one unary predicate}, we need the following lemma, whose proof relies heavily on the results of Mignotte, Shorey, and Tijdeman from Sec.~\ref{sec : LRS} and can be found in App.~\ref{sec : omitted proofs}.
\begin{lemma}\label{lem : First three conditions of LRS}
Let $P \subseteq \nat$ be as per Thm.~\ref{thm : MSO decidable}.
Then $P$ is infinite, recursive, and effectively sparse.
\end{lemma}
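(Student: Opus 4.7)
The plan is to verify each of the three assertions in turn, using the decomposition $u_n = 2|\alpha||\lambda|^n \cos(n\varphi + \theta) + r_n$ obtained from~\eqref{eq : def u_n and v_n} (writing $\lambda = |\lambda|e^{i\varphi}$ and $\alpha = |\alpha|e^{i\theta}$), together with Lem.~\ref{lem : dominant roots are nice} and the estimates of Thms.~\ref{thm : MST size LRS} and~\ref{thm : MST size difference terms of LRS}.

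To show $P$ is infinite, I would invoke Weyl's equidistribution theorem: since $\varphi/\pi$ is irrational, $\cos(n\varphi + \theta) \geq 1/2$ holds for infinitely many $n$, and for each such $n$ that is sufficiently large we have $u_n \geq |\alpha||\lambda|^n - rR^n > 0$ (where $|r_n| \le rR^n$ with $R < |\lambda|$). Since Thm.~\ref{thm : MST size LRS} forces $|u_n| \to \infty$, these contribute unboundedly many distinct positive values to $P$. For recursiveness, given $k \in \nat$ the same theorem yields a computable threshold $N_k$ with $|u_n| > k$ for all $n \geq N_k$; membership of $k$ in $P$ then reduces to the finite check against $u_0, \ldots, u_{N_k - 1}$.

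The principal obstacle is the effective-sparseness claim. Given $N$, I would exploit Thm.~\ref{thm : MST size difference terms of LRS}: for $n_1 > n_2 \ge 0$ with $n_1$ large, bounding $\log(n_2 + 2) \le 2\log n_1$ gives the uniform-in-$n_2$ estimate $|u_{n_1} - u_{n_2}| \ge |\lambda|^{n_1} n_1^{-2C_3(\log n_1)^2}$, which is effectively computable and tends to infinity. Hence one can compute $n^\ast$ with $|u_{n_1} - u_{n_2}| \ge N$ whenever $n_1 \ge n^\ast$ and $n_2 < n_1$. To connect this with the enumeration $\langle p_m \rangle$, I would use the elementary upper bound $|u_n| \le C|\lambda|^n$ (valid with a computable $C$ since the LRS is simple and $|r_n|/|\lambda|^n$ is uniformly bounded), so any LRS index $a$ with $u_a \ge C|\lambda|^{n^\ast}$ satisfies $a \ge n^\ast$. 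The argument used for recursiveness lets me effectively compute $M$ with $p_m \ge C|\lambda|^{n^\ast}$ for all $m \ge M$, since at most finitely many indices of the LRS can contribute positive values not exceeding $C|\lambda|^{n^\ast}$. For $m \ge M$, the indices $a, b$ with $u_a = p_m$ and $u_b = p_{m+1}$ then both exceed $n^\ast$, so $p_{m+1} - p_m = |u_a - u_b| \ge N$, as desired.
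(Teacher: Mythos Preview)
Your proof is correct and follows essentially the same route as the paper's: Thm.~\ref{thm : MST size LRS} for recursiveness, the irrationality of $\varphi/\pi$ for infiniteness, and Thm.~\ref{thm : MST size difference terms of LRS} (with the uniform bound $\log(n_2+2)=O(\log n_1)$) for effective sparseness. The only cosmetic differences are that the paper cites Braverman's density lemma in lieu of Weyl equidistribution, and handles the final ``connecting'' step for sparseness by directly listing $P\cap\{0,\dots,k+1+\max_{n\le N'}u_n\}$ rather than via your upper bound $|u_n|\le C|\lambda|^n$; both arrive at the same conclusion by the same mechanism.
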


By Prop.~\ref{prop : reduction one unary predicate}, it now
suffices to prove that for all $M \ge 2$, one can decide
$\Acc_{\langle p_m\bmod M\rangle_{m=0}^\infty}$ (determine
whether a given deterministic Muller automaton $\mathcal{A}$ over
alphabet $\{0,\dots,M-1\}$ accepts
$\langle p_m\bmod M\rangle_{m=0}^\infty$).  The next lemma
shows how the effective prodisjunctivity of
$\langle p_m \rangle_{m=0}^\infty$ asserted by~Thm.~\ref{thm : intro weak normality} enables us to do this.
Its proof can be found in App.~\ref{sec : omitted proofs}.

\begin{lemma}\label{lem : Theorems intro implications}
    Theorem~\ref{thm : intro weak normality} implies Thm.~\ref{thm : MSO decidable}, i.e., if $\langle u_n\rangle_{n=0}^\infty$ is an LRS satisfying the conditions of Thm.~\ref{thm : MSO decidable}, $P := \{u_n \colon n\in\nat\}\cap\nat$, and the enumeration of $P$ is prodisjunctive, then the MSO theory of the structure $\langle \nat ; < , P\rangle$ is decidable.
\end{lemma}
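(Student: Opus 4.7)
The plan is to verify the hypotheses of Prop.~\ref{prop : reduction one unary predicate} for the predicate $P$. The first three conditions---that $P$ is infinite, recursive, and effectively sparse---are provided directly by Lem.~\ref{lem : First three conditions of LRS}. Hence it only remains to show that, for every integer $M \ge 2$, the acceptance problem $\Acc_{\langle p_m \bmod M\rangle_{m=0}^\infty}$ is decidable, under the assumption that Thm.~\ref{thm : intro weak normality} holds.

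To this end, fix $M \ge 2$ together with an arbitrary deterministic Muller automaton $\mathcal{A} = (\{0,\dots,M-1\}, Q, q_{\text{init}}, \delta, \mathcal{F})$. By the effective prodisjunctivity of $\langle p_m \rangle_{m=0}^\infty$, I can compute both the set $S_M \subseteq \{0,\dots,M-1\}$ and the index threshold $N_M$ beyond which every residue lies in $S_M$; moreover, the tail $w_{\text{tail}} := \langle p_m \bmod M\rangle_{m=N_M}^\infty$, viewed as an infinite word over the alphabet $S_M$, is recursive and disjunctive. First I would simulate $\mathcal{A}$ on the computable prefix $\langle p_0 \bmod M, \dots, p_{N_M-1} \bmod M\rangle$ to obtain the unique state $q' \in Q$ in which $\mathcal{A}$ lands after reading it. I then define $\mathcal{A}'$ as the deterministic Muller automaton over alphabet $S_M$ obtained by restricting $\delta$ to $Q \times S_M$, keeping $\mathcal{F}$ unchanged, and installing $q'$ as the new initial state. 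Since the runs of $\mathcal{A}$ on $\langle p_m \bmod M\rangle_{m=0}^\infty$ and of $\mathcal{A}'$ on $w_{\text{tail}}$ agree on all states visited from index $N_M$ onwards, the two runs produce the same set of infinitely-visited states, and so $\mathcal{A}$ accepts $\langle p_m \bmod M\rangle_{m=0}^\infty$ if and only if $\mathcal{A}'$ accepts $w_{\text{tail}}$.

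Finally, I would invoke Thm.~\ref{thm : weakly normal decidable} on the recursive and disjunctive word $w_{\text{tail}} \in S_M^\omega$ to conclude that $\Acc_{w_{\text{tail}}}$ is decidable, which settles $\Acc_{\langle p_m \bmod M\rangle_{m=0}^\infty}$ for each $M \ge 2$ and, via Prop.~\ref{prop : reduction one unary predicate}, yields Thm.~\ref{thm : MSO decidable}. The only real subtlety is a mismatch of alphabets: Thm.~\ref{thm : weakly normal decidable} asks for disjunctivity over the \emph{full} alphabet of the input word, whereas $\langle p_m \bmod M\rangle$ is a priori a word over $\{0,\dots,M-1\}$ and is disjunctive only relative to $S_M$. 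Restricting both the word and the automaton to alphabet $S_M$---which is possible precisely because effective prodisjunctivity delivers $S_M$ and $N_M$ computably---removes this obstacle cleanly, and is exactly the reason the definition of prodisjunctivity is framed in terms of disjunctivity relative to a subset rather than plain disjunctivity.
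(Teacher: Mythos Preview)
Your proposal is correct and follows essentially the same route as the paper: verify the hypotheses of Prop.~\ref{prop : reduction one unary predicate} via Lem.~\ref{lem : First three conditions of LRS}, then for each $M$ strip off a finite prefix, restrict the automaton to the alphabet $S_M$, and invoke Thm.~\ref{thm : weakly normal decidable} on the disjunctive tail. The only minor difference is that you obtain $S_M$ and $N_M$ from the effectiveness clause of prodisjunctivity, whereas the paper computes them directly from the (effectively computable) ultimate periodicity of $\langle u_n \bmod M\rangle_{n=0}^\infty$; both are valid and amount to the same argument.
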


Theorem~\ref{thm : intro weak normality} asserts that, for any
$M \geq 2$, the sequence
$\langle p_m\bmod M\rangle_{m=N_M}^\infty \in {S_M}^\omega$
is disjunctive.  Let us unpack this definition.
We have that $\langle p_m\bmod M\rangle_{m=N_M}^\infty \in {S_M}^\omega$ is
disjunctive if for any $\ell \ge 1$, every pattern
$\langle s_1,\dots,s_\ell\rangle \in {S_M}^\ell$ appears infinitely often in
$\langle p_m\bmod M\rangle_{m=N_M}^\infty$.  That is, for
all $\ell, N \in \nat$ with $N \geq N_M$ and
$s_1,\dots,s_\ell \in S_M$, there are $n_1,\dots,n_\ell \in \nat$ such
that
\begin{enumerate}
    \item $n_1,\dots,n_\ell \ge N$;
    \item for all $1 \le i \le \ell$, $u_{n_i} \equiv s_i \pmod M$;
    \item $0 \le u_{n_1} < \cdots < u_{n_\ell}$;
    \item for all $m \ge 0$ such that $u_{n_1} \le u_m \le u_{n_\ell}$, $u_m \in \{u_{n_1},\dots,u_{n_\ell}\}$. 
\end{enumerate}

As the dominant part $\langle v_n \rangle_{n=0}^\infty$ (where $v_n = \alpha\lambda^n+\overline{\alpha}\overline{\lambda}^n$, see~\eqref{eq : def u_n and v_n}) only relies on two algebraic numbers, $\alpha$ and $\lambda$, it is easier to work
with $\langle v_n\rangle_{n=0}^\infty$ than $\langle u_n\rangle_{n=0}^\infty$.
In the following lemma, we prove that we can actually effect this change.
\begin{lemma}\label{lem : weak normality equivalent statement}
    Assume that for all natural numbers $\ell$, $N$, $T \ge 2$, and $t_1,\dots,t_\ell \in \{0, \dots,T-1\}$, there are $n_1,\dots, n_\ell \in \nat$ such that
    \begin{enumerate}
        \item $n_1,\dots, n_\ell \ge N$;
        \item for all $1 \le j \le \ell$, $n_j \equiv t_j \pmod T$;
        \item $0 < v_{n_1} < \cdots < v_{n_\ell}$;
        \item for all $m \geq 0$ such that $v_{n_1} < v_m < v_{n_\ell}$, $m \in \{n_2,\dots,n_{\ell-1}\}$.
    \end{enumerate}
    Then the enumeration $\langle p_m \rangle_{m=0}^\infty$ of $P := \{u_n \colon n \in \nat\} \cap \nat$ is effectively prodisjunctive.
\end{lemma}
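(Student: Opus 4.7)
The plan is to leverage the hypothesis (which controls only the dominant part $v_n$) together with the Mignotte-Shorey-Tijdeman separation bounds of Thms.~\ref{thm : MST size LRS} and~\ref{thm : MST size difference terms of LRS} (applied both to $\langle u_n \rangle$ and to $\langle v_n \rangle$, itself a two-dominant-root LRS) to transfer the $v$-ordering to the sorted positive $u$-values, absorbing the exponentially small discrepancy $|r_n|\le rR^n$, $R<|\lambda|$, into the $v$-gaps.

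First I would handle the modular bookkeeping. From the effectively computable ultimate periodicity of $\langle u_n\bmod M\rangle$, I extract the period $T$, preperiod, and residue map $f\colon \{0,\dots,T-1\}\to\{0,\dots,M-1\}$, and identify $S_M$ with $f(\{0,\dots,T-1\})$. The non-trivial inclusion uses that $T\theta/\pi$ is irrational (by Lem.~\ref{lem : dominant roots are nice}), so Weyl equidistribution of $\{(n\theta+\varphi)\bmod 2\pi : n\equiv t\pmod T\}$ together with Thm.~\ref{thm : MST size LRS} (which makes $|v_n|$ exponentially dominate $|r_n|$) forces $u_n>0$ infinitely often in every residue class modulo $T$; then $N_M$ reduces to a finite count.

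Given a target pattern $\langle s_1,\dots,s_\ell\rangle\in S_M^\ell$ and a lower bound $K$ for where it should appear in $\langle p_m\rangle$, I would lift each $s_j$ to some $\tau_j\in f^{-1}(s_j)$, add arbitrary padding residues $\tau_0,\tau_{\ell+1}$, and invoke the hypothesis with length $\ell+2$, modulus $T$, and a large threshold $N$ (to be fixed below). Relabelling, the output is indices $n_0,\dots,n_{\ell+1}\ge N$ with $0<v_{n_0}<\cdots<v_{n_{\ell+1}}$, the correct residues modulo $T$, and no other $v_m$ strictly in $(v_{n_0},v_{n_{\ell+1}})$. The padding indices $n_0,n_{\ell+1}$ function as a $v$-buffer so that the middle block $u_{n_1}<\cdots<u_{n_\ell}$ appears as a consecutive segment of positive terms in the sorted sequence $\langle p_m\rangle$.

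The crux is ruling out interference: some $m\notin\{n_0,\dots,n_{\ell+1}\}$ with $u_m$ positive and strictly between two adjacent $u_{n_j}$'s. By the hypothesis $v_m\le v_{n_0}$ or $v_m\ge v_{n_{\ell+1}}$; I treat the first case (the second is symmetric), assuming for contradiction that $u_m>u_{n_1}$. Then $r_m-r_{n_1}>v_{n_1}-v_m\ge v_{n_1}-v_{n_0}$, and Thm.~\ref{thm : MST size difference terms of LRS} applied to $\langle v_n\rangle$ gives $v_{n_1}-v_{n_0}\ge|\lambda|^{n_1}/n_1^{C_3\log n_1\log(n_0+2)}$. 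For $m\le n_1$, the upper bound $|r_m|+|r_{n_1}|\le 2rR^{n_1}$ contradicts this lower bound once $N$ exceeds a computable threshold, since $(|\lambda|/R)^{n_1}$ grows genuinely exponentially. For $m>n_1$, Thm.~\ref{thm : MST size difference terms of LRS} applied to $\langle u_n\rangle$ forces $|u_m-u_{n_1}|\ge|\lambda|^m/m^{C_3\log m\log(n_1+2)}$, yet $v_m\le v_{n_0}$ together with $u_m>u_{n_1}$ yields $u_m-u_{n_1}\le 2rR^m$; combining produces $(|\lambda|/R)^m\le 2r\,m^{C_3\log m\log(n_1+2)}$, which fails once $m>n_1\ge N$ is beyond a computable bound. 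Positivity and strict ordering of the middle $u_{n_j}$'s follow from the same gap/error balance. Taking $N$ to dominate all these thresholds and also to force the witness to exceed $K$ in $\langle p_m\rangle$, with every constant effectively computable from $M$, $\ell$, $K$, $\lambda$, $R$, and the MST constants, yields effective prodisjunctivity.
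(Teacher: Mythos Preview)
Your argument is correct: the MST separation bounds do let you absorb the $r_n$ discrepancy into the $v$-gaps, and the modular bookkeeping via Weyl equidistribution in each residue class modulo $T$ is exactly what is needed to identify $S_M$ and $N_M$ effectively.

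The paper reaches the same conclusion by a shorter route that you may find instructive. Rather than padding to length $\ell+2$ and running a case split on $m\le n_1$ versus $m>n_1$, it proves one global claim: there is a computable $N'$ such that for all $m_1,m_2\ge N'$, one has $u_{m_1}<u_{m_2}$ if and only if $v_{m_1}<v_{m_2}$. This follows from a single application of Thm.~\ref{thm : MST size difference terms of LRS} to $\langle v_n\rangle$: if the two orders disagreed then $|v_{m_1}-v_{m_2}|\le |r_{m_1}|+|r_{m_2}|$, which contradicts the MST lower bound for large indices. With this order-equivalence in hand, one invokes the hypothesis directly with length $\ell$ (no padding), and items~3 and~4 transfer verbatim from $v$ to $u$ for all $m\ge N'$; the finitely many $m<N'$ cannot interfere because $u_{n_1}\to\infty$ (via Thm.~\ref{thm : MST size LRS} on $v$). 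Your padding and boundary case analysis effectively reprove this equivalence separately at each end of the block instead of once and for all.

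Two cosmetic slips worth tightening: your bounds $|r_m|+|r_{n_1}|\le 2rR^{n_1}$ and $\le 2rR^m$ silently assume $R\ge 1$ (replace $R$ by $\max(1,R)$), and the MST bound you quote for $v_{n_1}-v_{n_0}$ presumes $n_1>n_0$, which is not given (index by $\max(n_0,n_1)$ instead). Neither affects the substance.
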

\begin{proof}
    We claim that for some computable number $N'$, $u_{m_1} < u_{m_2}$ if and only if $v_{m_1} < v_{m_2}$ whenever $m_1, m_2 \ge N'$. 
Suppose not. Then, without loss of generality, $m_1 > m_2$ and $|v_{m_1} - v_{m_2}| < |r_{m_1}| + |r_{m_2}| < 2rR^{m_1}$. 
   But Thm.~\ref{thm : MST size difference terms of LRS} implies that for $m_1 \ge C_4$,
    \begin{align*}
        \log(2r) + m_1\log R &> \log|v_{m_1} - v_{m_2}|\\
        &> m_1 \log|\lambda| - C_3 \log(m_1)^2\log(m_2+2)\\
        &> m_1 \log|\lambda| - C_3 \log(m_1+1)^3 \, ,
    \end{align*}
    which cannot hold for $m_1 \ge N'$ for some computable $N' \in
    \nat$ as $\log|\lambda| > \log|R|$. Our claim therefore follows.

    Assume that $\langle u_n\bmod M\rangle_{n=0}^\infty$ has
    period $T$ (which can be effectively computed).  If
    $s_1,\dots,s_\ell \in S$ and $1 \le j \le \ell$, there exists a
    $t_j \in \{0,\dots,T-1\}$ such that
    $u_{nT +t_j} \equiv s_j \pmod M$ whenever $n$ is large enough.
    Therefore, if $n_1,\dots,n_\ell \in \nat$ are at least $\max\{N,N'\}$
    and satisfy the hypotheses of the lemma, it follows that
    $0 \le u_{n_1} < \cdots < u_{n_\ell}$ and for all $m \in \nat$
    such that $u_{n_1} < u_m < u_{n_\ell}$,
    $m \in \{n_2,\dots,n_{\ell-1}\}$.  In other words, if $p_m = u_{n_1}$,
    then for $2 \le j \le \ell$, $p_{m+j-1} = u_{n_j}$ and
    $p_{m+j-1} = u_{n'T +t_j} \equiv s_j \pmod M$ for some
    $n' \in \nat$.
\end{proof}

As $|\alpha| \ne 0$ and Lem.~\ref{lem : weak normality equivalent
  statement} is only concerned with inequalities $v_{n_1} < v_{n_2}$ and $v_{n_1} > 0$ for natural numbers ${n_1}$ and ${n_2}$, we can scale $v_n$ by $1/|2\alpha|$. 
That is, we can assume that $|\alpha| = 1/2$.
Write $\alpha = \frac{1}{2}e^{i\phi}$ and $\lambda=|\lambda|e^{i\theta}$.
Thus, $v_n = \cos(\theta n + \phi)|\lambda|^n$.

Let us now sketch the method we will use to establish the hypothesis of Lem.~\ref{lem : weak normality equivalent statement} whose proof relies heavily on the following notation.
\begin{definition}\label{definition J}
    For integers $d \ne 0$ and real numbers $0 < \gamma < \delta$, define $\mathcal{J}_d(\gamma, \delta) \subset \mathbb{R}/(2\pi\mathbb{Z})$ as 
    \begin{equation*}
        \mathcal{J}_d(\gamma, \delta) = \Big\{x \in \mathbb{R}/(2\pi\mathbb{Z}): 
        0 < \gamma\cos(x) < \cos(x + d\theta)|\lambda|^d < \delta\cos(x)\Big\} \, .
      \end{equation*}
Moreover, we define  $\mathcal{J}_d(0,\delta)$ to stand for the limit of $\mathcal{J}_d(\gamma,\delta)$ as $\gamma$ tends to $0$.
  \end{definition}
  
In Lem.~\ref{lem : upper bound length intervals}, we can quantify the size of these intervals. 
For a real number $x$, let $|x|_{2\pi}$ denote the distance of $x$ to the nearest integer multiple of $2\pi$.
The proof of Lem.~\ref{lem : upper bound length intervals} can be found in App.~\ref{sec : omitted proofs}.
\begin{lemma}\label{lem : upper bound length intervals}
    One can compute constants $C_5$, $C_6$, $C_7$, and $C_8$ such that for all $0 < \gamma < \delta < \sqrt{|\lambda|}$ and all $d \ge 1$, $\mathcal{J}_d(\gamma,\delta)$ consists of a single interval,
    \begin{equation*}
        C_6\frac{\delta - \gamma}{|\lambda|^dd^{C_7}} <
        |\mathcal{J}_d(\gamma,\delta)| < C_5\frac{\delta -
          \gamma}{|\lambda|^d} \, ,
    \end{equation*}
    and $|x-(-d\theta \pm \pi/2)|_{2\pi} < C_8 |\lambda|^{-d}$ for any
    $x \in \mathcal{J}_d(\gamma,\delta)$.
\end{lemma}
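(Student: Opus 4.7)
The plan is to reduce the defining double inequality to a linear condition on $\tan(x)$, after which single-intervalness, the two-sided length bound, and the localisation claim all follow by elementary calculus on $\arctan$, with Baker's theorem supplying the one indispensable Diophantine input.

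First, the condition $0 < \gamma \cos(x) < \cos(x+d\theta)|\lambda|^d$ forces $\cos(x) > 0$, so I may restrict attention to the open half-circle $(-\pi/2, \pi/2) \subset \mathbb{R}/(2\pi\mathbb{Z})$ and divide through by $\cos(x)$. Using $\cos(x+d\theta) = \cos(d\theta)\cos(x) - \sin(d\theta)\sin(x)$, the defining inequality becomes
\begin{equation*}
\frac{\gamma}{|\lambda|^d} \;<\; \cos(d\theta) - \sin(d\theta)\tan(x) \;<\; \frac{\delta}{|\lambda|^d}.
\end{equation*}
By Lemma~\ref{lem : dominant roots are nice}, $\theta/\pi$ is irrational, so $\sin(d\theta) \neq 0$ and the inequality rearranges to $\tau_- < \tan(x) < \tau_+$ for explicit real numbers with $|\tau_+ - \tau_-| = (\delta - \gamma)/(|\lambda|^d |\sin(d\theta)|)$ (their ordering swaps with the sign of $\sin(d\theta)$). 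Since $\tan$ restricts to a strictly increasing bijection $(-\pi/2, \pi/2) \to \mathbb{R}$, this exhibits $\mathcal{J}_d(\gamma,\delta)$ as a single interval with endpoints $\arctan(\tau_\pm)$.

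The two-sided length bound is then pure calculus on $\arctan$. From $\delta < \sqrt{|\lambda|}$, $|\lambda|>1$, and $d \geq 1$, one has $\gamma/|\lambda|^d,\;\delta/|\lambda|^d \leq 1$, so $|\tau_\pm| \leq 2/|\sin(d\theta)|$ and hence $1+\tau_\pm^2 \leq 5/\sin^2(d\theta)$. Applying the mean-value theorem with $\arctan'(t)=1/(1+t^2)$ yields the preliminary lower estimate
\begin{equation*}
|\mathcal{J}_d(\gamma,\delta)| \;\geq\; \frac{(\delta - \gamma)\,|\sin(d\theta)|}{5\,|\lambda|^d}.
\end{equation*}
For the upper bound I split on $|\cos(d\theta)|$: if $|\cos(d\theta)| < 1/2$ then $|\sin(d\theta)| > 1/2$ and the trivial estimate $|\mathcal{J}_d| \leq |\tau_+ - \tau_-|$ already gives the claimed $O((\delta-\gamma)/|\lambda|^d)$; if $|\cos(d\theta)| \geq 1/2$, both $|\tau_\pm|$ are of order $|\cot(d\theta)|$ (after absorbing finitely many small $d$ into the constants), so the contraction factor $1/(1+\tau_\pm^2) = O(\sin^2(d\theta)/\cos^2(d\theta))$ cancels the blow-up of $|\tau_+ - \tau_-|$ and yields the same bound.

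Finally, converting the preliminary lower estimate into the required $d^{-C_7}$ polynomial denominator is the only non-elementary step, and this is where Baker's theorem enters. By Lemma~\ref{lem : dominant roots are nice}, the algebraic number $\lambda/\overline{\lambda} = e^{2i\theta}$ is not a root of unity, so $\Lambda_d := (\lambda/\overline{\lambda})^d - 1$ is non-zero for every $d \geq 1$; invoking Theorem~\ref{thm : Matveev} with $M=1$ and $B=d$ produces a computable constant $c$ with $|\Lambda_d| \geq d^{-c}$. Since $|\Lambda_d| = 2|\sin(d\theta)|$, this gives $|\sin(d\theta)| \geq d^{-c}/2$, delivering the stated lower bound with $C_7 := c$. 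The localisation $|x - (-d\theta \pm \pi/2)|_{2\pi} < C_8 |\lambda|^{-d}$ is then immediate: the centre of the interval $(\tau_-, \tau_+)$ is a solution of $\tan(x) = \cot(d\theta)$, namely $\pi/2 - d\theta \pmod \pi$, whose representative in $(-\pi/2, \pi/2) \pmod{2\pi}$ is one of $-d\theta \pm \pi/2$, and both endpoints lie within $|\tau_+ - \tau_-| \leq 2(\delta-\gamma)/|\lambda|^d$ of this centre under $\arctan$. The main obstacle throughout is the quantitative bound on $|\sin(d\theta)|$: without Baker's theorem one could not prevent $d\theta$ from approaching an integer multiple of $\pi$ at a subexponential rate, and the lemma would fail in its quantitative form.
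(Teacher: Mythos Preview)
Your approach via the $\tan$/$\arctan$ parametrisation is genuinely different from the paper's, which instead rewrites the level set $\cos(x+d\theta)|\lambda|^d=\eta\cos(x)$ as $e^{2ix}=-(\overline{\lambda}^d-\eta)/(\lambda^d-\eta)$ and then estimates $|e^{2ix_1}-e^{2ix_2}|$ directly. The two routes are equivalent under the hood: your $1+\xi^2$ equals $|\lambda^d-\eta'|^2/\bigl(|\lambda|^{2d}\sin^2(d\theta)\bigr)$ for the intermediate $\eta'\in(\gamma,\delta)$, and the paper's factor $|e^{2id\theta}-1|$ is your $2|\sin(d\theta)|$, so Baker's theorem enters at the same point and for the same reason. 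Your route is more elementary; the paper's complex-exponential identity has the advantage of packaging the cancellation automatically, so no case split on $|\cos(d\theta)|$ is needed.

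Two points need tightening. First, the phrase ``absorbing finitely many small $d$ into the constants'' conceals an obligation: for those $d$ you must still exhibit a bound on $|\mathcal{J}_d(\gamma,\delta)|\cdot|\lambda|^d/(\delta-\gamma)$ that is \emph{uniform in $\gamma,\delta$}. The clean fix within your own framework is to compute $1+\xi^2$ exactly as above and use $|\lambda^d-\eta'|\ge(1-|\lambda|^{-1/2})|\lambda|^d$, which gives the upper bound in one stroke for all $d\ge 1$ and makes the case split unnecessary. Second, your localisation paragraph is not correct as written: $\cot(d\theta)$ is \emph{not} the centre of $(\tau_-,\tau_+)$ but rather the value at $\eta=0$, which lies \emph{outside} $(\tau_-,\tau_+)$ whenever $\gamma>0$; and the asserted bound $|\tau_+-\tau_-|\le 2(\delta-\gamma)/|\lambda|^d$ is missing the factor $1/|\sin(d\theta)|$. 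The simplest repair, which is what the paper does, is to observe that $x=-d\theta\pm\pi/2$ is precisely the $\eta=0$ endpoint of the enclosing interval $\mathcal{J}_d(0,\sqrt{|\lambda|})\supset\mathcal{J}_d(\gamma,\delta)$, so the localisation follows from the already-established upper bound with $C_8=C_5\sqrt{|\lambda|}$.
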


We now continue with the most technical step in the proof of Thm.~\ref{thm : MSO decidable}: establishing a continuous version of the hypothesis of Lem.~\ref{lem : weak normality equivalent statement}.
Items 3 and 4 of Lem.~\ref{lem : weak normality equivalent statement} involve terms of the form $v_n$, and we can divide them by $|\lambda|^{n_1}$. 
Further, for $2 \le j \le \ell$, we set $b_j := n_j - n_1$ and $x = n\theta + \phi$.
Then, item 3 turns into
\begin{equation}\label{eq : item 3 continuous}
    0 < \cos(x) < \cos(x + b_2\theta)|\lambda|^{b_2} < \cdots < \cos(x + b_\ell\theta)|\lambda|^{b_\ell}
\end{equation}
and item 4 turns into 
\begin{equation}\label{eq : item 4 continuous}
    \forall d \in \mathbb{Z} \colon \cos(x) < \cos(x + d\theta)|\lambda|^d < \cos(b_\ell\theta)|\lambda|^{b_\ell} \Longrightarrow d \in \{ b_2,\dots,b_{\ell-1}\}.
\end{equation}
We want to find an interval $\mathcal{I} \subset \mathbb{R}/(2\pi\mathbb{Z})$ and numbers $b_j$ such that $b_j \equiv t_j - t_1 \pmod{T}$ (for item 2),~\eqref{eq : item 3 continuous} holds for all $x \in \mathcal{I}$, and~\eqref{eq : item 4 continuous} hold for ``most'' $x \in \mathcal{I}$.
This leads us to the following:
\begin{lemma}\label{lem : density weakly normal}
    Let $\ell, T \ge 2$ and $t_1, \dots, t_\ell \in \{0,\dots,T-1\}$. Then there are an interval $\mathcal{I} \subset \mathbb{R}/(2\pi\mathbb{Z})$, natural numbers $D, b_2,\dots,b_\ell \ge 1$, and real numbers $1 < \delta_2 < \cdots < \delta_\ell < \sqrt{|\lambda|}$ such that
    \begin{enumerate}
        \item[(a)] for all $2 \le j\le \ell$, $b_j \equiv t_j - t_1 \pmod T$;
        \item[(b)] for all $x \in \mathcal{I}$,
        \begin{align}\label{eq : cos seperated result}
            \begin{split}
                0 < \cos(x) &< \cos(x + b_2\theta)|\lambda|^{b_2} < \delta_2 \cos(x) \\
                & < \cos(x + b_3\theta)|\lambda|^{b_3} < \delta_3 \cos(x) \\
                &\quad\quad \vdots \\
                & < \cos(x + b_\ell\theta )|\lambda|^{b_\ell} < \delta_\ell \cos(x)\,;
            \end{split}
        \end{align}
        \item[(c)] for all integers $d < D$ not in the set $\{b_2,\dots,b_{\ell-1}\}$, $\mathcal{I} \cap \mathcal{J}_d(1, \delta_\ell) = \emptyset$;
        \item[(d)] $ \sum_{d = D}^\infty |\mathcal{J}_d(1, \delta_\ell)| < |\mathcal{I}|$.
    \end{enumerate}
\end{lemma}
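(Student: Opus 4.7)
My strategy is to construct $b_2 < \cdots < b_\ell$ and the arc $\mathcal{I}$ by an inductive nesting argument, then pick $D$ via a geometric tail bound for (d), and verify (c) by combining avoidance for small $d$ with Matveev's theorem for large $d$. The underlying ingredient is Weyl equidistribution of $\langle -(kT+r)\theta \bmod 2\pi \rangle_{k \in \nat}$ in $\mathbb{R}/(2\pi\mathbb{Z})$ for every residue $r$ modulo $T$, which holds because $\theta/\pi \notin \rat$ by Lem.~\ref{lem : dominant roots are nice}, making $T\theta/(2\pi)$ irrational.

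\textbf{Setup and construction of the $b_j$.} I would first choose $\delta_j := 1 + (j-1)\epsilon$ with $\epsilon > 0$ small enough that $\delta_\ell < \sqrt{|\lambda|}$ and, using the upper bound in Lem.~\ref{lem : upper bound length intervals}, $\sum_{d \geq 1} |\mathcal{J}_d(1,\delta_\ell)| < 2\pi$. I then fix a threshold $d^*$ (calibrated below) and let $\mathcal{I}^* \subset \mathbb{R}/(2\pi\mathbb{Z})$ be an open arc of positive measure disjoint from $\bigcup_{d=1}^{d^*} \mathcal{J}_d(1,\delta_\ell)$; such an arc exists since those intervals do not cover the whole circle. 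Now I build $b_j$ and nested arcs $\mathcal{I}_{j-1} \subset \mathcal{I}^*$ inductively. To initialize, pick $b_2 \equiv t_2 - t_1 \pmod T$ with $b_2 > d^*$ so that $\mathcal{J}_{b_2}(1,\delta_2)$ — a single arc located within $C_8|\lambda|^{-b_2}$ of $-b_2\theta \pm \pi/2$ and of length at most $C_5\epsilon|\lambda|^{-b_2}$, by Lem.~\ref{lem : upper bound length intervals} — is contained in $\mathcal{I}^*$; density in the relevant residue class makes this possible. Set $\mathcal{I}_1 := \mathcal{J}_{b_2}(1,\delta_2)$. Given a non-empty open arc $\mathcal{I}_{j-1}$, density provides $b_{j+1} \equiv t_{j+1} - t_1 \pmod T$ with $b_{j+1} > \max(b_j, d^*)$ large enough that (i) the center of $\mathcal{J}_{b_{j+1}}(\delta_j,\delta_{j+1})$ sits deep inside $\mathcal{I}_{j-1}$, and (ii) its width is less than $|\mathcal{I}_{j-1}|/3$. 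Then $\mathcal{J}_{b_{j+1}}(\delta_j,\delta_{j+1}) \subset \mathcal{I}_{j-1}$; put $\mathcal{I}_j := \mathcal{J}_{b_{j+1}}(\delta_j,\delta_{j+1})$. After $\ell-1$ steps, set $\mathcal{I} := \mathcal{I}_{\ell-1}$; items (a) and (b) follow by construction.

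\textbf{Verifying (c) and (d).} For (d), the upper bound in Lem.~\ref{lem : upper bound length intervals} gives $\sum_{d \geq D}|\mathcal{J}_d(1,\delta_\ell)| = O(|\lambda|^{-D})$, which becomes smaller than $|\mathcal{I}|$ for $D$ chosen large enough. For (c), take any $d < D$ lying outside the chosen indices. If $1 \leq d \leq d^*$, then $\mathcal{I} \subset \mathcal{I}^*$ is disjoint from $\mathcal{J}_d(1,\delta_\ell)$ by the definition of $\mathcal{I}^*$. If $d > d^*$, I apply Matveev's Thm.~\ref{thm : Matveev} to $(\lambda/|\lambda|)^{d-b_\ell} \pm 1$ — both nonzero since $\theta/\pi$ is irrational — to get $\min\bigl(|(d-b_\ell)\theta|_{2\pi},\, |(d-b_\ell)\theta - \pi|_{2\pi}\bigr) \geq c_1|d-b_\ell|^{-c_2}$ for computable constants $c_1, c_2$. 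Together with the proximity bounds $C_8|\lambda|^{-d}$, $C_8|\lambda|^{-b_\ell}$ and the half-width bounds $O(|\lambda|^{-\min(d,b_\ell)})$ from Lem.~\ref{lem : upper bound length intervals}, this polynomial separation exceeds the required gap between the two intervals once $d^*$ is taken large enough that $|\lambda|^{d^*}$ comfortably dominates any polynomial in $D$.

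\textbf{Main obstacle.} The hardest point is calibrating $\epsilon$, $d^*$, $D$, and the successive growth rates of the $b_j$'s in a mutually consistent fashion. Matveev's bound is only polynomial in $|d-b_\ell|$ while the interval widths decay exponentially, so in the transitional range $d$ slightly above $d^*$ the two estimates are in direct tension: one must choose $b_\ell$ large enough to fit the nested construction yet keep $D$ (forced by (d)) below the threshold $|\lambda|^{d^*/c_2}$ where Matveev suffices. Everything else reduces to a routine application of equidistribution, the explicit width estimates of Lem.~\ref{lem : upper bound length intervals}, and a geometric-tail bound.
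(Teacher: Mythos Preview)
Your overall architecture—nest the arcs $\mathcal{J}_{b_j}$ inductively, invoke Matveev for separation, bound the tail geometrically—is the same as the paper's, but the calibration you flag as the ``Main obstacle'' is a genuine unresolved gap, and the specific scheme you propose (fix a single threshold $d^*$ upfront and compare only to $b_\ell$) cannot close it. For~(d) you are forced to take $D$ with $|\lambda|^{-D}\lesssim|\mathcal{I}|\asymp|\lambda|^{-b_\ell}$, hence $D\gtrsim b_\ell$. Your Matveev argument for~(c) in the range $d^*<d<D$ requires $c_1|d-b_\ell|^{-c_2}$ to beat the width $|\lambda|^{-d}$; at $d=d^*+1$ this demands $|\lambda|^{d^*}\gg b_\ell^{c_2}$, i.e.\ $d^*\gtrsim c_2\log b_\ell$. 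But your nesting places $\mathcal{J}_{b_{j+1}}$ inside an arc of width $\asymp|\lambda|^{-b_j}$, so any admissible $b_{j+1}$ (whether found by bare Weyl equidistribution or by the effective Lem.~\ref{lem : smallest n in interval}) is at least of order $|\lambda|^{b_j}$. Consequently $b_\ell$ is an iterated exponential in $b_2>d^*$, and $\log b_\ell\gg d^*$ as soon as $\ell\ge 3$; the inequality you need is unattainable.

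The paper dissolves this circularity not by fixing one $d^*$ but by \emph{refreshing the avoidance threshold at every inductive step}. Having built $\mathcal{I}_{k-1}$, it first applies Lem.~\ref{lem : better subintervals} to extract a sub-arc $\mathcal{I}_\varepsilon\subset\mathcal{I}_{k-1}$ of length $\varepsilon$ that already avoids every $\mathcal{J}_d(1,\delta_\ell)$ with $d<\varepsilon^{-1/2}$, and only then uses Lem.~\ref{lem : smallest n in interval} to choose $b_k$ with $b_k\asymp\varepsilon^{-C_9}$ and $\mathcal{J}_{b_k}(\delta_{k-1},\delta_k)\subset\mathcal{I}_\varepsilon$. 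Now the avoidance threshold $\varepsilon^{-1/2}$ and the new index $b_k$ are both fixed powers of $\varepsilon^{-1}$, so the required Matveev comparison $|\lambda|^{\varepsilon^{-1/2}}\gg b_k^{c_1}$ holds for all small $\varepsilon$ irrespective of the Baker constants, and the induction closes. This per-step re-centering via Lem.~\ref{lem : better subintervals} is precisely the idea your proposal is missing.
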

Translating (b) in the notation of intervals $\mathcal{J}$, we have that
$$
\mathcal{I} \subseteq (-\pi/2,\pi/2) \cap \mathcal{J}_{b_2}(1, \delta_2) \cap \mathcal{J}_{b_3}(\delta_2, \delta_3)\cap \cdots \cap \mathcal{J}_{b_3}(\delta_{\ell-1}, \delta_\ell).
$$

The proof of Lem.~\ref{lem : density weakly normal} is in App.~\ref{sec : omitted proofs}.
The remaining tools we need for this proof are listed in Sec.~\ref{sec
  : Proof of weak normality}.
Finally, we derive Thm.~\ref{thm : MSO decidable} from Lem.~\ref{lem : density weakly normal}.
But first, in the upcoming section we go through an example to recap the construction.

\subsection{An Extended Example}\label{sec : example}

Let $\langle u_n\rangle_{n=0}^\infty$ be the sequence \eqref{eq :
  example LRS} from Sec.~\ref{sec : intro} and assume $M = 5$.
Let us study the sequence $u_n = \frac{1}{2}(2+i)^n + \frac{1}{2}(2-i)^n + 2^n$ modulo $5$.

From~\eqref{eq : example intro modular behavior}, we conclude that
$S_5 = \{0,2,4\}$ and $T = 4$ (that is, $\langle u_n \bmod 5\rangle_{n=0}^\infty$ is ultimately periodic with period $T=4$, and $\{0, 2, 4\}$ are exactly the congruence classes that appear infinitely often in $\langle u_n \bmod 5\rangle_{n=0}^\infty$).
Thus for all $m \ge 0$, $p_m\bmod 5 \in \{0,2,4\}$, where $\langle p_m\rangle_{m=0}^\infty$ is the enumeration of $P := \{u_n \colon n \in \nat\} \cap \nat$.
Then Thm.~\ref{thm : intro weak normality} asserts that every
$\langle s_1,\dots,s_\ell\rangle \in {S_5}^*$ appears in
$\langle p_m \bmod 5\rangle_{m=0}^\infty$ infinitely often as a factor. 
We will show that this indeed holds when $\ell = 3$ and $\langle s_1,s_2,s_3 \rangle = \langle 0,0,0 \rangle$.

Let us compute $t_1$, $t_2$, and $t_3$ (the conjugacy classes modulo $T = 4$ such that for large enough $n$, $u_{Tn + t_i} \equiv s_i \bmod 5$). 
In view of \eqref{eq : example intro modular behavior}, we are forced to take $t_i = 3$ for $i = 1,2,3$ as $s_i = 0$.
Thus, to find $\langle0,0,0\rangle$ in $\langle
p_m\rangle_{m=0}^\infty$, we want to find $n_1,n_2,n_3 \in \nat$
congruent to $3$ modulo $4$ such that $0 < u_{n_1} < u_{n_2} <
u_{n_3}$, and if $u_{n_1} < u_m < u_{n_3}$ for some $m \ge 0$, then
$m \in \{n_2\}$ (i.e., $m = n_2$).

By \eqref{eq : example LRS poly exp form}, the dominant part $\langle v_n\rangle_{n=0}^\infty$ of $\langle u_n\rangle_{n=0}^\infty$ is given by $v_n = \frac{1}{2}(2+i)^n + \frac{1}{2}(2-i)^n$ and the non-dominant part $\langle r_n\rangle_{n=0}^\infty$ by $r_n = 2^n$.
As shown in Lem.~\ref{lem : weak normality equivalent statement}, we can work with $v_n = \alpha \lambda^n + \overline{\alpha} \overline{\lambda}^n = \cos(n\theta + \phi)|\lambda|^n$ instead of $u_n$ for large enough~$n$.  
Here, we have that $\lambda = e^{i\theta}|\lambda| = 2+i$ and $\alpha = \frac{1}{2}e^{i\phi} = \frac{1}{2}$.
Thus, in this example, $\phi = 0$.

Hence, for a given $N \in \nat$ (for simplicity, let us take $N = 0$) we wish to find $n_1,n_2,n_3 \ge N$ that are congruent to $3$ modulo $4$, satisfy
\begin{equation*}
  0 < \cos(n_1\theta) < \cos(n_2\theta )|\lambda|^{n_2-n_1} <
  \cos(n_3\theta)|\lambda|^{n_3-n_1} \, ,
\end{equation*}
and such that $m = n_2$ for any $m \in \nat$ satisfying the inequality $\cos(n_1\theta) < \cos(m\theta)|\lambda|^{m-n_1} < \cos(n_3 \theta)|\lambda|^{n_3-n_1}$.
This is the statement of Lem.~\ref{lem : weak normality equivalent statement}.  

Next, we aim to find $b_2, b_3 \in \nat$ such that $n_1 = n$, $b_2 = n_2 - n$, and $b_3 = n_3 - n$ for some $n \ge N$ congruent to $3$ modulo $4$ that satisfies our hypotheses.
To ensure that for $j \in \{2,3\}$, $n_j \equiv t_j \equiv 3 \pmod{4}$, we have to require that $b_j \equiv 0 \pmod{4}$. 

In order to solve this discrete problem (find natural numbers $n_1,n_2,n_3$ meeting these constraints), we first want to solve a continuous variant of this problem: find $b_2$ and $b_3$ and an \emph{interval} $\mathcal{I} \subset \mathbb{R}/(2\pi\mathbb{Z})$ such that for $x \in\mathcal{I}$, these properties hold ``often''.  
We will construct an open, non-empty interval $\mathcal{I} \subset \mathbb{R}/(2\pi\mathbb{Z})$ such that for all $x \in \mathcal{I}$,
\begin{equation}\label{eq : inequalilty x example}
    0 < \cos(x) < \cos(x + b_2\theta)|\lambda|^{b_2} < \cos(x + b_3\theta)|\lambda|^{b_3}\,.
\end{equation}
We cannot ensure that for all $x \in \mathcal{I}$ and $m \in \mathbb{Z}$, $\cos(x) < \cos(x + m\theta)|\lambda|^m < \cos(x + b_3\theta)|\lambda|^{b_3}$ implies that $m = b_2$. 
However, we can ensure that it happens for ``many'' $x \in \mathcal{I}$, including for infinitely many numbers of the form $x = n\theta$, where $n \equiv 3 \pmod 4$.

We can translate the last part into the notation of the intervals  $\mathcal{J}_d(\gamma, \delta)$.
Then, arbitrarily setting $\delta_2 = 1.95$ and $\delta_3 = 2$, we strengthen~\eqref{eq : inequalilty x example} in the form of item (b) in Lem.~\ref{lem : density weakly normal}: we want that for all $x \in \mathcal{I}$,
\begin{equation*}
    0 < \cos(x) < \cos(x + b_2\theta)|\lambda|^{b_2} <1.95 \cos(x) < \cos(x + b_3\theta)|\lambda|^{b_3} < 2 \cos(x)\,.
\end{equation*}
Thus, $I \subseteq (-\pi/2,\pi/2)$ (as $\cos(x) > 0$ and $x \in \mathcal{I}$), $I \subseteq \mathcal{J}_{b_2}(1, 1.95)$ (as $1\cdot \cos(x) < \cos(x + b_2\theta)|\lambda|^{b_2} <1.95 \cos(x)$), and similarly, $I \subseteq \mathcal{J}_{b_3}(1.95, 2)$.

Dealing with items (c) and (d) of Lem.~\ref{lem : density weakly normal} is harder. 
In Lem.~\ref{lem : upper bound length intervals}, we have proven many useful results on the structure of the sets $\mathcal{J}_d(\gamma, \delta)$.
For $d \in \mathbb{Z}$ and small enough $\delta$, if $\mathcal{J}_d(\gamma, \delta)$ is non-empty, it is a single interval and $|\mathcal{J}_d(\gamma, \delta))| = O((\delta-\gamma)|\lambda|^{-d})$.
Hence, item (d) of Lem.~\ref{lem : density weakly normal} can easily be estimated using a geometric series:
\begin{equation*}
    \sum_{d = D}^\infty |\mathcal{J}_d(1, \delta_\ell)| \le O(\delta_\ell|\lambda|^{-D})\,.
\end{equation*}
Moreover, every point in $\mathcal{J}_d(\gamma, \delta))$ is at most $O(\delta|\lambda|^{-d})$ away from $-d\theta \pm \pi/2$ in $\mathbb{R}/(2\pi\mathbb{Z})$.
We illustrate this in Fig.~\ref{fig : examples of intervals}.

\begin{figure}
    \begin{tikzpicture}[scale=2.3]
    \draw (-1.57079,0) -- (1.57079,0);
    \draw (-1.57079, -0.2) -- (-1.57079, 0.25);
    \draw (1.57079, -0.2) -- (1.57079, 0.25);
    \node at (-1.57079, -0.3) {$-\pi/2$};
    \node at (1.57079, -0.3) {$\pi/2$};
    \foreach \i in {-1,...,1}
    {
        \draw[ultra thin] (\i, 0.25) -- (\i, -0.075);
        \node at (\i, -0.15) {\small{$\i$}};
    }
\draw[fill =black, color = black] (1.10714871779409,0.000000000000000) circle (0.03);
\draw [cyan,ultra thick] plot (0.785398163397448,0.000000000000000) -- (-0.785398163397448,0.000000000000000);
\draw[fill =black, color = black] (0.643501108793284,0.0750000000000000) circle (0.03);
\draw [Green,ultra thick] plot (0.463647609000806,0.0750000000000000) -- (0.000000000000000,0.0750000000000000);
\draw[fill =black, color = black] (0.179853499792478,0.150000000000000) circle (0.03);
\draw [Magenta,ultra thick] plot (0.0906598872007451,0.150000000000000) -- (-0.0906598872007451,0.150000000000000);
\draw[fill =black, color = black] (-0.283794109208328,0.225000000000000) circle (0.03);
\draw [red,ultra thick] plot (-0.321750554396642,0.225000000000000) -- (-0.394791119699762,0.225000000000000);
\end{tikzpicture}
    \caption{Let $\lambda = 1+2i$. Then for $d = 1,2,3,4$, $\mathcal{J}_d(1, 3)$ are drawn in cyan, green, magenta, and red, respectively, and $|\mathcal{J}_d(1, 3)|$ are $\pi/2$, $0.464$, $0.182$, $0.073$, respectively. 
    As some of the intervals overlap, we have stacked them vertically for visual purposes. The points in black mark out $-d\theta \pm\pi/2$ for the corresponding value of $d$.}
    \label{fig : examples of intervals}
\end{figure}
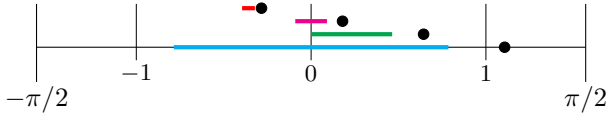

As we have guessed $\delta_2 = 1.95$ and $\delta_3 = 2$, we will construct $\mathcal{I}$ inductively as 
\begin{equation*}
    \mathcal{I} := \mathcal{I}_3 := \mathcal{J}_{b_3}(1.95,2) \subseteq \mathcal{I}_2 := \mathcal{J}_{b_2}(1,1.95)\subseteq \mathcal{I}_1 \, .
\end{equation*}
At each step $i=1,2,3$, the items (a) and (b) hold for $j \le i$ while items (c) and (d) also hold for the interval $\mathcal{I}_i$.

First, for $i = 1$, we take $\mathcal{I}_1 := (-1.1, 1.1) \subset (-\pi/2, \pi/2)$.
Then, for an integer $d < 0$ and $x \in \mathcal{I}_1$, we have that $\cos(x + \theta d)|\lambda|^{d} < \cos(x)$.
So in item (c), we can take $D = 1$. 
As $\delta_3 = 2$,
we can prove that item (d) is satisfied:
\begin{equation}\label{eq : strong inequality sizes J_d}
    \Big|\bigcup_{d = 1}^\infty \mathcal{J}_d(1, \delta_3)\Big| \le \sum_{d = 1}^\infty |\mathcal{J}_d(1, \delta_3)| <  |\mathcal{I}_1| \, .
\end{equation}

For $i = 2$, recall that $\delta_2 = 1.95$ and that $b_2$ has to satisfy $\mathcal{J}_{b_2}(1, 1.95) \subset \mathcal{I}_1$ and $b_2 \equiv 0 \pmod 4$. 
We pick $b_2 = 4$, being the smallest possible choice for $b_2$.
As shown in Fig.~\ref{fig : example interval 2}, we have that $\mathcal{I}_2$ is indeed in $\mathcal{I}_1$.
For all integers $d \le 20$ not equal to either $0$ or $4$, $\mathcal{I}_2 \cap \mathcal{J}_d(1, 2)$ is empty.
As $\sum_{d = 21}^\infty |\mathcal{J}_d(1, 2)| < |\mathcal{J}_4(1,
1.95)|$, $\mathcal{I}_2 := \mathcal{J}_4(1, 1.95)$ is not covered by
the intervals $\mathcal{J}_d(1, 2)$ with $d \notin \{0, 4\}$.
Hence, $D = 21$ is valid choice in this step of our inductive procedure.

\begin{figure}
    \begin{tikzpicture}[scale=2.52]
    \draw (-1.5708,-0.15) -- (1.5708,-0.15);
    \draw (-1.5708, -0.175) -- (-1.5708, 0);
    \draw (1.5708, -0.175) -- (1.5708, 0);
    \node at (-1.5708, -0.225) {$-\pi/2$};
    \node at (1.5708, -0.225) {$\pi/2$};
    \draw [thin, dotted] plot (-0.3218, 0) -- (-0.3218, -0.225);
    \draw [thin, dotted] plot (-0.3588, 0) -- (-0.3588, -0.225);
    \node at (-0.3403, -0.325) {$\mathcal{J}_4(1,1.95)$};
    \draw [thin, dotted] plot (-1.1, 0.15) -- (-1.1, -0.2);
    \draw [thin, dotted] plot (1.1, 0.15) -- (1.1, -0.2);
    \draw [thin, dotted] plot (-1.1, 0.15) -- (1.1, 0.15);
    \node at (0, 0.25) {$\mathcal{I}_1$};
\draw [Red, ultra thick] plot (0.7854, -0.03) -- (0.0, -0.03);
\draw [Red, ultra thick] plot (0.4636, -0.06) -- (0.245, -0.06);
\draw [Red, ultra thick] plot (0.0907, -0.09) -- (0.0, -0.09);
\draw [Green, ultra thick] plot (-0.3218, -0.12) -- (-0.3588, -0.12);
\draw [fill =Red, color =Red] (-0.7794,-0.12) circle (0.01);
\draw [fill =Red, color =Red] (-1.218,-0.12) circle (0.01);
\draw [fill =Red, color =Red] (1.4674,-0.12) circle (0.01);
\draw [fill =Green, color =Green] (1.0045,-0.12) circle (0.01);
\draw [fill =Red, color =Red] (0.5405,-0.12) circle (0.01);
\draw [fill =Red, color =Red] (0.0764,-0.12) circle (0.01);
\draw [fill =Red, color =Red] (-0.3875,-0.12) circle (0.01);
\draw [fill =Green, color =Green] (-0.8513,-0.12) circle (0.01);
\draw [fill =Red, color =Red] (-1.315,-0.12) circle (0.01);
\draw [fill =Red, color =Red] (1.3629,-0.12) circle (0.01);
\draw [fill =Red, color =Red] (0.8993,-0.12) circle (0.01);
\draw [fill =Green, color =Green] (0.4356,-0.12) circle (0.01);
\draw [fill =Red, color =Red] (-0.028,-0.12) circle (0.01);
\draw [fill =Red, color =Red] (-0.4917,-0.12) circle (0.01);
\draw [fill =Red, color =Red] (-0.9553,-0.12) circle (0.01);
\draw [fill =Green, color =Green] (-1.419,-0.12) circle (0.01);
\end{tikzpicture}
    \caption{We drew the intersection of $\mathcal{I}_1$ and $\mathcal{J}_d(1, 1.95)$ for $d = 1,\dots,20$ in red when $d \not\equiv 0 \pmod 4$ and in green when $d\equiv 0 \pmod 4$. 
    For ease of visibility, $\mathcal{J}_d(1, 2)$ is positioned
    higher for $d=1,2,3$, and for intervals $\mathcal{J}_d(1, 2)$ that
    are too small to draw, their position is marked out with a dot.}
    \label{fig : example interval 2}
\end{figure}
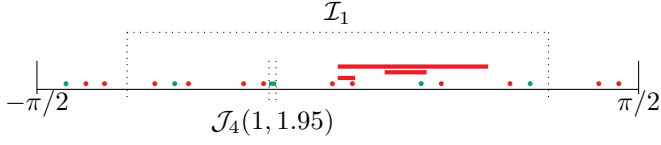

For $b_3$, recall that $b_3 \equiv 0 \pmod{4}$.
We find that $\mathcal{J}_b(1.95, 2) \cap \mathcal{I}_2$ is non-empty for $b = 0,4,38, 99, 160, 309, 370,\dots$. 
The smallest such $b$ that is not equal to $0$ or $4$ (which are
already in use) and congruent to $0$ modulo $4$ is $160$.
Then,
\begin{equation*}
    \sum_{d=1, d\notin \{4, 160\}}^\infty \big|\mathcal{J}_d(1, 2) \cap
    \mathcal{J}_{160}(1.95, 2)\big| < \big|\mathcal{J}_{160}(1.95, 2)\big| 
\end{equation*}
allows us to take $D  =160$ and $\mathcal{I} := \mathcal{J}_{160}(1.95, 2)$.
This interval $\mathcal{I}$ is tiny: it has length approximately $5.7 \cdot 10^{-58}$.
However, it satisfies the constraints in Lem.~\ref{lem : density weakly normal}.

Now we must show that there is some $x = n\theta \in \mathcal{I}$ such
that $n \equiv 3 \pmod 4$ and item~4 of Lem.~\ref{lem : weak
  normality equivalent statement} holds, as the three other conditions
are already satisfied.

Recall that for a real number $x$, $|x|_{2\pi}$ denotes the distance from $x$ to the nearest integer multiple of $2\pi$.
If $x = n\theta$ and $n\ge N$, then items 1, 2 and 3 of Lem.~\ref{lem : weak normality equivalent statement} are satisfied. 
Assume that for $n$ as above item 4 in Lem.~\ref{lem : weak normality equivalent statement} is not satisfied. 
Then $n\theta \in \mathcal{J}_d(1, 2)$ for some $d \in \mathbb{Z}\setminus\{0, 4, 160\}$. 
Hence, combining the information from Lem.~\ref{lem : upper bound length intervals} to the effect that $\mathcal{J}_d(1, 2)$ is close to $-d\theta \pm \pi/2$ modulo $2\pi$, we have:
\begin{equation}\label{eq : example Baker end}
    c_1 |\lambda|^{-d} > |n\theta  - (-d\theta \pm \pi/2)|_{2\pi} > |d + n|^{-c_2}
\end{equation}
for two constants $c_1$ (derived from Lem.~\ref{lem : upper bound length intervals}) and $c_2 > 0$ (derived from Thm.~\ref{thm : Matveev}).

As there are many $n\theta$ in $\mathcal{I}$ that are fairly
``evenly'' distributed, we are able to prove that \eqref{eq : example Baker end} cannot hold for all $n$. 
In other words, there must be some $n$ for which
\begin{equation*}
    0 < v_n < v_{n+4} < v_{n+160}
\end{equation*}
and $v_n < v_m < v_{n+160}$ implies that $m = n+4$.
Then, using the strategy of Lemma~\ref{lem : weak normality equivalent statement}, 
we translate this back to the sequence $\langle u_n\rangle_{n=0}^\infty$ to give the required result when $n$ is large enough.
In particular, we can calculate that when taking
\begin{equation*}
    n =218085867698737188268427463501308698889728969450963229999559\,,
\end{equation*}
 we have that $n\theta \in \mathcal{I}$, $n \equiv 3 \pmod 4$, and $u_n < u_m < u_{n+160}$ implies that $m = 4$.
 Thus, for some value $m \in \nat$, $\langle p_m, p_{m+1}, p_{m+2} \rangle = \langle u_n, u_{n+4}, u_{n+160} \rangle$ where all three terms are divisible by $5$.
 Thus, the pattern $\langle 0, 0, 0 \rangle$ does indeed appear in
 $\langle p_m\bmod 5 \rangle_{m=0}^\infty$. 

However, this construction is far from optimal. 
Although our choices of $b_2 = 4$ and $b_3 = 160$ were as small as possible with our choice of $\delta_2$ and $\delta_3$, $\delta_2$ and $\delta_3$ were not optimal choices. 
Still, that would give just a minor improvement as one has to be more careful with inequalities like~\eqref{eq : strong inequality sizes J_d}. 
Then one can find that choosing $b_2 = 8$, $b_3 = 28$, and $n = 16958443$, we indeed get that there is no $m \in \mathbb{N} \setminus\{0, 8, 28\}$ such that $u_m$ is between $u_n$, $u_{n+b_2}$ and $u_{n+b_3}$. 
Meanwhile, $\cos(n\theta) \approx 0.404$, $\cos((n+b_2)\theta) \approx 94.5$ and $\cos((n+b_3)\theta) \approx 751$, and so one should have had chosen much larger $\delta_2$ and $\delta_3$ for which the general bounds in Lem.~\ref{lem : upper bound length intervals} fail.

\subsection{Proof of the Main Theorems}\label{sec : Proof of weak normality}

In this section, we establish Thms.~\ref{thm : intro weak normality} and \ref{thm : MSO decidable}.
We first prove two technical lemmas. 
These two lemmas will both serve to prove these theorems but also to prove Lem.~\ref{lem : density weakly normal}, the continuous version of disjunctivity.

For the first lemma, we want that to show that each interval in $\mathbb{R}/(2\pi\mathbb{Z})$ contains intervals $\mathcal{J}_d(0, \sqrt{|\lambda|})$ and numbers $n\theta + \phi$ for relatively small numbers $d$ and $n$ in a certain congruent class.
The proof of this lemma in in App.~\ref{sec : omitted proofs}.

\begin{lemma}\label{lem : smallest n in interval}
    Let $T \ge 2$.
    There is a number $C_9 > 0$ such that for every $t \in \{0,\dots, T-1\}$ and small enough interval $\mathcal{I} \subset (-\pi/2, \pi/2) \subset \mathbb{R}/(2\pi\mathbb{Z})$, there are $|\mathcal{I}|^{-C_9} \le n_1, n_2 \le 2|\mathcal{I}|^{-C_9}$ such that $n_1 \theta + \phi \in \mathcal{I}$, $\mathcal{J}_{n_2}(0, |\sqrt{\lambda|}) \subset \mathcal{I}$ and $n_1, n_2 \equiv t\pmod T$.
\end{lemma}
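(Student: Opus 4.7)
The plan is to combine a quantitative form of Weyl equidistribution, obtained via Baker's theorem, with the geometric description of the intervals $\mathcal{J}_d(0,\sqrt{|\lambda|})$ from Lem.~\ref{lem : upper bound length intervals}.

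First I would extract a polynomial Diophantine bound on $\theta$. By Lem.~\ref{lem : dominant roots are nice}, $\theta$ is not a rational multiple of $\pi$, so $\lambda/\overline{\lambda}=e^{2i\theta}$ is an algebraic number on the unit circle that is not a root of unity. Applying Matveev's theorem (Thm.~\ref{thm : Matveev}) with $\alpha_1=\lambda/\overline{\lambda}$ and $b_1=n$ yields a computable constant $c>0$ such that $|(\lambda/\overline{\lambda})^n-1|=2|\sin(n\theta)|>n^{-c}$ for every $n\ge 1$. Since $|\sin x|$ is comparable to the distance from $x$ to the nearest multiple of $\pi$, this gives $|n\theta - k\pi|_{2\pi}>c' n^{-c}$ for a computable $c'>0$ and all $n\ge 1$, $k\in\mathbb{Z}$. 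The same bound, with an extra factor of $T^c$, transfers to $nT\theta$, and hence the same Diophantine quality controls the arithmetic progression $\{n\theta : n\equiv t\pmod T\}$.

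A standard application of the Erd\H{o}s--Tur\'an inequality then provides a computable exponent $\kappa$ such that, for all sufficiently large $N$ and every arc $J\subset\mathbb{R}/(2\pi\mathbb{Z})$ of length at least $N^{-1/\kappa}$, the set
$$\{n\theta+\phi\bmod 2\pi \colon N\le n\le 2N,\ n\equiv t\pmod T\}$$
meets $J$. Taking any $C_9>\kappa$ and setting $N=|\mathcal{I}|^{-C_9}$, these gaps drop below $|\mathcal{I}|$ once $|\mathcal{I}|$ is small enough, so the set already meets $\mathcal{I}$; that point is our $n_1$. For $n_2$, Lem.~\ref{lem : upper bound length intervals} tells us that $\mathcal{J}_d(0,\sqrt{|\lambda|})$ is a single interval of length $O(|\lambda|^{-d})$ whose points lie within $C_8|\lambda|^{-d}$ of one of the two antipodes $-d\theta\pm\pi/2\bmod 2\pi$. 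For $d\ge|\mathcal{I}|^{-C_9}$ this length is doubly-exponentially smaller than $|\mathcal{I}|$, so it suffices to find $n_2$ in our window with $n_2\equiv t\pmod T$ and $-n_2\theta+\pi/2$ (or $-n_2\theta-\pi/2$) lying in the sub-arc $\mathcal{I}''\subset\mathcal{I}$ obtained by trimming $|\mathcal{I}|/3$ from each end; this is another instance of the arc-hitting problem, resolved by the same equidistribution argument applied to $J=\pi/2-\mathcal{I}''$.

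The main obstacle is the bookkeeping in calibrating $C_9$: it must exceed $\kappa$ so that the gap estimate beats $|\mathcal{I}|$, it must be large enough that $|\lambda|^{-|\mathcal{I}|^{-C_9}}$ is negligible compared to $|\mathcal{I}|$, and it must absorb the factor $T^c$ coming from the congruence constraint uniformly in $t\in\{0,\dots,T-1\}$. None of these requirements is individually difficult, but they all impose polynomial-in-$|\mathcal{I}|^{-1}$ constraints whose exponents depend only on the Matveev constant $c$, on $T$, and on $|\lambda|$, so a single computable choice of $C_9$ satisfies them all once $|\mathcal{I}|$ falls below a computable threshold.
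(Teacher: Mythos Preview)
Your proposal is correct, and it reaches the same conclusion as the paper, but through a genuinely different route. Both arguments use Baker's theorem (Thm.~\ref{thm : Matveev}) as the Diophantine input and both handle $n_2$ the same way---land $-n_2\theta\pm\pi/2$ in a slightly shrunk sub-arc of $\mathcal{I}$ and then invoke the size bound $|\mathcal{J}_{n_2}(0,\sqrt{|\lambda|})|\le C_8|\lambda|^{-n_2}$ from Lem.~\ref{lem : upper bound length intervals} to conclude containment. The divergence is in how the arc-hitting step is carried out.

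The paper avoids equidistribution machinery entirely: it applies the pigeonhole principle to the first $\lceil\pi|\mathcal{I}|^{-1}\rceil$ multiples of $T\theta$ to produce a single rotation step $(d_1-d_2)T\theta$ of modulus at most $|\mathcal{I}|$, uses Baker only to \emph{lower}-bound that step (so that the number of steps needed to sweep the whole circle is at most a power of $|\mathcal{I}|^{-1}$), and then simply walks along the orbit in increments of this step until it lands in $\mathcal{I}$. This is essentially a hands-on three-distance argument and yields the exponent $C_9$ with almost no overhead. Your approach instead feeds the Baker bound $\|hT\theta\|\gg h^{-c}$ into the Erd\H{o}s--Tur\'an inequality to control the discrepancy of $\{n\theta+\phi : N\le n\le 2N,\ n\equiv t\ (\mathrm{mod}\ T)\}$ by $O(N^{-1/(c+1)})$, from which the arc-hitting follows. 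This is heavier but more systematic, and it would generalise more readily (e.g., to hitting several arcs simultaneously or to higher-dimensional tori), whereas the paper's pigeonhole trick is lighter and keeps the constants more transparent.
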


For the fourth condition of Lem.~\ref{lem : density weakly normal}, we would like $D$ to be large. 
That is, the smallest $d$ such that $\mathcal{J}_d(1, \delta_\ell) \cap \mathcal{I}\ne\emptyset$ has to be quite large.
This is hard to guarantee, and our solution is to show that we can find a subinterval $\mathcal{I}' \subset \mathcal{I}$ for which this number $D$ is large. 
We show this in the following lemma whose proof is in App~\ref{sec : omitted proofs}.
\begin{lemma}\label{lem : better subintervals}
    Assume $\mathcal{I} \subset \mathbb{R}/(2\pi\mathbb{Z})$, $b_2,\dots,b_\ell \in \mathbb{N}$, $1< \delta_2 < \cdots < \delta_\ell <\sqrt{|\lambda|}$ and $D > 0$ satisfy the assertions of Lem.~\ref{lem : density weakly normal}.
    Then for every small enough $\varepsilon > 0$ there is a subinterval $\mathcal{I}'$ of $\mathcal{I}$ of length $\varepsilon$ for which the assertions of Lem.~\ref{lem : density weakly normal} hold for these $b_2,\dots,b_\ell$ and $\delta_1,\dots,\delta_\ell$ and some $D' > \varepsilon^{-1/2}$.
  \end{lemma}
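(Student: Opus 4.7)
The plan is to build $\mathcal{I}'$ by excising from $\mathcal{I}$ the ``bad'' arcs $\mathcal{J}_d(1,\delta_\ell)\cap\mathcal{I}$ for integers $d$ in the range $D\le d<D'$ with $d\notin\{b_2,\dots,b_{\ell-1}\}$, where $D'$ will be chosen of order $\varepsilon^{-1/2}$, and then picking a subinterval of length exactly $\varepsilon$ inside one of the gaps that remain. Since $\mathcal{I}'\subseteq\mathcal{I}$ and the $b_j$'s and $\delta_j$'s are unchanged, conditions~(a) and~(b) of Lem.~\ref{lem : density weakly normal} are inherited automatically; the work is in enforcing~(c) and~(d) at the new threshold~$D'$.

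First I would dispose of~(d): by Lem.~\ref{lem : upper bound length intervals}, $|\mathcal{J}_d(1,\delta_\ell)|\le C_5(\delta_\ell-1)|\lambda|^{-d}$, so whenever $D'>\varepsilon^{-1/2}$ the tail $\sum_{d\ge D'}|\mathcal{J}_d(1,\delta_\ell)|\le C_5(\delta_\ell-1)|\lambda|^{-D'}/(1-|\lambda|^{-1})$ is super-polynomially smaller than $\varepsilon=|\mathcal{I}'|$. For~(c), the bad arcs in the range $D\le d<D'$ are at most $D'-D$ in number and, by~(d) of the hypothesis, have total length at most $S:=\sum_{d=D}^{\infty}|\mathcal{J}_d(1,\delta_\ell)|<|\mathcal{I}|$. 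Their complement in $\mathcal{I}$ is an open set consisting of at most $D'-D+1$ disjoint subintervals of total length at least $|\mathcal{I}|-S>0$, so by pigeonhole one of them, say $\mathcal{I}''$, has length at least $(|\mathcal{I}|-S)/(D'-D+1)$. Setting $D':=\lceil\varepsilon^{-1/2}\rceil+1$ makes this lower bound of order $(|\mathcal{I}|-S)\sqrt{\varepsilon}$, which exceeds $\varepsilon$ as soon as $\sqrt{\varepsilon}<|\mathcal{I}|-S$, i.e.\ for all sufficiently small~$\varepsilon$. Any subinterval $\mathcal{I}'\subseteq\mathcal{I}''$ of length exactly $\varepsilon$ then satisfies~(c) (it avoids every excluded $\mathcal{J}_d(1,\delta_\ell)$ with $D\le d<D'$ by construction, and every $\mathcal{J}_d(1,\delta_\ell)$ with $d<D$ by the hypothesis~(c) applied to $\mathcal{I}$) and~(d) by the tail bound above.

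The chief obstacle is the simultaneous calibration of~$D'$: it must be large enough to make the tail in~(d) negligible, yet small enough that the pigeonhole quotient $(|\mathcal{I}|-S)/(D'-D+1)$ still dominates~$\varepsilon$. The choice $D'\sim\varepsilon^{-1/2}$ hits the sweet spot precisely because $|\mathcal{J}_d(1,\delta_\ell)|$ decays exponentially in $d$ while the count of bad arcs grows only linearly, so the polynomial loss from averaging is easily beaten by the exponential gain in the tail. What remains is essentially a careful verification that both requirements can be met at once for every $\varepsilon$ below a threshold depending on $|\mathcal{I}|$, $S$, $\delta_\ell$, and $|\lambda|$.
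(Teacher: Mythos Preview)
Your proposal is correct and follows essentially the same approach as the paper's proof: remove from $\mathcal{I}$ the bad arcs $\mathcal{J}_d(1,\delta_\ell)$ for $d$ below a threshold of order $\varepsilon^{-1/2}$, use pigeonhole on the remaining gaps (at most $D'-D+1$ of them, of total length at least $|\mathcal{I}|-S>0$) to locate one long enough to contain a subinterval of length $\varepsilon$, and observe that the exponential tail handles~(d). The only cosmetic difference is that the paper parametrises via an integer $k$ with $c_1/(k+2)<\varepsilon\le c_1/(k+1)$ (where $c_1=|\mathcal{I}|-S$) and removes arcs up to $d=k$, whereas you fix $D'=\lceil\varepsilon^{-1/2}\rceil+1$ at the outset; both routes land on the same pigeonhole inequality.
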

  
With these last two lemmas, we have gathered all the tool needed to prove Lem.~\ref{lem : density weakly normal}, which is also done in App.~\ref{sec : omitted proofs}.
The idea of this proof is as described in Sec.~\ref{sec : example}:
we inductively construct the numbers $b_2,\dots,b_\ell$ and intervals 
\begin{equation*}
(-\pi/2,\pi/2) \supset \mathcal{I}_1 \supset \mathcal{I}_2 := \mathcal{I}_{b_2}(1,\delta_2) \supset \mathcal{I}_3 := \mathcal{I}_{b_3}(\delta_2,\delta_3) \supset \cdots \supset \mathcal{I}_{b_\ell}(\delta_{\ell-1},\delta_\ell)
\end{equation*}
such that on step $i$, we show that some $b_i$ exists for which the lemma holds for $b_2,\dots,b_i$ and $\mathcal{I}_i$.

Having established the continuous version of our result in Lem.~\ref{lem :
  density weakly normal}, we can solve the discrete version
Thm.~\ref{thm : intro weak normality} and conclude Thm.~\ref{thm : MSO
  decidable}: the MSO theory of the structure $\langle \nat;<,\{u_n
\colon n \in \nat\}\cap \nat \rangle$ is decidable.
The proof shares some of the main ideas of the proof of Lem.~\ref{lem : density weakly normal}.
\begin{proof}[Proof of Thms.~\ref{thm : MSO decidable} and~\ref{thm : intro weak normality}]
    As Thm.~\ref{thm : intro weak normality} implies Thm.~\ref{thm : MSO decidable} thanks to Lem.~\ref{lem : Theorems intro implications}, it is sufficient to prove Thm.~\ref{thm : intro weak normality}. 
    In turn, Lem.~\ref{lem : weak normality equivalent statement} states that Thm.~\ref{thm : intro weak normality} is implied by the  following statement: for all $N \in \nat$, $T \ge 2$ and $t_1,\dots,t_\ell \in \{0,\dots,T-1\}$, we can find $n_1,\dots,n_\ell \in \nat$ such that
    \begin{enumerate}
        \item $n_1,\dots,n_\ell \ge N$;
        \item $n_j \equiv t_j \pmod T$ for $1 \le j \le \ell$;
        \item $0 < v_{n_1} < \cdots < v_{n_\ell}$;
        \item for all $m \in \nat$ such that $v_{n_1} < v_m < v_{n_\ell}$, $m \in \{n_2,\dots,n_{\ell-1}\}$.
    \end{enumerate}
    We will prove this statement for given $N, T, \ell$, and $t_1,\dots,t_\ell$.

    Lem.~\ref{lem : density weakly normal} gives numbers $b_2,\dots,b_\ell \in \nat$, $1 < \delta_\ell < \sqrt{|\lambda|}$ and an interval $\mathcal{I}$.
    We take $n_1 = n$ and $n_j = n + b_j$ for $2 \le j \le \ell$. 
    
    When $n \ge N$, $n \equiv t_1 \pmod T$ and $n\theta + \phi \in \mathcal{I}$, imply items 1, 2, and 3.
    Indeed, items 2 and 3 hold for $n_1 = n$, and for $2 \le j \le \ell$, $n_j = n+b_j > n \ge N$ and
    \begin{equation*}
        n_j \equiv n_1 + b_j \equiv n_1 + (t_j - t_1) \equiv t_j \pmod T \, .
    \end{equation*}
    For item 3, we have that as $n\theta + \phi \in \mathcal{I}$,
    \begin{equation*}
        0 < \cos(n\theta + \phi) < \cos\big( (n+b_2)\theta+\phi\big)|\lambda|^{b_2} <\cdots < \cos\big((n+b_\ell)\theta +\phi\big)|\lambda|^{b_\ell}\,.
    \end{equation*}
    Multiplying these inequalities by $|\lambda|^n$ and noting that
    $\cos(m\theta + \phi)|\lambda|^m = v_m$ entails item~3.

    Let $0 < \varepsilon < |\mathcal{I}|$ be small enough.
    Lemma~\ref{lem : better subintervals} implies that there is an interval $\mathcal{I}_\varepsilon \subset \mathcal{I}$ of length $\varepsilon$ such that $\mathcal{I}_\varepsilon \cap \mathcal{J}_d(1, \delta_\ell) = \emptyset$ for all $d < \varepsilon^{-1/2}$.
    By Lem.~\ref{lem : smallest n in interval}, there is a $\varepsilon^{-C_9} < n < 2\varepsilon^{-C_9}$ such that $n\theta + \phi \in \mathcal{I}_\varepsilon$ and $n\equiv t_1 \pmod T$. 
    Thus, $n \ge N$ for small enough $\varepsilon$ and items 1, 2, and 3 are satisfied.

    Now assume that $m \not \in \{n_2,\dots,n_{\ell-1}\}$ and $v_{n_1} < v_m < v_{n_\ell}$. 
    Then, setting $d = m - n$, we have that $d \not\in \{b_2,\dots,b_{\ell-1}\}$ and 
    \begin{equation*}
        0 < \cos(n\theta + \phi)|\lambda|^n < \cos\big((n+d)\theta + \phi\big)|\lambda|^{n+d}  < \cos\big((n+b_\ell)\theta + \phi\big)|\lambda|^{n+b_\ell} \, .
    \end{equation*}
    As $n\theta + \phi \in \mathcal{I}_\varepsilon \subset \mathcal{I}$, $\cos((n+b_\ell)\theta + \phi)|\lambda|^{b_\ell} < \delta_\ell \cos(n\theta + \phi)$.
    Thus, $n\theta+\phi$ is in $\mathcal{J}_d(1, \delta_\ell)$ and so $d \ge \varepsilon^{-1/2}$ (which comes from Lem.~\ref{lem : better subintervals}).
    
    By Lem.~\ref{lem : upper bound length intervals}, $n\theta + \phi \in \mathcal{J}_d(1, \delta_\ell)$ and $-d\theta \pm \pi/2$ are at most $C_8|\lambda|^{-d}$ apart.
    Thus, for a constant $c_1$ derived from Thm.~\ref{thm : Matveev},
    \begin{align*}
        C_8|\lambda|^{-d} &\ge  |(n\theta + \phi) - (-d\theta \pm \pi/2)|_{\pi/2} \\
        &\ge |e^{i(n\theta + \phi) - (-d\theta \pm \pi/2)}| \\
        &> \frac{\pi}{2}|n + d|^{-c_1} \, .
    \end{align*}
    Taking logarithms,
    \begin{equation*}
        \log(C_8) + c_1\log(n + d) > d\log|\lambda| \, .
    \end{equation*}
    As $d \ge \varepsilon^{-1/2}$ and $n \ge \varepsilon^{-C_9}$, we have that $n, d \ge 2$ for small enough $\varepsilon$.
    Then, $dn\ge n+d$ and so 
    \begin{equation*}
        \log(C_8) + c_1\log(n)+c_1\log(d) > d\log|\lambda| \, .
    \end{equation*}
    Hence, either 
    \begin{equation*}
        2\log(C_8) +2c_1\log(d) > d\log|\lambda| \quad\text{or} \quad
        2c_1\log(n) > d\log|\lambda| \, .
    \end{equation*}
    The former is impossible for large enough $d$ (and thus small enough $\varepsilon$) while for the latter, the upper and lower bounds for $n$ and $d$ give that
    \begin{equation*}
         2c_1\log(2\varepsilon^{-C_9}) > 2c_1\log(n) > d\log|\lambda| > 
         \log|\lambda|\varepsilon^{-1/2} \, ,
    \end{equation*}
    which again is impossible for small enough $\varepsilon$.
    Hence item~4 also follows.\qedhere

  \end{proof}

  \section{Concluding Remarks}
  \label{sec : conclusion}

  Our main result, Thm.~\ref{thm : MSO decidable}, significantly
  expands the decidability landscape of MSO theories of structures of
  the form $\langle \nat ; <, P \rangle$, by considering predicates
  $P$ obtained as sets of positive terms of non-degenerate simple LRS
  having two dominant roots. A natural question is whether any of
  these constraints can be relaxed. It is conceivable that
  investigating simple LRS with three dominant roots might yield
  positive decidability results, although the present development
  would have to be significantly altered at various junctures.  We
  also note that open instances of the Skolem Problem~\cite{recseq}
  can easily be reduced from in the presence of simple LRS having four
  or more dominant roots. Likewise, considering non-simple LRS having
  three dominant roots or more exposes one to
  Positivity-hardness~\cite{OW14}. Let us remark that non-degeneracy
  is essential, as lifting this restriction has the effect of voiding
  the constraint on the number of dominant roots.\footnote{More
    precisely, one can construct a degenerate LRS for which the two
    dominant roots ``cancel'' each other out at every odd index, and the
    sub-LRS that remains over odd indices (involving the non-dominant
    roots) can be arbitrarily complex.} Finally, one might
  envisage expanding the present setup by adjoining further
  predicates. However, even in the simplest of cases, one expects
  quite formidable difficulties in attempting to follow that research
  direction; see~\cite{berthe2024decidability}.


\bibliography{main}

\begin{thebibliography}{10}

\bibitem{AghamovBKOP24}
Rajab Aghamov, Christel Baier, Toghrul Karimov, Jo{\"{e}}l Ouaknine, and Jakob Piribauer.
\newblock Linear dynamical systems with continuous weight functions.
\newblock In {\em {HSCC}}, pages 22:1--22:11. {ACM}, 2024.

\bibitem{Baier0JKLLOPW021}
Christel Baier, Florian Funke, Simon Jantsch, Toghrul Karimov, Engel Lefaucheux, Florian Luca, Jo{\"{e}}l Ouaknine, David Purser, Markus~A. Whiteland, and James Worrell.
\newblock The orbit problem for parametric linear dynamical systems.
\newblock In {\em {CONCUR}}, volume 203 of {\em LIPIcs}, pages 28:1--28:17. Schloss Dagstuhl - Leibniz-Zentrum f{\"{u}}r Informatik, 2021.

\bibitem{berthe2024decidability}
Val{\'{e}}rie Berth{\'{e}}, Toghrul Karimov, Joris Nieuwveld, Jo{\"{e}}l Ouaknine, Mihir Vahanwala, and James Worrell.
\newblock On the decidability of monadic second-order logic with arithmetic predicates.
\newblock In {\em {LICS}}, pages 11:1--11:14. {ACM}, 2024.

\bibitem{mso-notes}
Achim Blumensath.
\newblock Monadic {S}econd-{O}rder {M}odel {T}heory.
\newblock \url{http://www.fi.muni.cz/~blumens/}, 2024.
\newblock [Online; accessed on 24 January 2025].

\bibitem{emile1909probabilites}
{\'E}mile~M. Borel.
\newblock Les probabilit{\'e}s d{\'e}nombrables et leurs applications arithm{\'e}tiques.
\newblock {\em Rendiconti del Circolo Matematico di Palermo (1884-1940)}, 27(1):247--271, 1909.

\bibitem{braverman2006termination}
Mark Braverman.
\newblock Termination of integer linear programs.
\newblock In {\em {CAV}}, volume 4144 of {\em Lecture Notes in Computer Science}, pages 372--385. Springer, 2006.

\bibitem{buchi1966symposium}
J.~Richard B{\"u}chi.
\newblock Symposium on decision problems: On a decision method in restricted second order arithmetic.
\newblock In {\em Studies in Logic and the Foundations of Mathematics}, volume~44, pages 1--11. Elsevier, 1966.

\bibitem{Bug12}
Yann Bugeaud.
\newblock {\em Distribution modulo one and Diophantine approximation}, volume 193.
\newblock Cambridge University Press, 2012.

\bibitem{cartonthomas}
Olivier Carton and Wolfgang Thomas.
\newblock The monadic theory of morphic infinite words and generalizations.
\newblock {\em Information and Computation}, 176(1):51--65, 2002.

\bibitem{DCostaKMOSW22}
Julian D'Costa, Toghrul Karimov, Rupak Majumdar, Jo{\"{e}}l Ouaknine, Mahmoud Salamati, and James Worrell.
\newblock The pseudo-reachability problem for diagonalisable linear dynamical systems.
\newblock In {\em {MFCS}}, volume 241 of {\em LIPIcs}, pages 40:1--40:13. Schloss Dagstuhl - Leibniz-Zentrum f{\"{u}}r Informatik, 2022.

\bibitem{dubickas2016positive}
Art{\=u}ras Dubickas and Min Sha.
\newblock Positive density of integer polynomials with some prescribed properties.
\newblock {\em Journal of Number Theory}, 159:27--44, 2016.

\bibitem{elgot1966decidability}
Calvin~C. Elgot and Michael~O. Rabin.
\newblock Decidability and undecidability of extensions of second (first) order theory of (generalized) successor.
\newblock {\em The Journal of Symbolic Logic}, 31(2):169--181, 1966.

\bibitem{recseq}
Graham Everest, Alf van~der Poorten, Igor Shparlinski, and Thomas Ward.
\newblock {\em Recurrence Sequences}.
\newblock Mathematical Surveys and Monographs. American Mathematical Society, 2003.

\bibitem{Har02}
Glyn Harman.
\newblock One hundred years of normal numbers.
\newblock In {\em Surveys in Number Theory}, pages 57--74. AK Peters/CRC Press, 2002.

\bibitem{KarimovKNO023}
Toghrul Karimov, Edon Kelmendi, Joris Nieuwveld, Jo{\"{e}}l Ouaknine, and James Worrell.
\newblock The power of positivity.
\newblock In {\em {LICS}}, pages 1--11. {IEEE}, 2023.

\bibitem{KarimovKO022}
Toghrul Karimov, Edon Kelmendi, Jo{\"{e}}l Ouaknine, and James Worrell.
\newblock What's decidable about discrete linear dynamical systems?
\newblock In {\em Principles of Systems Design}, volume 13660 of {\em Lecture Notes in Computer Science}, pages 21--38. Springer, 2022.

\bibitem{KarimovLOPVWW22}
Toghrul Karimov, Engel Lefaucheux, Jo{\"{e}}l Ouaknine, David Purser, Anton Varonka, Markus~A. Whiteland, and James Worrell.
\newblock What's decidable about linear loops?
\newblock {\em Proc. {ACM} Program. Lang.}, 6({POPL}):1--25, 2022.

\bibitem{Kronecker}
L.~Kronecker.
\newblock Zwei {S}\"{a}tze \"uber {G}leichungen mit ganzzahligen {C}oefficienten.
\newblock {\em J. Reine Angew. Math.}, 53:173--175, 1857.

\bibitem{Kui12}
Lauwerens Kuipers and Harald Niederreiter.
\newblock {\em Uniform distribution of sequences}.
\newblock Courier Corporation, 2012.

\bibitem{LucaOW22}
Florian Luca, Jo{\"{e}}l Ouaknine, and James Worrell.
\newblock Algebraic model checking for discrete linear dynamical systems.
\newblock In {\em {FORMATS}}, volume 13465 of {\em Lecture Notes in Computer Science}, pages 3--15. Springer, 2022.

\bibitem{matveev2000explicit}
Eugene~M. Matveev.
\newblock An explicit lower bound for a homogeneous rational linear form in the logarithms of algebraic numbers. {II}.
\newblock {\em Izvestiya: Mathematics}, 64(6):1217, 2000.

\bibitem{OW14}
Jo{\"{e}}l Ouaknine and James Worrell.
\newblock Positivity problems for low-order linear recurrence sequences.
\newblock In {\em {SODA}}, pages 366--379. {SIAM}, 2014.

\bibitem{queffelec2006old}
Martine Queff\'elec.
\newblock Old and new results on normality.
\newblock {\em Lecture Notes-Monograph Series}, pages 225--236, 2006.

\bibitem{rabinovich}
Alexander Rabinovich.
\newblock On decidability of monadic logic of order over the naturals extended by monadic predicates.
\newblock {\em Information and Computation}, 205(6):870--889, 2007.

\bibitem{rabinovich2006decidable}
Alexander Rabinovich and Wolfgang Thomas.
\newblock Decidable theories of the ordering of natural numbers with unary predicates.
\newblock In {\em International Workshop on Computer Science Logic}, pages 562--574. Springer, 2006.

\bibitem{semenov1984logical}
Aleksei~Lvovich Semenov.
\newblock Logical theories of one-place functions on the set of natural numbers.
\newblock {\em Mathematics of the USSR-Izvestiya}, 22(3):587, 1984.

\bibitem{She75}
S.~Shelah.
\newblock The monadic theory of order.
\newblock {\em Ann. Math.}, 102:379--419, 1975.

\bibitem{Tho97}
Wolfgang Thomas.
\newblock Ehrenfeucht games, the composition method, and the monadic theory of ordinal words.
\newblock In {\em Structures in Logic and Computer Science}, volume 1261 of {\em Lecture Notes in Computer Science}, pages 118--143. Springer, 1997.

\bibitem{thomas1997languages}
Wolfgang Thomas.
\newblock Languages, automata, and logic.
\newblock In {\em Handbook of Formal Languages: Volume 3 Beyond Words}, pages 389--455. Springer, 1997.

\bibitem{shorey1984distance}
R.~Tijdeman, M.~Mignotte, and T.~N. Shorey.
\newblock The distance between terms of an algebraic recurrence sequence.
\newblock {\em J. Reine Angew. Math.}, 346:63--76, 1984.

\end{thebibliography}

\appendix

\section{Omitted proofs}\label{sec : omitted proofs}

\begin{proof}[Proof of Lem.~\ref{lem : dominant roots are nice}]
    As the characteristic polynomial of $\langle u_n\rangle_{n=0}^\infty$ has integer coefficients, $\overline{\lambda_1}$ and $\overline{\lambda_2}$ are also dominant characteristic roots. 
   Since $\langle u_n\rangle_{n=0}^\infty$ has exactly two dominant roots, $\{\lambda_1, \lambda_2\} = \{\overline{\lambda_1}, \overline{\lambda_2}\}$. 
    If $\lambda_1 = \overline{\lambda_1}$, $\lambda_2 = \overline{\lambda_2}$ and so both $\lambda_1$ and $\lambda_2$ are real. Hence, $\lambda_1/\lambda_2 = \pm1$, contradicting non-degeneracy. 
    Thus, $\lambda_2 = \overline{\lambda_1}$.

    If the argument of $\lambda_1$ is a rational multiple of $\pi$,
    $\lambda_1/\overline{\lambda_1}$ also has argument a rational
    multiple of $\pi$. Having modulus 1,
    $\lambda_1/\overline{\lambda_1}$ would then be a root of unity,
    contradicting non-degeneracy.

    Assume $|\lambda_1| \le 1$.
    By the Vieta formulas, the product of the absolute values of the
    characteristic roots is at most 1 and also equals $|c_d| \in
    \mathbb{Z}_{>0}$, where $c_d$ is the constant coefficient of the
    minimal polynomial of $\langle u_n\rangle_{n=0}^\infty$. 
    Hence, $|\lambda_1| = 1$ and there are no non-dominant characteristic roots. 
    Whence, by an old result of Kronecker~\cite{Kronecker}, we
    conclude that both $\lambda_1$ and $\lambda_2$ are roots of unity,
    and thus so is their quotient, contradicting once again non-degeneracy.
    Hence $|\lambda_1| > 1$ as claimed.
\end{proof}

\begin{proof}[Proof of Lem.~\ref{lem : First three conditions of LRS}]
    First, as discussed in Sec.~\ref{sec : LRS}, compute $r > 0$ and $0< R<|\lambda|$ such that $rR^n > r_n$ for all $n \in \nat$.

    To show that $\langle p_m\rangle_{m=0}^\infty$ is recursive, it is
    sufficient to find, for a given $k \in \nat$, a number $N$ such
    that $|u_n| > k$ for all $n \ge N$ as then $k \in P$ if and only if $k \in \{u_0,\dots,u_{N-1}\}$. 
    Using Thm.~\ref{thm : MST size LRS}, we have that when $n \ge C_2$ and $u_n = k$, 
    \begin{equation*}
        n\log|\lambda| - C_1 \log^2(n) \le \log|u_n| < \log(k + 1) \,  .
    \end{equation*}
    Hence $n$ is bounded and the desired $N$ can be obtained.

    To show that $P$ is infinite, we invoke Lem.~4
    from~\cite{braverman2006termination}: for infinitely many $n$,
    $\alpha \lambda^n + \overline{\alpha}\overline{\lambda}^n >
    c|\lambda|^n$ for some real $c > 0$.
    As $c|\lambda|^n > rR^n$ for all but finitely many $n$, there is
    $c' > 0$ such that $u_n \ge c' |\lambda|^n$ for infinitely many
    $n$.  Hence $P$ is indeed infinite. 
    
    It remains to show that $\langle p_m\rangle_{m=0}^\infty$ is effectively sparse.
    Assume $k, n_1, n_2\in \nat$, $n_1 > n_2$, and $|u_{n_1} - u_{n_2}| \le k$. 
    Then Thm.~\ref{thm : MST size difference terms of LRS} asserts that whenever $n_1 \ge C_4$,
    \begin{align*}
        \log(k + 1) &\ge \log|u_{n_1} - u_{n_2}|\\
        &\ge |\lambda|^{n_1} - C_3\log(n_1)^2\log(n_2+2) \\
        &\ge |\lambda|^{n_1} - C_3\log(n_1+1)^3 \, .
    \end{align*}
    Thus $|u_{n_1} - u_{n_2}| \le k$ implies that $n_1 \le N'$ for some computable constant $N'$. 
    Hence we can write out the set $P \cap \{0,\dots,k+1+ \max_{0 \le
      n \le N'} \{ u_n\}\}$ and find the two largest elements in this set having difference at most $k$. 
\end{proof}

\begin{proof}[Proof of Lem.~\ref{lem : Theorems intro implications}]
   By Lem.~\ref{lem : First three conditions of LRS}, $P$ is infinite,
   recursive, and effectively sparse, and so in order to apply Prop.~\ref{prop
     : reduction one unary predicate}, we only need to verify that
   $\Acc_{\langle p_m\bmod M\rangle_{m=0}^\infty}$ is decidable for all $M \ge 2$. 
    
    Let $M \ge 2$ and recall the definition \eqref{eq : definition S} of $S_M$.
    As stated in Sec.~\ref{sec : LRS}, $\langle u_n\bmod M\rangle_{n=0}^\infty$ is ultimately periodic modulo $M$, and both its period and preperiod can be effectively computed.
    Therefore, we can compute $S_M$ together with a number $N'$ such that for all $n \ge N'$, $u_n \equiv s \pmod M$ for some $s \in S_M$. 
    Next, compute $N_M$ large enough such that $p_{N_M} \ge u_n$ for all $0 \le n < N'$. 
    Then, by construction, for all $m \ge N_M$, $p_m \equiv s \pmod M$ for some $s \in S_M$. 

    Let $\mathcal{A}$ be a deterministic Muller automaton over the alphabet $\{0,\dots,M-1\}$. 
    After $\mathcal{A}$ has read $(p_0\bmod M),\dots,(p_{N_M-1}\bmod M)$, only elements from $S_M$ will be read.  
    Hence we can build a second deterministic Muller automaton $\mathcal{A}'$ over alphabet $S_M$ which accepts $\langle p_m\bmod M\rangle_{m=N_M}^\infty$ if and only if $\mathcal{A}$ accepts $\langle p_m\bmod M\rangle_{m=0}^\infty$ by hard-coding the initial segment and restricting the original alphabet $\{0,\dots,M-1\}$ to $S_M$.  
    By Thm.~\ref{thm : intro weak normality}, $\langle p_m\bmod M\rangle_{m=N_M}^\infty \in {S_M}^\omega$ is disjunctive, and thus by Thm.~\ref{thm : weakly normal decidable} we can determine whether $\mathcal{A}'$ accepts $\langle p_m\bmod M\rangle_{m=N_M}^\infty$. 
    Hence, combined with Lem.~\ref{lem : First three conditions of LRS}, the conditions of Prop.~\ref{prop : reduction one unary predicate} are met, yielding the desired result.
\end{proof}

\begin{proof}[Proof of Lem.~\ref{lem : upper bound length intervals}]
    Identify $\mathbb{R}/(2\pi\mathbb{Z})$ with $(-\pi,\pi]$.
    As $2\cos(x) = e^{ix} +e^{-ix}$ and $e^{id\theta}|\lambda|^d = \lambda^d$, for $\gamma \le \eta \le \delta$, we have that
    \begin{equation}\label{eq : point in interval}
      \cos(x + d\theta)|\lambda|^d = \eta \cos(x) \Longleftrightarrow
      \left(e^{ix}\right)^2 =
      -\frac{\overline{\lambda}^d-\eta}{\lambda^d-\eta} \, .
    \end{equation}
    Hence, there is a unique $x \in (-\pi/2, \pi/2]$ such that
    $\cos(x + d\theta)|\lambda|^d = \eta \cos(x)$.
    If $x = \pi/2$ and $\cos(x+d\theta) = \eta \cos(x)$, then $\cos(x)
    = 0$ and $\theta/\pi$ is rational, which we excluded in Lem.~\ref{lem : dominant roots are nice}. 
    We conclude that $\mathcal{J}_d(\gamma, \delta)$ is a single interval within $(-\pi/2, \pi/2)$.

    Let us now tackle the size of $\mathcal{J}_d(\gamma, \delta)$.
    As $\mathcal{J}_d(\gamma, \delta)$ consists of a single interval, $|\mathcal{J}_d(\gamma,\delta)| = |x_1 - x_2|_{2\pi}$, where $x_1$ and $x_2$ are solutions in $(-\pi/2,\pi/2)$ to \eqref{eq : point in interval} with $\eta = \gamma$ and $\eta = \delta$, respectively. 
    Using the triangle inequality on the unit circle, 
    \begin{equation*}
        |e^{ix_1} - e^{ix_2}| \le |x_1-x_2|_{2\pi} \le
        \frac{\pi}{2}|e^{ix_1} - e^{ix_2}| \, .
    \end{equation*}
    If $\gamma = \delta$, $x_1 = x_2$, and so by continuity, when $\delta - \gamma$ is small enough, $|e^{ix_1} - e^{ix_2}| < \sqrt{2}$.
    
    Assume we are on this boundary. 
    That is, $|e^{ix_1} - e^{ix_2}| = \sqrt{2}$.
    Then $e^{ix_1} =\pm i e^{ix_2}$.
    It follows that $e^{2ix_1} = -e^{2ix_2}$, and so
    \begin{align*}
      -1 &= e^{2ix_1}e^{-2ix_2} \\
       &= \frac{\overline{\lambda}^d-\gamma}{\lambda^d-\gamma}\cdot \frac{\lambda^d-\delta}{\overline{\lambda}^d-\delta}\\
        &=\frac{\lambda^d\overline{\lambda}^d -\gamma \lambda^d -\delta \overline{\lambda}^d +\gamma\delta}{\lambda^d\overline{\lambda}^d - \delta \lambda^d -\gamma \overline{\lambda}^d +\gamma\delta} \, .
    \end{align*}
    Therefore, $2\lambda^d\overline{\lambda}^d -(\delta-\gamma) \lambda^d
       +(\delta-\gamma) \overline{\lambda}^d +2\gamma\delta = 0. $
    Then $|\lambda|^{2d} \le (\delta - \gamma)|\lambda|^d + \gamma\delta$, and so $(|\lambda|^d + \gamma)(|\lambda|^d - \delta) \le 0$. 
    This is impossible in our scenario as $0 < \gamma < \delta < \sqrt{|\lambda|}$ and $d \ge 1$.
    Thus, again by continuity in $\gamma$ and $\delta$, $|e^{ix_1} - e^{ix_2}| < \sqrt{2}$.
    
    By the geometry of the unit circle, $|e^{ix_1} - e^{ix_2}| < \sqrt{2}$ implies that $\sqrt{2} \le |e^{ix_1} + e^{ix_2}| \le 2$. 
    Then, using the fact that $|e^{2ix_1}-e^{2ix_2}| = |e^{ix_1}-e^{ix_2}|\cdot|e^{ix_1}+e^{ix_2}|$, we obtain 
    \begin{equation*}
         \frac{1}{2}\big|e^{i2x_1} - e^{i2x_2}\big| \le |x_1-x_2|_{2\pi} 
        \le \frac{\pi}{2\sqrt{2}} \big|e^{i2x_1} - e^{i2x_2}\big| \, .
    \end{equation*}
    We can rewrite $|e^{i2x_1} - e^{i2x_2}|$ as follows:
    \begin{align*}
        \big|e^{2ix_1}-e^{2ix_2}\big| &= \left|\frac{\overline{\lambda}^d-\gamma}{\lambda^d-\gamma}-\frac{\overline{\lambda}^d-\delta}{\lambda^d-\delta}\right|\\
        &= \frac{\left|\big(\overline{\lambda}^d-\gamma\big)\big(\lambda^d-\delta)-\big(\overline{\lambda}^d-\delta\big)\big(\lambda^d-\gamma)\right|}{|\lambda^d-\gamma||\lambda^d-\delta|} \\
        &= \frac{1}{|\lambda^d-\gamma||\lambda^d-\delta|}\left|(\delta-\gamma)\lambda^d-(\delta-\gamma)\overline{\lambda}^d\right|\\
        &= \frac{(\delta-\gamma)|\lambda|^d}{|\lambda^d-\gamma||\lambda^d-\delta|}\left|e^{i2d\theta} - 1\right| \, .
    \end{align*}
    As $\gamma, \delta < \sqrt{|\lambda|}$, there are constants $c_1, c_2 > 0$ such that $c_1 |\lambda|^d < |\lambda^d - \eta| < c_2 |\lambda|^d$ for $\eta \in \{\gamma, \delta\}$ and $d \ge 1$.
    Thus
    \begin{equation}\label{eq : upper estimate for J}
        |\mathcal{J}_d(\gamma,\delta)| \le
        \frac{\pi}{2\sqrt{2}}\big|e^{2ix_1}-e^{2ix_2}\big| <
        C_5\frac{\delta-\gamma}{|\lambda|^d}
    \end{equation}
    for $C_5 = \frac{\pi}{2\sqrt{2}c_1^2}$.
    Similarly, 
    \begin{equation*}
    |\mathcal{J}_d(\gamma,\delta)| \ge \frac{1}{2}\big|e^{ix_1}-e^{ix_2}\big|  =
    \frac{(\delta-\gamma)|\lambda|^d}{2|\lambda^d-\gamma||\lambda^d-\delta|}\big|e^{i2d\theta} - 1\big|\,.
    \end{equation*}
    Thus,
    \begin{equation*}
    |\mathcal{J}_d(\gamma, \delta)| >  C_6 \frac{\big|e^{i2d\theta} - 1\big|(\delta-\gamma)}{|\lambda|^d}
    \end{equation*}
    for a constant $C_6$.
    Applying Thm.~\ref{thm : Matveev} on the latter, we obtain a constant $C_7$ such that $|e^{i2d\theta} - 1| > d^{-C_7}$, and the result follows.
    
    For the last claim, we estimate $|\mathcal{J}_d(0,\sqrt{|\lambda|})|$ with \eqref{eq : upper estimate for J}. 
\end{proof}

\begin{proof}[Proof of Lem.~\ref{lem : smallest n in interval}]
    By the pigeonhole principle, there are distinct $d_1, d_2 \in \nat$ such that $0 \le d_1, d_2 \le \lceil \pi|\mathcal{I}|^{-1} \rceil$ and $d_1 T\theta$ and $d_2T\theta$ have distance at most $|\mathcal{I}|$ modulo $2\pi$. 
    Here, we use that $\theta$ is not a rational multiple of $\pi$ (Lem.~\ref{lem : dominant roots are nice}).
    The last condition implies that
    \begin{equation*}
         \big| (d_1-d_2)T\theta\big|_{2\pi} < |\mathcal{I}|
    \end{equation*}
    Baker's theorem (Thm.~\ref{thm : Matveev}) implies there is a computable number $c_1$ such that
    \begin{align*}
          \big| (d_1-d_2)T\theta\big|_{2\pi} 
          &> \big|e^{iT(d_1-d_2)\theta} - 1\big| \\
          &> |d_1 - d_2|^{-c_1} \\
          &> \big\lceil\pi|\mathcal{I}|^{-1}\big\rceil^{-c_1} \, .
    \end{align*}
    For all $N \in \mathbb{Z}$ and $x \in \mathbb{R} /(2\pi\mathbb{Z})$, there is a $N \le n \le 2\pi \lceil \pi|\mathcal{I}|^{-1} \rceil^{c_1} + N$ such that $x + nT(d_1-d_2) \theta \in \mathcal{I}$. 
    Taking $C_9$ slightly bigger than $c_1$, $ 2\pi T \lceil \pi|\mathcal{I}|^{-1} \rceil^{c_1} < |\mathcal{I}|^{-C_9}$ for small enough $\mathcal{I}$. 
    Hence, for all $N \in \mathbb{Z}$ and $x \in \mathbb{R} /(2\pi\mathbb{Z})$, there is a $n \equiv t \pmod T$ such that $x + n\theta\in\mathcal{I}$ and $N \le n \le |\mathcal{I}|^{-C_9} + N$. 
    For $n_1$, let $x = \phi$ and $N = |\mathcal{I}|^{-C_9}$.

    Let $\mathcal{I}'$ be the middle half of $\mathcal{I}$. 
    As before, there is an $n_2 \equiv t \pmod{T}$ such that
    $|\mathcal{I}'|^{-C_9} \le n_2 \le 2|\mathcal{I}'|^{-C_9}$ and $-n_2\theta \pm \pi/2 \in \mathcal{I}'$. 
    If $\mathcal{J}_{n_2}(0, \sqrt{|\lambda|}) \not\subseteq\mathcal{I}$, then as $\mathcal{J}_{n_2}(0, \sqrt{|\lambda|})$ has a point in $\mathcal{I}$ and another one outside of $\mathcal{I}$.
    So, $|\mathcal{J}_{n_2}(0, \sqrt{|\lambda|})| \ge \frac{1}{4}|\mathcal{I}|$ and 
    Lem.~\ref{lem : upper bound length intervals} implies that
    \begin{align*}
        \frac{1}{4}|\mathcal{I}| \le |\mathcal{J}_{n_2}(0, \sqrt{|\lambda|})| &\le C_8|\lambda|^{-n_2} \\
        &\le C_8|\lambda|^{-|\mathcal{I}'|^{-C_9}} 
    \end{align*}
    After taking logarithms, we have that
    \begin{equation*}
        \log|\mathcal{I}| \le \log(4C_8) - (|\mathcal{I}'|)^{-C_9}\log|\lambda|= \log(4C_8) - (|\mathcal{I}|/2)^{-C_9}\log|\lambda|\,,
    \end{equation*}
    which cannot hold when $\mathcal{I}$ is small enough.
    Hence $\mathcal{J}_{n_2}(1, \sqrt{|\lambda|}) \subseteq \mathcal{I}$.
    The lemma follows.
\end{proof}

\begin{proof}[Proof of Lem.~\ref{lem : better subintervals}]
      For an interval $\mathcal{I}'\subset (-\pi/2,\pi/2)$, let $D(\mathcal{I}')$ denote the smallest natural number $d \ge 1$ such that $d \ne b_2,\dots,b_\ell$ and $\mathcal{J}_d(1, \delta_\ell) \cap \mathcal{I}'$ is non-empty.
      Set 
      \begin{equation*}
          c_1 = |\mathcal{I}| - \sum_{\substack{d = D \\ d \not\in \{b_2, \dots, b_\ell \}}}^\infty|\mathcal{J}_d(1, \delta_\ell)| \, .
      \end{equation*}
      By construction, $c_1 > 0$.
      Let $k \in \mathbb{N}$ such that $\frac{c_1}{k+2} < \varepsilon \le \frac{c_1}{k+1}$ (which always exists for small enough $\varepsilon$).
      We will study the set 
      \begin{equation*}
          X = \mathcal{I}\setminus \bigcup_{\substack{d = D \\ d \not\in \{b_2, \dots, b_\ell \}}}^k \mathcal{J}_d(1, \delta_\ell)\,,
      \end{equation*}
      which is an interval $\mathcal{I}$ from which at most $k$ intervals are removed. 
      Thus, $X$ consists out of at most $k+1$ intervals.
      By assumption, $|X| > c_1$, and so $|X|$ contains an interval $\mathcal{I}'$ of length $\varepsilon \le \frac{c_1}{k+1}$ such that $D(\mathcal{I}') \ge k+1$.
      Then, for large enough $k$, we have that
      \begin{equation*}
          D(I') \ge k+1 > \sqrt{\frac{k+2}{c_1}} > \varepsilon^{-1/2}\,.
      \end{equation*}
      When $\varepsilon$ is small enough, the result follows when setting $D' = D(\mathcal{I}')$.
  \end{proof}

\begin{proof}[Proof of Lem.~\ref{lem : density weakly normal}]
    We apply induction on $k$ and require that the conditions hold for $1,\dots,k-1$ for an interval $\mathcal{I}_{k-1}$, and construct an interval $\mathcal{I}_k$ that satisfies the theorem when items~(a), (b), and~(c) are restricted to $1,\dots,k$. 
    Then we take $\mathcal{I} = \mathcal{I}_\ell$.

    For the base case, let $\mathcal{I}_1 = \{x \in \mathbb{R}/(2\pi\mathbb{Z}) \colon \cos(x) > |\lambda|^{-1}\}$, $D = 1$, and $1 < \delta_\ell < \sqrt{|\lambda|}$ small enough such that that the last inequality of  
    \begin{equation*}
        \sum_{d = D}^\infty |\mathcal{J}_d(1, \delta_\ell)| \le
        \sum_{d=1}^\infty C_5 \frac{\delta_\ell-1}{|\lambda|^d} \le
        C_5 \frac{\delta_\ell-1}{|\lambda|-1}  < |\mathcal{I}_1| 
    \end{equation*}
    holds, where the first inequality follows from Lem.~\ref{lem : upper bound length intervals} and the second by a geometric series.
    Hence item~(d) holds, and items~(a), (b), and~(c) follow by construction.
    The base case follows.

    For the other cases, choose $\delta_2,\dots,\delta_{\ell-1}$ such that $1 < \delta_2 < \cdots < \delta_{\ell-1} < \delta_\ell$. Furthermore, for simplicity, set $\delta_1 = 1$.
    
    Now let $k \ge 1$ and $\varepsilon > 0$.
    For $\varepsilon$ small enough, applying Lem.~\ref{lem : better subintervals} on $\mathcal{I}_{k-1}$ gives intervals $\mathcal{I}_\varepsilon \subset \mathcal{I}_{k-1}$ of length $\varepsilon$ where the smallest $d$ such that $\mathcal{J}_d(1,\delta_\ell)\cap\mathcal{I}_\varepsilon \ne \emptyset$ is at least $\varepsilon^{-1/2}$.
    
    Lemma~\ref{lem : smallest n in interval} implies that when $\varepsilon$ is again small enough, there is a $\varepsilon^{-C_9} < b_k < 2\varepsilon^{-C_9}$ such that $b_k \equiv t_k - t_1 \pmod T$ and $\mathcal{J}_{b_k}(\delta_{k-1},\delta_k) \subset \mathcal{I}_\varepsilon$. 

    If  $d \ne 0, b_2,\dots, b_k$ and $\mathcal{J}_{b_k}(\delta_{k-1}, \delta_k) \cap \mathcal{J}_d(1, \delta_\ell) \ne \emptyset$, then $d > \varepsilon^{-1/2}$.
    By Lem.~\ref{lem : upper bound length intervals}, 
    \begin{equation*}
        |(b_k - d)\theta \pm_1 \pi/2 \pm_2 \pi/2|_{2\pi} \le
        \frac{C_8}{|\lambda|^{b_k}} + \frac{C_8}{|\lambda|^d}\le
        \frac{2C_8}{|\lambda|^{\min\{b_k, d\}}} \, .
    \end{equation*}
    We put $j \pi = \pm_1 \pi/2 \pm_2 \pi /2$ for $j \in
    \mathbb{Z}$.\footnote{We have adorned the $\pm$ operator with
      subscripts ($1$ and $2$) to indicate how the particular choice of
      signs should be preserved.}
    Then by Baker's theorem (Thm.~\ref{thm : Matveev}), there is a constant $c_1 > 0$,
    \begin{align*}
        |b_k-d|^{-c_1} 
        &< |e^{i( (b_k-d)\theta+j \pi)} - 1| \\
        &\le |(b_k - d)\theta+j \pi|_{2\pi} \\
        &\le \frac{2C_8}{|\lambda|^{\min\{b_k, d\}}} \, .
    \end{align*}
    Taking logarithms, we have:
    \begin{equation}\label{eq : random eq in weak normality lemma}
         \min\{b_k, d\} \log|\lambda| < \log(2C_8) + c_1\log|b_k-d| \,
         .
    \end{equation}
    Hence, if $d < b_k$, $|b_k-d| < b_k$ (as $d > 0$) and
    \begin{equation*}
        d \log|\lambda| < \log(2C_8) + c_1 \log(b_k) \, .
    \end{equation*}
    Using that $d > \varepsilon^{-1/2}$ and $b_k \le 2(\varepsilon/2)^{-C_9}$, we obtain
    \begin{equation*}
         \varepsilon^{-1/2} \log|\lambda| < \log(2C_8) + c_1\log(2)
         - c_1 C_9 \log(\varepsilon/2) \, .
    \end{equation*}
    This is impossible for sufficiently small $\varepsilon$.
    We can therefore assume that $d > b_k$.
    We take $\mathcal{I}_k := \mathcal{J}_{b_k}(\delta_{k-1},\delta_k)$ such that items~(a) and~(b) are satisfied. 
    Now assume $d > b_k$ and let $D = d$ (and so item~(c) automatically holds).
    For a contradiction, assume item~(d) is violated.
    Lemma~\ref{lem : upper bound length intervals} and the geometric series give that for some $c_2 >0$,
    \begin{equation*}
         \frac{C_6(\delta_k - \delta_{k-1})}{|\lambda|^{b_k}b_k^{C_7}}
         \le |\mathcal{I}_k| \le \sum_{d' = d}^\infty
         |\mathcal{J}_{d'}(1,\delta_\ell)| < c_2|\lambda|^{-d} \, .
    \end{equation*}
    Hence, setting $c_3 = \log(\frac{c_2}{C_6(\delta_k-\delta_{k-1})})$ and taking logarithms,
    \begin{equation}\label{eq : second to last induction}
         (d - b_k)\log|\lambda| \le c_3 + C_7\log(b_k) \, .
    \end{equation}
    Inserting the latter in \eqref{eq : random eq in weak normality lemma} gives
    \begin{equation*}
         b_k\log|\lambda| \le \log(2C_8) + c_1\log\left(\frac{c_3 +
             C_7\log b_k}{\log|\lambda|}\right) \, ,
    \end{equation*}
    which upper bounds $b_k$ (independently of $\varepsilon$).
    As taking $\varepsilon$ small gives arbitrarily large $b_k$, the result follows for $k$. 
    Induction completes the proof.
\end{proof}

\end{document}